\newcommand{\imextended}{\mathsf{IM}\text{-}\mathsf{extended}}
\newcommand{\er}{\mathsf{Er}}
\newcommand{\info}{\textsc{info\_task} }
\newcommand{\imax}{\textup{Imax}}
\newcommand{\good}{\textsc{good}}
\begin{document}
\title{Chain rules for one-shot entropic quantities via operational methods}
\author{Sayantan Chakraborty$^1$ \and Upendra Kapshikar$^1$}
\date{
        ${}^{1}$ Centre for Quantum Technologies, National University of Singapore
            }
            \maketitle
\begin{abstract}
    We introduce a new operational technique for deriving chain rules for general information theoretic quantities.
    This technique is very different from the popular (and in some cases fairly involved) methods like SDP formulation and operator algebra or norm interpolation.
    Instead, our framework considers  a simple information transmission task and obtains lower and upper bounds for it.
    The lower bounds are obtained by leveraging a successive cancellation encoding and decoding technique.
    Pitting the upper and  lower bounds against each other gives us the desired chain rule.
    As a demonstration of this technique, we derive chain rules for the \emph{smooth max mutual information} and the \emph{smooth-Hypothesis testing mutual information}.
\end{abstract}
\section{Introduction}

In 1948, Shannon \cite{Shannon} pioneered the field of information theory by introducing two central problems;  noiseless source coding and noisy channel coding.
To that end, Shannon introduced the notions of Shannon entropy and mutual information, which characterise these two information processing tasks, respectively.
Since then, these two quantities have found numerous applications in many other problems, both within information theory, as well as in cryptography and computer science in general.
For a random variable $X \sim P_X$ , its entropy $H(X)$ is defined as
\[ H(X) = \E\limits_{ x \leftarrow P_X} \left[ \frac{1}{\log(P_{X}(x))} \right].\]
For a joint probability distribution $P_{XY}$, one can analogously define its entropy $H(XY)$;
\[ H(XY) = \E\limits_{ x,y \leftarrow P_{XY}} \left[ \frac{1}{\log(P_{XY}(x,y))} \right].\]
A \emph{chain rule} for the entropy establishes a relationship between the joint entropy and the entropies of the individual variables:
\[
H(XY)= H(X)+H(Y~|~X),
\]
where 
\[
H(Y~|~X)\coloneqq \E\limits_{x \leftarrow  P_X}\left[H(Y~|~X=x)\right]
\]
is the \emph{conditional entropy} of the random variable $Y$ given $X$. 

Such decompositions of joint variable functionalities into individual functionalities are known to hold not only for the entropy function but also for other useful quantities.
For example, consider a tripartite probability distribution $P_{XYZ}$.
Then a chain rule for the mutual information between the systems $XY$ and $Z$ can be written as:
\[
I(XY : Z) = I(X :Z )+ I(Y : Z~|~X).
\]
Chain rules in general are very useful in the design and analysis of information processing protocols, particularly those where multiple parties are present \cite{SlepianWolfMac,Ahlswede:mac,Ahlswede1974,Liao:mac} .
Chain rules for mutual information have been used in contexts other than information-theoretic tasks, for example, in proving direct sum and direct product theorems in communication complexity, \cite{RAZBOROV1992385, JRS_setdisjoint, JRS'03, JRS'05} to name a few (see \cite{Jain} for a more comprehensive list).

The information-theoretic quantities mentioned above can also be defined for more general objects such as quantum states.
For a quantum state $\rho^A$, the \emph{von Neumann entropy}, is defined as
\[
H(A)\coloneqq - \Tr\left[\rho^A \log \rho^A\right].
\]
Analogously, for a bipartite state $\rho^{AB}$, the quantum mutual information is defined as
\[
I(A:B)\coloneqq H(A)+H(B)-H(AB).
\]
However, the conditional entropy of the system $A$ given $B$ cannot be defined in a manner similar to that in the classical case. Thus, in this case, one uses the chain rule itself to define the quantum conditional entropy:
\[
H(A~|~B)\coloneqq H(AB)-H(B).
\]
The chain rule for mutual information follows from the chain rule of $H$ and the definition of $I$.
Furthermore, Jain \cite{Jain} used these chain rules along with the existence of Nash Equilibrium for some suitably defined games to derive a chain rule for the \emph{capacity} of classical-quantum and quantum channels.

The Shannon and von Neumann entropic quantities although useful in characterizing many important information-processing tasks, are somewhat restricted. 
They are most useful in settings where many \emph{independent} copies of the underlying resource are available. 
For example, in quantum source compression one exhibits an algorithm to compress the quantum state $\rho^{\otimes n}$ using only $nH(A)$ many qubits.
For this, it is usually assumed that $n$ copies of a quantum state $\rho$ are available, in order to show that there exists a compression algorithm.
Comparatively, a  more natural framework is that of \emph{one-shot} information theory which  considers the setting where only \emph{one} copy of the underlying resource is available.
There exists a rich body of work that explores information theoretic questions in this setting with the aid of the smooth min and max entropy formalism. This formalism was introduced and developed by a series of papers \cite{Renner_thesis, Renner_Wolf,Konig_Renner_1, Datta_entanglement_monotone,  Konig_Renner_Schaffner,Duality_smooth_minmax, QAEP, Renes_Renner} in the context of both information-theoretic and cryptographic applications.
The (conditional) smoothed min and max entropies ($H_{\max}^{\eps}(A~|~B)$ and $H_{\min}^{\eps}(A~|~B)$, respectively) are \emph{robust} versions of the corresponding unsmoothed quantities.
Here the parameter $\eps$, referred to as the smoothing parameter, is used to specify the accuracy of certain protocols.
For example, the smooth min-entropy $H_{\min}^{\eps}(A~|~B)$ characterizes the number of (almost) random bits one can extract from the system $A$ when an adversary is in possession of the system $B$.
The parameter $\eps$ here denotes the requirement that the random bits produced in such an extraction should have a bias of at most $\eps$ (see \cite{Dupuis_Wullscheger_Renner,Dupuis_thesis}).
Similarly, the quantity $H_{\max}^{\eps}(A~|~B)$ characterizes the number of entangled qubits required for \emph{state merging} \cite{State_merging, Berta_thesis}.
Thus, given their importance, a natural question is whether these quantities obey chain rules similar to their von Neumann counterparts. This question was investigated in the work of Vitanov et al. \cite{Dupuis_smooth_min_max}, where the authors provided several chain rules for the smooth min max entropies.
It is worth pointing out that the chain rules that one gets for such quantities are only one-sided chain rules, in that they are inequality expressions rather than equality.
\begin{example}
In \cite{Dupuis_smooth_min_max}, Vitanov et al. showed the following chain rule  for the smooth min-entropy (ignoring additive log terms): Given a quantum state $\rho^{ABC}$ and $\eps, \eps^{\prime}, \eps^{\prime\prime}>0$ such that $\eps> \eps^{\prime}+2\eps^{\prime\prime}$, it holds that:
\[
H_{\min}^{\eps}(AB \vert C) \geq H_{\min}^{\eps^{\prime}}(A \vert C)+H_{\min}^{\eps^{\prime\prime}}(A \vert BC).
\]
\end{example}
Dupuis further showed similar chain rules  for the sandwiched R\'enyi $\alpha$-entropies in \cite{Dupuis_alpha_entropy}.

Although the smoothed min and max entropy formalism has proven to be very useful in the description of several quantum information processing tasks, it does not tell the whole story.
The works of Anshu et al. \cite{Convex_Split, Anshu_optimality, quantum_assistated_classical_anurag} and  Wang and Renner \cite{Wang_Renner} highlight the importance of smooth max divergence, the smooth hypothesis testing divergence and their derivative quantities  (see Section \ref{sec:def} for the relevant definitions).
Wang and Renner characterised the one-shot capacity of a classical-quantum channel $\mathcal{N}^{X\to B}$ in terms of the smooth hypothesis testing mutual information:
\[
\max \limits_{P_X} I_H^{\eps}(X:B).
\]
A similar characterisation for the entanglement-assisted classical capacity of a channel $\mathcal{N}^{A\to B}$  was given in the work of Anshu et al. \cite{quantum_assistated_classical_anurag}, who showed the assisted classical capacity of any quantum channel is given by
\[
\max\limits_{\ket{\varphi}^{RA}}I_H^{\eps}(R:B)_{\I^R\otimes \mathcal{N}^{A\to B}\left(\varphi^{RA}\right)}.
\]
Another important quantity \emph{smooth max mutual information} $I_{\max}^{\eps}(A:B)$ gives an achievable quantum communication cost for the state redistribution problem \cite{Convex_Split} and  state splitting \cite{Convex_Split,Berta_Reverse_Shannon}. Unlike their smooth max min entropic counterparts, to the best of our knowledge, the existence of chain rules for these important information quantities has not received much attention.
Our goal in this paper is to introduce techniques that will enable us to present chain rules for these quantities.
\subsection{Our Contribution}
The main results that we present in this work are as follows:
\begin{theorem}{\bf [Informal]}\label{thm:mainthmintro} {For any $\varepsilon>0$} and
any quantum state $\rho^{ABC}$,
it holds that
\[
I_H^{\eps}(AB:C) \geq I_H^{\eps^{\prime}}(A:C)+I_H^{\eps^{\prime\prime}}(B:AC) -I_{\max}(A:B) +O\left(\log \eps\right)
\]
where, $\eps^{\prime}$ and $\eps^{\prime\prime}$ are $O(\eps^2)$.
\end{theorem}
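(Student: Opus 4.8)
The plan is to design a single one-shot classical message transmission task built from $\rho^{ABC}$ --- call it $\info$ --- bound its optimal size from above by (essentially) $I_H^{\eps}(AB:C)$ and from below by $I_H^{\eps'}(A:C)+I_H^{\eps''}(B:AC)-I_{\max}(A:B)$, and read off the chain rule by comparing the two bounds. The task is the natural one suggested by the exact Shannon identity $I(AB:C)=I(A:C)+I(B:AC)-I(A:B)$: a sender wants to communicate a message to a receiver who holds a ``$C$-type'' register, and the code may use auxiliary registers prepared i.i.d.\ according to the marginals $\rho^{A}$ and $\rho^{B}$ of $\rho^{ABC}$ (the position-based / shared-randomness flavour). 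The theorem is then just the one-shot, smoothed --- and hence one-sided --- version of that identity.

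\emph{Upper bound.} Any $\info$-code with $M$ messages and small error yields, via the measurement ``does the received $C$-register look jointly correlated with the announced codeword, or independent of it'', a good test distinguishing $\rho^{ABC}$ from $\rho^{AB}\otimes\rho^{C}$; the standard one-shot converse of Wang--Renner type \cite{Wang_Renner} (at bottom, data processing for $D_H^{\eps}$) then gives $\log M\le I_H^{\eps}(AB:C)+O(\log\eps)$. This direction is routine once the task is set up.

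\emph{Lower bound, by successive cancellation.} Split the message as $(m_1,m_2)\in[M_1]\times[M_2]$ and decode in two rounds. In round one the receiver runs position-based decoding of $m_1$ against its $C$-register, which succeeds as long as $\log M_1\lesssim I_H^{\eps'}(A:C)$ and leaves it holding an ``$A$-type'' register correlated with $C$ roughly as in $\rho^{AC}$. In round two the receiver should decode $m_2$ from the pair (that $A$-register, $C$) at rate $\approx I_H^{\eps''}(B:AC)$ --- but for the sequential test of round two to behave like a test against $\rho^{B(AC)}$, the round-two $B$-register must be correlated with the round-one $A$-register as in $\rho^{AB}$, whereas position-based coding produces \emph{independent} $A$- and $B$-registers. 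Reinstating this correlation is exactly a \emph{convex split} step \cite{Convex_Split}: mixing over $\approx 2^{I_{\max}(A:B)}$ dummy $B$-registers makes the position mixture indistinguishable from the correlated state, at the cost of subtracting $I_{\max}(A:B)$ from the usable rate. A union bound over the two rounds and the convex-split error then produces a code with $\log M_1+\log M_2\gtrsim I_H^{\eps'}(A:C)+I_H^{\eps''}(B:AC)-I_{\max}(A:B)-O(\log\eps)$; the per-round smoothing parameters are forced down to $\eps'=\eps''=O(\eps^2)$ because the sequential decoding (and the gentle-measurement / Hayashi--Nagaoka steps it relies on) introduces a square-root loss, so to keep the overall error below $\eps$ one must take the per-round errors of order $\eps^2$. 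Combining this with the upper bound yields the claimed inequality.

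The step I expect to be the main obstacle is the interface between round one and the convex split: after round one the receiver's $A$-register is entangled with its partially measured ancillas and with $C$ only \emph{approximately} as $\rho^{AC}$, and one must show that feeding this through the convex split and then through the round-two sequential test still delivers the rate $I_H^{\eps''}(B:AC)$ with controlled error. Making the accumulated error small enough that the final loss is only $O(\log\eps)$ (and the smoothing degradation only quadratic) is the delicate part; the converse, the individual position-based decoding estimates, and the convex split lemma are all available off the shelf.
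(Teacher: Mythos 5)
Your overall architecture matches the paper's: build a one-shot classical transmission task out of $\rho^{ABC}$, upper-bound the number of messages by (roughly) $I_H^{\eps}(AB:C)$ via a converse, lower-bound it by a two-round successive-cancellation position-based code at rates $I_H^{\eps'}(A:C)$ and $I_H^{\eps''}(B:AC)$, and compare. Two points, however, need attention. First, the converse you call ``routine'' is not quite: the Wang--Renner/AJW-type converse for a channel carries a maximization over input states, whereas the chain rule needs the bound to evaluate at the \emph{fixed} state $\rho^{ABC}$. The paper handles this by partitioning the set of all $\eps$-good protocols according to the averaged state the encoder places on the channel input and proving the converse partition-by-partition (its Lemma~\ref{lem:converse} and Corollary~\ref{corol:usableconverse}); you then must check that your achievability protocol lands in the partition whose converse evaluates to $I_H^{\eps}(AB:C)_{\rho}$. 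This is a genuine piece of work, not an off-the-shelf step, and your convex-split modification changes the averaged input state, so the check is not automatic.

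Second, and more seriously, the mechanism by which you reinstate the $A$--$B$ correlation does not work as stated. Convex split gives an approximation in the wrong direction for your purpose: it says that the \emph{uniform mixture over which of $K\approx 2^{I_{\max}(A:B)}$ positions carries the correlation} is close to the all-product state. It does not let an encoder who physically holds only independent copies of $\varphi_1^{AA_f}\otimes\varphi_2^{BB_f}$ manufacture a $B$-register that is actually correlated as $\rho^{AB}$ with the round-one $A$-register. To do that, the encoder must single out (by measurement) which of the $\sim 2^{I_{\max}(A:B)}$ candidate registers carries the correlation and tell the receiver the resulting index --- and that index cannot simply be folded into the round-two message, because the receiver needs it \emph{before} it can even form the round-two decoding POVM (chicken-and-egg). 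So ``subtract $I_{\max}(A:B)$ from the usable rate'' is an accounting claim without an operational mechanism behind it. The paper resolves exactly this issue differently: it uses quantum rejection sampling over blocks of size $\frac{1}{\delta}2^{I_{\max}(A:B)}$ to post-select the correlated purification $\phi^{AB}$ of $\rho^{A_fB_f}$, ships the successful index $b^*$ over an \emph{auxiliary noiseless classical channel} of capacity $I_{\max}(A:B)+\log\frac{1}{\delta}$, and then charges that capacity to the \emph{converse} side (the upper bound becomes $I_H(A_fB_f:C)+I_{\max}(A_f:B_f)+O(1)$), which after rearrangement produces the $-I_{\max}(A:B)$ in the chain rule. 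It also has to extend the converse to encoders that may abort (since rejection sampling can fail) and to show the post-selected index is uniform so the auxiliary channel's input state is known. These are the missing ingredients in your sketch; with them in place, your plan essentially becomes the paper's proof.
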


\begin{theorem}{\bf [Informal]}\label{thm:mainthmintro2}
{For any $\varepsilon>0$} and any quantum state $\rho^{ABC}$,
it holds that
\[
I_{\max}^{\eps}(AB:C) \leq I_{\max}^{\eps^{\prime}}(A:C)+I_{\max}^{\eps^{\prime\prime}}(B:AC) -I_{H}^{\eps^{\prime\prime\prime}}(A:B) -O\left(\log \eps\right)
\]
where, $\eps^{\prime}$, $\eps^{\prime\prime}$ and  $\eps^{\prime\prime\prime}$ are $O(\eps^2)$.
\end{theorem}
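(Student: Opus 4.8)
\subsection*{Proof proposal for Theorem \ref{thm:mainthmintro2}}

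The plan is to route the whole argument through an auxiliary task \info{}: a one-shot transmission task (of state-splitting/redistribution type) built from $\rho^{ABC}$, in which a sender holding $C$ (together with a purifying system) must, using free shared entanglement and $Q$ qubits of forward communication, produce a state $\delta$-close to $\rho^{ABC}$ with $AB$ delivered to the receiver and $C$ retained, and whose cost is governed by the max mutual information across the cut $AB{:}C$. I will pit two bounds on the optimal cost $Q^{\ast}(\delta)$ against each other. The \emph{converse}: $Q^{\ast}(\delta) \ge I_{\max}^{O(\delta)}(AB:C) - O(\log(1/\delta))$. The \emph{achievability}, obtained by a successive-cancellation protocol: $Q^{\ast}(\delta) \le I_{\max}^{O(\delta^2)}(A:C) + I_{\max}^{O(\delta^2)}(B:AC) - I_{H}^{O(\delta^2)}(A:B) + O(\log(1/\delta))$. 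Chaining the two and setting $\delta \asymp \eps$ yields $I_{\max}^{\eps}(AB:C) \le I_{\max}^{\eps'}(A:C) + I_{\max}^{\eps''}(B:AC) - I_{H}^{\eps'''}(A:B) - O(\log\eps)$ with $\eps',\eps'',\eps'''$ of order $\eps^{2}$, the squaring entering only through the protocol's error analysis while the converse keeps the smoothing on the left-hand side at scale $\eps$.

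For the converse I would use the standard ``the communication must carry the correlation'' argument: since all entanglement is pre-shared and the only system crossing the $AB$-versus-$C$ cut is the $Q$-qubit register, data processing for $D_{\max}$ relative to product states, applied to the near-target output and then smoothed to absorb the $\delta$-error, forces $Q \gtrsim I_{\max}^{O(\delta)}(AB:C) - O(\log(1/\delta))$. If sharper bookkeeping is needed I would instead define \info{} so that this lower bound is essentially its success criterion and recover a genuine transmission task afterwards; the only thing that matters is that the converse be $I_{\max}$-flavoured rather than the weaker $I_{H}$-flavoured bound, which would be useless here since $I_{H}\le I_{\max}$.

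The substance of the proof is the successive-cancellation achievability. I would build a two-stage, superposition-coding protocol. Stage $1$ transfers the ``$A$-part'': a convex-split argument \cite{Convex_Split} on the cut $A{:}C$ installs, at the receiver's side, a reconstruction of $A$ correlated with $C$ at cost $\approx I_{\max}^{O(\delta^2)}(A:C)$. Stage $2$ transfers the ``$B$-part given $A$'': a second convex split, now on $B{:}AC$ with the freshly reconstructed $A$ as side information, at cost $\approx I_{\max}^{O(\delta^2)}(B:AC)$. Naively this totals $I_{\max}(A:C)+I_{\max}(B:AC)$, the target \emph{without} the $-I_{H}(A:B)$ saving; since the von Neumann identity reads $I(AB:C) = I(A:C) + I(B:AC) - I(A:B)$, that term must be reclaimed. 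The reclamation is exactly position-based decoding \cite{quantum_assistated_classical_anurag}: the Stage-$2$ convex-split ensemble carries $\approx 2^{I_{\max}(B:AC)}$ branches, but the receiver holds $A$, and $A$ distinguishes the correct $B$-branch from decoys over up to $\approx 2^{I_{H}^{O(\delta^2)}(A:B)}$ alternatives; hence the sender may pre-bin the Stage-$2$ messages into bins of that size and transmit only the bin index, shaving $I_{H}^{O(\delta^2)}(A:B)$ qubits off the cost. Closeness of the final joint state to $\rho^{ABC}$ then follows by composing the two convex-split errors (each $O(\delta)$ in purified distance for smoothing $O(\delta^2)$), a Hayashi--Nagaoka bound on the position-decoding error, and a gentle-measurement term for leaving $A$ undisturbed while decoding; it is this composition that forces smoothing $O(\delta^2)$ for accuracy $O(\delta)$.

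The main obstacle I anticipate is precisely the Stage-$2$/binning interface: convex split gives guarantees in purified (trace) distance, whereas the $I_{H}$ saving is an operator-inequality/hypothesis-testing statement, and one must run a convex-split derandomization together with a position-based measurement on an ensemble that is only approximately the intended one, since the $A$ held by the receiver is a Stage-$1$ reconstruction rather than the true $A$. Making the saving come out cleanly as $-I_{H}^{O(\delta^2)}(A:B)$ -- not with $A$ and $B$ interchanged, and not with an additive penalty that swallows it -- will require choosing the auxiliary fixed states in the $I_{\max}$ and $I_{H}$ optimisations coherently and bounding the resulting cross terms, and then tracking how the three error contributions multiply through to order $\eps^{2}$. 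By comparison, the converse and the accounting of the $O(\log\eps)$ additive terms should be routine.
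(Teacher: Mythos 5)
Your proposal follows essentially the same route as the paper: it pits the state-splitting converse $Q \geq \frac{1}{2} I_{\max}^{O(\eps)}(AB:C) - O(\log \frac{1}{\eps})$ against a two-stage successive-cancellation achievability in which the first stage delivers $A$ at cost $\approx \frac{1}{2}I_{\max}^{O(\eps^2)}(A:C)$ and the second stage exploits the receiver's freshly acquired $A$ as side information to shave off $I_H^{O(\eps^2)}(A:B)$. The only real difference is that the paper obtains the second stage by invoking, as a black box, the modified state-redistribution protocol of Anshu et al.\ (cost $\frac{1}{2}\left(I_{\max}^{\eps}(CA:B)-I_H^{\eps}(A:B)\right)$), whereas you propose to rebuild that subroutine from convex split, binning and position-based decoding --- which is precisely what that protocol does internally, including the approximate-side-information and gentle-measurement issues you flag.
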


\begin{remark}
We should mention that it is not at all obvious how to prove \cref{thm:mainthmintro} using standard techniques in one-shot information theory.
One can suspect that due to a close connection between the smooth hypothesis testing divergence and the information spectrum divergence, it might be possible to arrive at a chain rule like Theorem~\ref{thm:mainthmintro}.
Indeed, exploiting the said relation one can prove the following statement (ignoring additive log factors):
\[
I_{H}^{\epsilon} \left(AB : C\right)_{\rho} \geq I_{H}^{\epsilon} \left(A : C\right)_{\rho}  + D_{H}^{\epsilon} \left( \rho^{ABC} ~\Vert~ \rho^{B} \otimes \Pi_{s}^{AC}~ \rho^{AC}~\Pi_{s}^{AC} \right) \]
where $\Pi^{AC}_s$ is the information spectrum projector. However, it is not clear how to remove this projector from the expression above to get the desired chain rules, since in general, it does not commute with $\rho^{AC}$.
\end{remark}

\begin{remark}
In \cite{Dupuis_HH}, Dupuis et al. showed chain rules for the smooth hypothesis testing conditional entropy $H_H^{\eps}$ using a chain rule for the smooth hypothesis testing divergence between an arbitrary state $\rho$ and a state $\sigma$ which is invariant under some group action. However, it is not clear how this technique can be used to prove the chain rule claimed in \cref{thm:mainthmintro}. 
\end{remark}

\begin{remark}
Chakraborty et al. proved a weaker version of \cref{thm:mainthmintro2} in \cite{CNS}. In particular, the authors in that paper proved the following bound:
\[
I_{\max}^{\eps}(AB:C) \leq I_{\max}^{\eps^{\prime}}(A:C)+I_{\max}^{\eps^{\prime\prime}}(B:AC)  -O\left(\log \eps\right).
\]
 We present a sharper version of this inequality in this paper.
\end{remark}
\subsection*{Organisation of the paper}
The paper is organised as follows: 
In Section~\ref{sec:def} we present relevant definitions and facts that will be useful throughout the paper.
In \cref{sec:overview1} we present an overview of the main operational method that we use to prove Theorems \ref{thm:mainthmintro} and \ref{thm:mainthmintro2}. In this section, we also show how an application of these ideas leads directly to the proof of \cref{thm:mainthmintro2}. In \cref{sec:overview2} we explain why the ideas presented in \cref{sec:overview1} cannot be directly applied to prove \cref{thm:mainthmintro}. In this section, we also present a weaker version of \cref{thm:mainthmintro}, called \cref{prop:IMstatechainrule}, which is a result akin to \cref{thm:mainthmintro} but valid only for a specific subclass of quantum states, which we call $\mathsf{IM}$-states (see \cref{sec:overviewImstates}). We introduce this proposition for the sake of demonstrating the main ideas that eventually go into the proof of \cref{thm:mainthmintro}. In \cref{sec:overviewImstates} we present an overview of our proof for \cref{prop:IMstatechainrule}, followed by the formal definitions and techniques in Sections \ref{sec:IMstatesandRedcution} and \ref{sec:ChainRules}. Finally, in \cref{sec:overviewfullthm}, we present the full proof of \cref{thm:mainthmintro}. 
\section{Preliminaries}
 \subsection{Definitions}\label{sec:def}
\begin{definition}{\bf (Smooth Hypothesis Testing Relative Entropy)}
The smooth min-relative entropy $D_{H}^{\epsilon}$ between two states $\rho$ and $\sigma$ is defined via the equation below:
\[
2^{-D_H^{\eps}(\rho~||~\sigma)}\coloneqq \min\limits_{\substack{0\leq \Pi\leq \I\\ \Tr\left[\Pi~\rho\right]\geq 1-\eps}} \Tr \left(\Pi~ \sigma\right)
\]
\end{definition}
\noindent Using the usual correspondence between entropy and mutual information, one can define smooth Hypothesis testing mutual information in a state $\rho$;
\[ I_{H}^{\epsilon}(A:B)_{\rho}= D^{\epsilon}_{H}\left(\rho^{AB} \Vert \rho^A \otimes \rho^B\right).\]
Given the context of our work, we will be mostly interested in smooth Hypothesis testing mutual information of a particular state associated with a channel.

\begin{definition}[{\bf An optimal tester for $(I, \epsilon, \rho, \mathcal{N})$}] \label{def:optimal_projector}
Let $\mathcal{N}^{A \rightarrow B}$ be a channel and $\rho^{AC}$ be a pure state.
Then,
\[
\begin{aligned}
I^{\epsilon}_{H}(B:C)_{\mathcal{N}(\rho^{AC})}  &= D^{\epsilon}_{H}\left( \mathcal{N}\left(\rho^{AC}\right) \Vert \mathcal{N}\left( \rho^A \right) \otimes \rho^C\right) \\ &= - \log~\min\limits_{\substack{0\leq \Pi\leq \I^{BC}\\ \Tr\left[\Pi \left(\mathcal{N}(\rho^{AC})\right)\right]\geq 1-\eps}} \Tr \left[\Pi~ \left(\mathcal{N}\left( \rho^A \right) \otimes \rho^C \right)\right].
\end{aligned}
\]
An operator $\Pi$ that achieves the optimum in the above equation will be referred to as an optimal tester for $(I, \epsilon, \rho, \mathcal{N})$.
\end{definition}
\noindent Thus, it follows from the definition that, if $\Pi$ is an optimal tester for $(I, \epsilon, \rho, \mathcal{N})$ then, 
\begin{align}
    2^{- I^{\epsilon}_{H}(B:C)_{\mathcal{N}(\rho^{AC})}} &=  \Tr \left[\Pi\left(\mathcal{N}\left( \rho^A \right) \otimes \rho^C \right)\right] \\
    \Tr\left[\Pi\left(\mathcal{N}(\rho^{AC})\right)\right]&\geq 1-\eps.
\end{align}
\begin{definition}[{\bf Max Relative Entropy}]
Given quantum states $\rho$ and $\sigma$ such that $\textup{supp}(\rho)\subseteq \textup{supp}(\sigma)$, the max relative entropy $D_{\max}(\rho~||~\sigma)$ is defined as
\[
D_{\max}\coloneqq \inf\brak{\lambda~|~\rho\leq 2^{\lambda}\sigma}.
\]
\end{definition}
\noindent Again, using the usual correspondence between entropy and mutual information, one can define the max mutual information with respect to a state $\rho^{AB}$ as:
\[
I_{\max}(A:B)_{\rho^{AB}}\coloneqq D_{\max}(\rho^{AB}~||~\rho^A\otimes \rho^B)
\]
\begin{definition}[{\bf Smooth Max Relative Entropy}]
Given quantum states $\rho$ and $\sigma$ such that $\textup{supp}(\rho)\subseteq \textup{supp}(\sigma)$, let $\mathcal{B}^{\eps}(\rho)$ be the $\eps$ ball around the state $\rho$;
\[
\mathcal{B}^{\eps}(\rho)\coloneqq \brak{\tau\geq 0~|~\norm{\tau-\rho}\leq \eps, \Tr\left[\tau\right]\leq 1}.
\]
Then the smooth max relative entropy $D_{\max}^{\eps}(\rho~||~\sigma)$ is defined as
\[
D_{\max}^{\eps}(\rho~||~\sigma)\coloneqq \inf\limits_{\rho^{\prime}\in \mathcal{B}^{\eps}(\rho)}D_{\max}(\rho^{\prime}~||~\sigma).
\]
\end{definition}
\noindent Similarly, the smooth max mutual information with respect to a state $\rho^{AB}$ is defined as:
\[
I_{\max}^{\eps}(A:B)_{\rho^{AB}}\coloneqq \inf\limits_{\rho^{\prime AB}\in \mathcal{B}^{\eps}(\rho^{AB})}I_{\max}(A:B)_{\rho^{{\prime}^ {AB}}}.
\]
\subsection{Facts}
\begin{fact}[{\bf Gentle Measurement Lemma}]
\label{fact:gentle_measurement}
    Let $\rho$ be a state and $\lbrace\Lambda_i\rbrace_i$ be a POVM such that there exists an $i_0$ with \[\Tr\left( \Pi_{i_0} \rho\right) \geq 1- \epsilon.\]
    Let \[\rho^\prime = \sum\limits_{i}{\sqrt{\Lambda_i} \rho \sqrt{\Lambda_i}} \otimes \ketbra{i}\] be the post measurement state.
    Then, \[\Vert \rho \otimes \ketbra{i_0}-\rho^\prime \Vert_1 \leq 3 \sqrt{\epsilon}.\]
\end{fact}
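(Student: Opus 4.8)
The plan is to reduce the statement to the familiar single-operator gentle measurement estimate and then reassemble the bound over the POVM outcomes. (Here I read $\Pi_{i_0}$ in the hypothesis as $\Lambda_{i_0}$.)

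\textbf{Step 1: the one-operator case.} First I would prove that if $0\le\Lambda\le\I$ and $\Tr(\Lambda\rho)\ge 1-\epsilon$, then $\Vert\rho-\sqrt\Lambda\,\rho\,\sqrt\Lambda\Vert_1\le 2\sqrt\epsilon$. Writing
\[
\rho-\sqrt\Lambda\,\rho\,\sqrt\Lambda=(\I-\sqrt\Lambda)\rho+\sqrt\Lambda\,\rho\,(\I-\sqrt\Lambda),
\]
the triangle inequality together with $\Vert\sqrt\Lambda\Vert_\infty\le 1$, $\Vert XY\Vert_1\le\Vert X\Vert_\infty\Vert Y\Vert_1$ and $\Vert Z\Vert_1=\Vert Z^\dagger\Vert_1$ shows that both summands are bounded by $\Vert(\I-\sqrt\Lambda)\rho\Vert_1$. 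Factoring $\rho=\rho^{1/2}\rho^{1/2}$ and using $\Vert AB\Vert_1\le\Vert A\Vert_2\Vert B\Vert_2$ with $\Vert\rho^{1/2}\Vert_2=1$ gives $\Vert(\I-\sqrt\Lambda)\rho\Vert_1\le\Vert(\I-\sqrt\Lambda)\rho^{1/2}\Vert_2$, whose square equals $\Tr\big((\I-\sqrt\Lambda)^2\rho\big)$. The operator inequality $\Lambda\le\sqrt\Lambda$ (valid since $x\le\sqrt x$ on $[0,1]$) then yields $(\I-\sqrt\Lambda)^2=\I-2\sqrt\Lambda+\Lambda\le\I-\Lambda$, so $\Tr\big((\I-\sqrt\Lambda)^2\rho\big)\le 1-\Tr(\Lambda\rho)\le\epsilon$. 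Hence $\Vert(\I-\sqrt\Lambda)\rho\Vert_1\le\sqrt\epsilon$ and the claimed $2\sqrt\epsilon$ bound follows.

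\textbf{Step 2: assembling the POVM bound.} Since the classical register states $\ketbra{i}$ are mutually orthogonal, the trace norm splits over the outcomes:
\[
\Vert\rho\otimes\ketbra{i_0}-\rho'\Vert_1=\Vert\rho-\sqrt{\Lambda_{i_0}}\,\rho\,\sqrt{\Lambda_{i_0}}\Vert_1+\sum_{i\ne i_0}\Vert\sqrt{\Lambda_i}\,\rho\,\sqrt{\Lambda_i}\Vert_1.
\]
The first term is at most $2\sqrt\epsilon$ by Step 1 applied to $\Lambda=\Lambda_{i_0}$. For the remaining sum, each operator $\sqrt{\Lambda_i}\,\rho\,\sqrt{\Lambda_i}$ is positive, so its trace norm equals $\Tr(\Lambda_i\rho)$, and $\sum_{i\ne i_0}\Tr(\Lambda_i\rho)=\Tr\big((\I-\Lambda_{i_0})\rho\big)=1-\Tr(\Lambda_{i_0}\rho)\le\epsilon$. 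Adding the two contributions and using $\epsilon\le\sqrt\epsilon$ for $\epsilon\in[0,1]$ gives the bound $3\sqrt\epsilon$.

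\textbf{Main obstacle.} There is essentially no obstacle: this is the classical gentle measurement lemma of Winter and Ogawa--Nagaoka, and every step is a one-line estimate. The only places demanding a moment of care are the operator monotonicity fact $\Lambda\le\sqrt\Lambda$ and the orthogonal-register decomposition of the trace norm in Step 2, both of which are routine.
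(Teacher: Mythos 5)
Your proof is correct: the single-operator estimate in Step 1 (via the decomposition $\rho-\sqrt\Lambda\,\rho\,\sqrt\Lambda=(\I-\sqrt\Lambda)\rho+\sqrt\Lambda\,\rho\,(\I-\sqrt\Lambda)$, H\"older, and $\Lambda\le\sqrt\Lambda$) and the block-diagonal splitting of the trace norm in Step 2 together give exactly $2\sqrt\epsilon+\epsilon\le 3\sqrt\epsilon$. The paper states this as a known Fact and supplies no proof of its own, so there is nothing to compare against; your argument is the standard Winter/Ogawa--Nagaoka one, and your reading of $\Pi_{i_0}$ as $\Lambda_{i_0}$ correctly resolves the typo in the statement.
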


\begin{fact}[{\bf Uhlmann's Theorem~\cite{uhlmann76}}]
\label{uhlmann_exact}
Let $\rho^A \in \mathcal{D}(\mathcal{H}_A)$ be a state and let $\rho^{AB}\in \mathcal{D}(\mathcal{H}_{AB}),\ \rho^{AC} \in \mathcal{D}\left( \mathcal{H}_{AC}\right)$ be purifications of $\rho_A$.
Then there exists an isometry $V^{C \rightarrow B}$ \emph{(}from a subspace of $\mathcal{H}_C$ to a subspace of $\mathcal{H}_B$\emph{)} such that,
\[  \mathbb{I}_A \otimes V^{C \rightarrow B}  \left( \rho^{AC}\right) = \rho^{AB}.\]
\end{fact}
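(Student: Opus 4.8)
The plan is to reduce the statement to the special case where one of the two purifications is a fixed \emph{canonical} purification, and then compose. Write the spectral decomposition $\rho^A = \sum_i p_i \ketbra{i}^A$ in a fixed eigenbasis $\{\ket{i}\}$, and let $\ket{\psi}^{AR} = \sum_i \sqrt{p_i}\,\ket{i}^A\ket{i}^R$ be the canonical purification on a reference system $R$ identified with $\mathrm{supp}(\rho^A)$. It suffices to prove the following \textbf{Claim}: for every purification $\ket{\phi}^{AB}$ of $\rho^A$ there is a partial isometry $W^{R\to B}$, isometric on $\mathrm{supp}(\rho^A)$, with $(\I_A\otimes W)\ket{\psi}^{AR} = \ket{\phi}^{AB}$. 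Indeed, granting the Claim, apply it to (purifications of) both $\rho^{AB}$ and $\rho^{AC}$ to get partial isometries $W_B^{R\to B}$ and $W_C^{R\to C}$; letting $\widetilde W_C$ be the isometry from $\mathrm{im}(W_C)\subseteq \mathcal{H}_C$ back to $\mathrm{supp}(\rho^A)$ that inverts $W_C$, the operator $V \coloneqq W_B\,\widetilde W_C$ is an isometry from a subspace of $\mathcal{H}_C$ to a subspace of $\mathcal{H}_B$, and $(\I_A\otimes V)\ket{\chi}^{AC} = (\I_A\otimes W_B\widetilde W_C W_C)\ket{\psi}^{AR} = (\I_A\otimes W_B)\ket{\psi}^{AR} = \ket{\phi}^{AB}$, where $\ket{\chi}^{AC}$ and $\ket{\phi}^{AB}$ denote the given purifications; taking outer products yields $\I_A\otimes V(\rho^{AC}) = \rho^{AB}$.

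For the Claim, take a Schmidt decomposition $\ket{\phi}^{AB} = \sum_j \sqrt{q_j}\,\ket{a_j}^A\ket{b_j}^B$ with $\{\ket{a_j}\}$, $\{\ket{b_j}\}$ orthonormal. Tracing out $B$ forces $\sum_j q_j \ketbra{a_j} = \rho^A$, so $\{\ket{a_j}\}$ is an eigenbasis of $\rho^A$ and, after matching multiplicities, the multiset $\{q_j\}$ equals $\{p_j\}$. Hence there is a unitary $U^A$ that is block-diagonal with respect to the eigenspaces of $\rho^A$ (so it commutes with $\rho^A$, equivalently with $\sqrt{\rho^A}$) and sends $\ket{i}\mapsto \ket{a_{\pi(i)}}$ for a relabelling $\pi$ with $p_i = q_{\pi(i)}$. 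Thus $\ket{\phi}^{AB} = (U^A\otimes \I_B)\bigl(\I_A\otimes W_0\bigr)\ket{\psi}^{AR}$, where $W_0^{R\to B}$ is the partial isometry $\ket{i}\mapsto \ket{b_{\pi(i)}}$.

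The one genuinely nontrivial point, and the main obstacle, is removing the stray $U^A$ acting on the $A$ system, since the desired $V$ must act only on the purifying register. This is handled by the ``ricochet'' identity: writing $\ket{\psi}^{AR} = (\sqrt{\rho^A}\otimes\I_R)\sum_i\ket{i}^A\ket{i}^R$ and using $(M^A\otimes\I_R)\sum_i\ket{i}^A\ket{i}^R = (\I_A\otimes (M^A)^{T})\sum_i\ket{i}^A\ket{i}^R$ (transpose in the basis $\{\ket{i}\}$), one gets $(U^A\otimes\I_R)\ket{\psi}^{AR} = (\I_A\otimes (U^A)^{T})\ket{\psi}^{AR}$ precisely because $U^A$ commutes with $\sqrt{\rho^A}$ — which is exactly where block-diagonality is used. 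Commuting $U^A\otimes\I_B$ past $\I_A\otimes W_0$ and applying this identity gives $\ket{\phi}^{AB} = \bigl(\I_A\otimes W_0\,(U^A)^{T}\bigr)\ket{\psi}^{AR}$, so $W \coloneqq W_0\,(U^A)^{T}$ — still a partial isometry, isometric on $\mathrm{supp}(\rho^A)$ — proves the Claim. Passing from the vector identity $(\I_A\otimes V)\ket{\chi}^{AC} = \ket{\phi}^{AB}$ to the claimed density-operator identity is immediate. (An equivalent route avoids the reduction: Schmidt-decompose both purifications simultaneously to get $(U^A\otimes V)\ket{\chi}^{AC} = \ket{\phi}^{AB}$ directly, then discharge $U^A$ by the same ricochet identity; the degeneracy of $\rho^A$ is what makes this step unavoidable either way.)
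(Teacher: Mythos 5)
Your proof is correct. Note, however, that the paper does not prove this statement at all: it is recorded as a Fact with a citation to Uhlmann's original work, so there is no in-paper argument to compare against. What you have written is the standard self-contained proof of the exact (same-state) version of Uhlmann's theorem: reduce to the canonical purification $\ket{\psi}^{AR}=(\sqrt{\rho^A}\otimes\I)\sum_i\ket{i}\ket{i}$ via Schmidt decomposition, and compose the two resulting partial isometries. You correctly identify and handle the one place where careless versions of this argument break, namely degenerate eigenvalues of $\rho^A$: the residual unitary $U^A$ on the $A$ system is block-diagonal with respect to the eigenspaces (because it maps one eigenbasis of $\rho^A$ to another while preserving eigenvalues), hence commutes with $\sqrt{\rho^A}$, which is exactly what licenses the ricochet identity $(U^A\otimes\I_R)\ket{\psi}^{AR}=(\I_A\otimes(U^A)^{T})\ket{\psi}^{AR}$ used to push $U^A$ onto the purifying register. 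The composition step $V=W_B\widetilde{W}_C$ also correctly delivers an isometry defined on a subspace of $\mathcal{H}_C$ (the image of $W_C$), matching the parenthetical qualification in the statement; and passing from the vector identity to the density-operator identity is harmless since any global phase can be absorbed into $V$. The argument is complete as written.
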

\begin{fact}[{\bf Closeness~\cite[Fact 9]{Wang_Renner,Anshu_optimality}}]\label{fact:closeness}
Let $\phi^{MM'}$ be a quantum state that satisfies the following conditions:
\begin{align*}
    &\phi^{M}=\frac{\I}{\abs{M}} \hspace{1.23cm} \text{and}
    &\Tr\left[\sum_m \ketbra{m}^M\otimes \ketbra{m}^{M'}\phi^{MM'}\right]\geq 1-\eps.
\end{align*}
Then for any quantum state $\sigma^{M'}$, it holds that 
\[
D_H^{\eps}(\phi^{MM'}~||~\phi^M\otimes \sigma^{M'})\geq \log \abs{M} .
\]
\end{fact}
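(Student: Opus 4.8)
The plan is to unwind the definition of the hypothesis testing divergence and then exhibit the ``obvious'' tester. By definition,
\[
2^{-D_H^{\eps}\left(\phi^{MM'}~\Vert~\phi^M\otimes\sigma^{M'}\right)} \;=\; \min_{\substack{0\le\Pi\le\I\\ \Tr\left[\Pi\,\phi^{MM'}\right]\ge 1-\eps}} \Tr\left[\Pi\left(\phi^M\otimes\sigma^{M'}\right)\right],
\]
so the claimed inequality $D_H^{\eps}\ge\log\abs{M}$ is equivalent to the statement that this minimum is at most $2^{-\log\abs{M}}=1/\abs{M}$. Since it is a minimum, it suffices to produce a \emph{single} feasible operator $\Pi$ with $\Tr\left[\Pi\left(\phi^M\otimes\sigma^{M'}\right)\right]\le 1/\abs{M}$.

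The tester I would use is $\Pi\coloneqq\sum_m\ketbra{m}^M\otimes\ketbra{m}^{M'}$, the projector onto the maximally correlated subspace. It is feasible: it is an orthogonal projector, so $0\le\Pi\le\I$, and the requirement $\Tr\left[\Pi\,\phi^{MM'}\right]\ge 1-\eps$ is precisely the second hypothesis on $\phi^{MM'}$.

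Finally I would evaluate its acceptance probability on the uncorrelated state, invoking the first hypothesis $\phi^M=\I/\abs{M}$:
\[
\Tr\left[\Pi\left(\phi^M\otimes\sigma^{M'}\right)\right] \;=\; \sum_m \frac{1}{\abs{M}}\,\bra{m}\sigma^{M'}\ket{m} \;\le\; \frac{1}{\abs{M}}\Tr\left[\sigma^{M'}\right] \;=\; \frac{1}{\abs{M}},
\]
which is exactly the bound needed; plugging it into the displayed identity yields $D_H^{\eps}\left(\phi^{MM'}~\Vert~\phi^M\otimes\sigma^{M'}\right)\ge\log\abs{M}$. There is essentially no obstacle here: the only point of substance is recognising that the correlated-subspace projector is a legal tester for $\phi^{MM'}$ (which the second hypothesis hands us) and that the maximal mixedness of $\phi^M$ forces its acceptance probability on the product state to be at most $1/\abs{M}$, independently of $\sigma^{M'}$. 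The one thing to keep straight is that $\log$ is base $2$ throughout, so that $2^{-\log\abs{M}}=1/\abs{M}$.
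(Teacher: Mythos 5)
Your proof is correct and is exactly the standard argument: the paper imports this statement as a cited Fact without reproving it, and the references it cites establish it precisely by taking the correlated-subspace projector $\sum_m\ketbra{m}^M\otimes\ketbra{m}^{M'}$ as the feasible tester and using $\phi^M=\I/\abs{M}$ to bound its acceptance probability on the product state by $1/\abs{M}$. Nothing is missing; the only cosmetic remark is that $\sum_m\bra{m}\sigma^{M'}\ket{m}\le\Tr[\sigma^{M'}]=1$ holds with equality when the $\ket{m}$ form a full basis of $M'$, and with inequality otherwise, either of which suffices.
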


\section{Overview of Techniques}
In this section, we present the main ideas that lead to the proofs of Theorems \ref{thm:mainthmintro} and \ref{thm:mainthmintro2}.
\subsection{The Main Idea}\label{sec:overview1}
The main techniques used thus far to prove chain rules for the smooth min and max entropies \cite{Dupuis_smooth_min_max} and the R\'enyi $\alpha$-entropies have involved the SDP formulations of these quantities, or norm interpolation methods. While these techniques are extremely sophisticated and powerful, in this paper we take a much simpler operational approach. 
The following observation is at the heart of our approach:
\vspace{1mm} \\
Consider a situation where two parties, Alice and Bob wish to perform a generic information processing task \info using a resource state $\rho^{AB}$ and communication.
Suppose we are promised the following:
\begin{enumerate}
    \item \emph{Any} protocol which achieves \info  using the resource state $\rho^{AB}$ requires Alice and Bob to communicate at least $C(A\to B)_{\rho^{AB}}$ number of bits, where the $C(\cdot)$ is a function of the state $\rho^{AB}$.
    \item There exists a protocol $\mathcal{P}(A\to B)$ which achieves \info~  using the state $\rho^{AB}$, with a communication cost $C(A\to B)_{\rho^{AB}}$.
    \item Additionally, $\mathcal{P}(A\to B)$ ensures that at the end of the protocol, the share $A$ of the state $\rho^{AB}$ belongs to Bob and both the classical and quantum correlations between the $A$ and $B$ remain intact.
\end{enumerate}
One can consider a successive cancellation strategy for achieving \info :
Consider a situation where Alice and Bob wish to achieve \info with the resource state $\rho^{A_1A_2B}$. Consider the following strategy:
\begin{enumerate}
    \item Alice enacts the protocol $\mathcal{P}(A_1\to B)$ using only the marginal $\rho^{A_1B}$, at a communication cost of $C(A_1\to B)_{\rho^{A_1B}}$. At the end of $\mathcal{P}(A_1\to B)$, the $A_1$ share of the resource state $\rho^{A_1A_2B}$ belongs to Bob.
    \item Alice can then enact the protocol $\mathcal{P}(A_2\to A_1B)$, with a communication cost $C(A_2\to A_1B)_{\rho^{A_1A_2B}}$.
\end{enumerate}
The above protocol achieves \info while using the resource state $\rho^{A_1A_2B}$ with a cumulative
\[
C(A_1\to B)_{\rho^{A_1B}}+C(A_2\to A_1B)_{\rho^{A_1A_2B}}
\]
of communication. Then, using the promised lower bound on the amount of communication required to achieve this task, we see that
\[
\begin{aligned}
&C(A_1\to B)_{\rho^{A_1B}}+C(A_2\to A_1B)_{\rho^{A_1A_2B}} \geq C(A_1A_2\to B)_{\rho^{A_1A_2B}}.
\end{aligned}
\]
This algorithmic technique of showing the existence of chain rules was first exploited by Chakraborty et al.\cite{CNS} to demonstrate chain rules for the smooth max mutual information.
As mentioned previously, we present below an improved version of this result in \cref{thm:mainthmintro2}.
The proof idea is as follows: \\ \vspace{1mm} \\ Consider the task of \emph{quantum state splitting}, in which a party (say Alice) holds the $AM$ share of the pure quantum state $\ket{\varphi}^{RAM}$ at the beginning of the protocol, $R$ being held by the referee. Alice is then required to send the $M$ portion of the state to Bob, while trying to minimize the number of qubits communicated to Bob. It is known \cite{Berta_Reverse_Shannon} that for this problem, Alice needs to communicate at least 
\[
\frac{1}{2} I_{\max}^{\eps}(R:M)
\]
number of qubits to Bob. To show the chain rule, consider a pure stage $\ket{\varphi}^{RAM_1M_2}$. Then:
\begin{enumerate}
    \item First Alice sends the $M_1$ system to Bob using state redistribution protocol \cite{Convex_Split}. At this point, the global state is some $\ket{\varphi^{\prime}}^{RAM_1M_2}$, which is $\eps$ close (in the purified distance) to the original state $\ket{\varphi}^{RAM_1M_2}$, with the $M_1$ system being in the possession of Bob.
    \item To do this, Alice communicated $\frac{1}{2} I_{\max}^{\eps}(R:M_1)$ qubits to Bob (suppressing the additive log terms).
    \item Next, Alice sends the system $M_2$ to Bob. Note that, instead of using the vanilla state redistribution protocol once more (which would cost about $\frac{1}{2}I_{\max}^{\eps}(RM_1:M_2)$ qubits of communication), we can take advantage of the fact that Bob possesses some side information about the state, in particular, the register $M_1$ already in his possession. Anshu et al. presented a modified state redistribution protocol in \cite{AJW_modified} which does precisely this, while reducing the quantum communication cost to 
    \[
    \frac{1}{2}\cdot \left(I_{\max}^{\eps}(RM_1:M_2)-I_H^{\eps}(M_1:M_2)\right).
    \]
\end{enumerate}
Putting the achievable communication rate derived above against the lower bound shown by \cite{Berta_Reverse_Shannon} then gives us the chain rule:
\[
I_{\max}^{\eps^{\prime}}(R:M_1M_2) \leq I_{\max}^{\eps}(R:M_1) + I_{\max}^{\eps}(RM_1:M_2)-I_H^{\eps}(M_1:M_2)
\]
where we have ignored the additive log terms and set $\eps^{\prime}$ to reflect the total error made by the achievable strategy. The explicit computation of the error is easy and follows along similar lines to the calculation presented in \cite{CNS} with some minor tweaks. Hence we do not repeat it here. Instead, the rest of the paper is devoted to proving \cref{thm:mainthmintro}, which is technically much harder to prove.
\subsection{Issues with $I_H^{\eps}$}\label{sec:overview2}
The idea presented in \cref{sec:overview1} can similarly be used to prove chain rules where the direction of the inequality is reversed. In that case, one has to consider a task for which Alice and Bob wish to \emph{maximize} the amount of communication, and there exists a known upper bound. However, this idea cannot be readily applied when trying to prove chain rules for $I_H^{\eps}$. The difficulties are as follows: \\ \vspace{1mm} \\
Suppose we wish to derive a chain rule of the form
\[
C(A_1A_2\to B)_{\rho^{A_1A_2B}} \geq C(A_1\to B)+ C(A_2\to A_1B)
\]
we require the existence of an information processing task for which the \emph{maximum} number of bits that can be transmitted is quantified by $C(A_1A_2\to B)$. Note that this number is a function of a specific fixed state $\rho^{A_1A_2B}$. For $I_H^{\eps}$, a natural task that one may consider for this purpose is entanglement-assisted channel coding over some quantum channel $\mathcal{N}^{A\to B}$. However, as mentioned before, Anshu et al. showed in \cite{quantum_assistated_classical_anurag,Anshu_optimality} that the maximum number of bits that can be sent using this channel is given by
\[
\max\limits_{\ket{\varphi}^{A_1A_2A}}I_H^{\eps}(A_1A_2:B)_{\I^{A_1A_2}\otimes \mathcal{N}^{A\to B}\left(\varphi^{A_1A_2A}\right)}.
\]
Note that the above capacity expression is a function of the \emph{channel} and \emph{not} a fixed state.
In particular, there is a maximization over state $\ket{\phi}$.
This prevents us from directly importing our operational approach here. Note that this was not an issue in the case of $I_{\max}^{\eps}$ since the task of state splitting is defined for a specific fixed state, and not a channel, and neither did it involve any maximization.
To remedy this situation, we need to do the following:
\begin{enumerate}
    \item \label{requirement:channel} Given a state $\rho^{A_1A_2B}$ we need to exhibit a channel $\mathcal{N}^{A\to B}$ and pure state $\ket{\varphi}^{A_1A_2A}$ such that
    \[
    \mathcal{N}^{A\to B}\left(\varphi^{A_1A_2A}\right)= \rho^{A_1A_2B}.
    \]
    \item \label{requirement:protocol} Having exhibited this channel, we need to show that \emph{any} protocol which uses $\ket{\varphi}^{A_1A_2A}$ as a shared entangled state and sending classical messages across $\mathcal{N}$,  can send at most $I_H^{\eps}(A_1A_2:B)$ many bits (and error at most $\epsilon$).
\end{enumerate}
We refer to above two conditions as \textbf{Requirement}~\ref{requirement:channel} and \textbf{Requirement}~\ref{requirement:protocol}, respectively.
Before going to chain rules for an arbitrary state, we first show the following preposition, which includes some core ideas of our protocol.

\begin{proposition}\label{prop:IMstatechainrule}
Given a quantum state $\rho^{ABC}$ such that
\[
\Tr_C\left[\rho^{ABC}\right]=\rho^A\otimes \rho^B,
\]
it holds that
\[
I_H^{\eps}(AB:C) \geq I_H^{\eps^{\prime}}(A:C)+I_H^{\eps^{\prime\prime}}(B:AC)+O\left(\log \eps\right)
\]
where both $\eps^{\prime}$ and $\eps^{\prime\prime}$ are $O(\eps^2)$.
\end{proposition}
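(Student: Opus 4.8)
The plan is to realize the operational strategy from Section~\ref{sec:overview1} in the ``reversed'' direction described in Section~\ref{sec:overview2}, exploiting the special structure $\rho^{AB} = \rho^A \otimes \rho^B$ to meet \textbf{Requirement}~\ref{requirement:channel} and \textbf{Requirement}~\ref{requirement:protocol}. First, fix a purification $\ket{\rho}^{ABCE}$ of $\rho^{ABC}$ and set $R = C$, thinking of $C$ as the referee's share. I want a channel $\mathcal{N}^{A_{\mathrm{in}} \to AB}$ and a pure input $\ket{\varphi}^{A_1 A_2 A_{\mathrm{in}}}$ with $\mathcal{N}(\varphi) = \rho^{ABC}$ in the appropriate registers; the point of the product structure $\rho^{AB} = \rho^A \otimes \rho^B$ is that $A$ and $B$ can be prepared \emph{independently} of the input that correlates with $C$, so the channel can be taken to essentially append the fixed state $\rho^A \otimes \rho^B$ (or rather prepare correlations with $C$ on one side and append a fixed marginal on the other), which is what makes the successive-cancellation decomposition clean. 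Concretely I expect to send $M_1$ (playing the role of $A$) first and $M_2$ (playing the role of $B$) second, so that when the $B$-message is decoded the decoder already holds $A$ as side information — this is exactly the place the conditional term $I_H^{\eps''}(B:AC)$ will appear.

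The core of the argument has three ingredients. (i) \emph{Converse / upper bound on the task:} any entanglement-assisted classical communication protocol over $\mathcal{N}$ with shared state $\ket{\varphi}$ and error $\eps$ transmits at most $I_H^{\eps}(AB:C)_{\rho} + O(\log \eps)$ bits; this is the Wang--Renner / Anshu et al.\ one-shot converse specialized to the fact that our channel output state is the fixed $\rho^{ABC}$, so the maximization over inputs collapses (this is precisely why the product structure is needed — it lets us pin the optimal input). (ii) \emph{Achievability, round one:} using the position-based decoding / hypothesis-testing achievability of Anshu et al., Alice can reliably send $\approx I_H^{\eps'}(A:C)_{\rho}$ bits to Bob using only the $A$-side of the correlations, with error controlled by $\eps'$, and crucially in a way that leaves the relevant state intact and delivers the $A$-register to Bob (the analog of property~3 in Section~\ref{sec:overview1}); this is where a gentle-measurement step (Fact~\ref{fact:gentle_measurement}) enters, costing a $\sqrt{\eps}$-type loss and forcing $\eps' = O(\eps^2)$. (iii) \emph{Achievability, round two:} now Bob holds $A$ as side information, so a second round of position-based decoding on the $B$-side sends $\approx I_H^{\eps''}(B:AC)_{\rho}$ additional bits. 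Summing the two rounds and comparing against the converse in (i) gives
\[
I_H^{\eps}(AB:C)_{\rho} \;\geq\; I_H^{\eps'}(A:C)_{\rho} + I_H^{\eps''}(B:AC)_{\rho} + O(\log \eps),
\]
which is the claim.

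For the achievability bookkeeping I would lean on Fact~\ref{fact:closeness}: after a successful position-based decoding round the joint state of message register and decoder's guess register is close to a maximally-correlated state on a set of size $2^{\text{(rate)}}$, which both certifies the rate via the hypothesis-testing converse and, combined with Uhlmann's theorem (Fact~\ref{uhlmann_exact}), lets me re-purify and hand the decoded $A$-system to Bob so that the second round can proceed as if from a fresh (slightly perturbed) copy of $\rho^{A B C}$ with $A$ relocated. Error accumulates additively across the two rounds and each gentle-measurement/smoothing step contributes an $O(\sqrt{\eps})$ term, which is why the smoothing parameters of the two pieces on the right degrade to $O(\eps^2)$; the additive $O(\log\eps)$ term collects the usual rate penalties of one-shot position-based decoding.

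The main obstacle I anticipate is \textbf{Requirement}~\ref{requirement:protocol} made rigorous: showing that the converse bound for the constructed channel really is $I_H^{\eps}(AB:C)_{\rho}$ with \emph{no residual maximization}, and that the two-round achievable scheme genuinely composes — i.e.\ that after round one the global state is close enough to $\rho^{ABC}$ (with $A$ moved to Bob) that round two's analysis, which is stated for the ideal state, still goes through with only a controlled error blow-up. Getting the side-information register into Bob's hands \emph{without} disturbing the $B$--$C$ correlations, and quantifying the resulting perturbation through purified distance so that Uhlmann can be applied, is the delicate part; everything else is the by-now-standard machinery of position-based decoding plus gentle measurement.
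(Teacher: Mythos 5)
Your overall architecture is the paper's: build a channel realizing $\rho^{ABC}$ from a product-form input (using $\rho^{AB}=\rho^A\otimes\rho^B$), lower-bound the achievable rate by a two-round successive-cancellation, position-based-decoding scheme at rates $I_H^{\eps'}(A:C)$ and $I_H^{\eps''}(B:AC)$ with gentle-measurement losses accounting for the $O(\eps^2)$ smoothing, and pit this against a one-shot converse of $I_H^{\eps}(AB:C)$. The achievability half of your sketch, including the $\sqrt{\eps}$ bookkeeping, is essentially what the paper does.

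The genuine gap is exactly the step you flag as ``the main obstacle'' and then do not resolve: removing the maximization from the converse (Requirement~\ref{requirement:protocol}). Your proposed reason --- that the channel output state is the fixed $\rho^{ABC}$, so the maximization over inputs collapses, and that this is why the product structure is needed --- is not correct: the channel output is \emph{not} fixed, it depends on what state the encoder places on the channel input, and a priori a protocol could feed in something other than $\varphi_1^{A}\otimes\varphi_2^{B}$. The paper's mechanism is the content of \cref{sec:IMstatesandRedcution}: partition the set $\mathcal{S}(\mathcal{N},\eps)$ of all $\eps$-error protocols according to the \emph{average state the encoder induces on the channel input}, prove the AJW-style converse \emph{per partition} (\cref{lem:converse}, \cref{corol:usableconverse}) in terms of an arbitrary purification of that input state, and then verify that the successive-cancellation protocol lies in the partition $\mathcal{S}^{\varphi_1^{A}\otimes\varphi_2^{B}}(\mathcal{N},O(\sqrt{\eps}))$. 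Only for that partition does the bound evaluate to $D_H^{\delta}\bigl(\rho^{A_fB_fC}~\Vert~\rho^{A_f}\otimes\rho^{B_f}\otimes\rho^{C}\bigr)=I_H^{\delta}(A_fB_f:C)$, and it is here --- not in the converse itself --- that the product structure enters, via $\varphi_1^{A_f}\otimes\varphi_2^{B_f}=\rho^{A_fB_f}$. Without this partition-plus-membership argument the upper and lower bounds do not meet. A secondary issue: your opening register assignment (a channel $\mathcal{N}^{A_{\mathrm{in}}\to AB}$ with $C$ as a referee's share) mixes the state-splitting picture used for $I_{\max}$ with the channel-coding picture needed here; for the $I_H$ bound the channel must map the purifying registers of $\rho^{A}$ and $\rho^{B}$ to the output $C$, with $A$ and $B$ sitting at the receiver as halves of the pre-shared entanglement, since otherwise the achievable rates do not come out as $I_H(A:C)$ and $I_H(B:AC)$.
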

\noindent Note that when $A$ and $B$ marginals are in tensor, $I_{\max}(A:B)_{\rho}=0$, and hence the above Preposition  exactly recovers the chain rule we wanted.
Throughout the paper we call such states (where marginals are independent) as $\mathsf{IM}$-states.
\subsection{A Warm-up: Chain Rules for $\mathsf{IM}$-States}\label{sec:overviewImstates}
In this section, we introduce the proof techniques that we will use to prove \cref{prop:IMstatechainrule}. See \cref{sec:ChainRules} for a complete proof. We begin by defining a certain subfamily of tripartite states.
The family that we will be interested in, will be such that its marginals on one of the pairs will be independent.
\[\mathcal{F}_{\mathsf{IM}}^{A_fB_fC} = \lbrace\rho^{A_fB_fC} \in \mathcal{D}\left( \mathcal{H}_{A_fB_fC}\right)\ \vert\ \rho^{A_fB_f} = \rho^{A_f} \otimes \rho^{B_f} \rbrace. \]
To mean that $\rho^{A_fB_fC} \in \mathcal{F}_{\mathsf{IM}}^{A_fB_fC}$, we will use the shorthand $\rho^{A_fB_fC}$ is an $(A_fB_fC)-\mathsf{IM}$ state.
Note that the order of register $A_fB_fC$ matters as the marginals only on $A_f$ and $B_f$ are independent. Recall that to prove chain rules for $I_H^{\eps}$ we need to fulfill the Requirements \ref{requirement:channel} and \ref{requirement:protocol}. Requirement \ref{requirement:channel} is not hard to satisfy for $\mathsf{IM}$-states, as is shown by the following lemma:

\begin{lemma}\label{lem:SmChannel} 
For every $(A_fB_fC)-\mathsf{IM}$ state $\rho^{A_fB_fC}$ and purifications $\varphi^{A_fA}_1, \varphi^{B_fB}_2$ of $\rho^{A_f}, \rho^{B_f}$ respectively, there exists a channel $\mathcal{N}^{AB\to C}$ such that the following holds:
\[
\I^{A_fB_f}\otimes \mathcal{N}^{AB\to C}\left(\varphi^{A_fA}_1\otimes \varphi^{B_fB}_2\right)= \rho^{A_fB_fC}.
\]
\end{lemma}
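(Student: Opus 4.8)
The plan is to construct the channel $\mathcal{N}^{AB\to C}$ explicitly using the fact that $\rho^{A_fB_fC}$ is an $\mathsf{IM}$-state, i.e.\ $\rho^{A_fB_f}=\rho^{A_f}\otimes\rho^{B_f}$. First I would take a purification of $\rho^{A_fB_fC}$; since the $A_fB_f$ marginal factorizes, we can choose this purification to live on $A_fB_fCE$ for some purifying register $E$. Now compare this with the tensor product of the given purifications $\varphi_1^{A_fA}\otimes\varphi_2^{B_fB}$: both are pure states whose marginal on $A_fB_f$ equals $\rho^{A_f}\otimes\rho^{B_f}$. Hence, viewing $A_fB_f$ as the ``system'' and $AB$ (resp.\ $CE$) as the two purifying spaces, Uhlmann's theorem (\cref{uhlmann_exact}) gives an isometry $V^{AB\to CE}$ such that $\I^{A_fB_f}\otimes V^{AB\to CE}\bigl(\varphi_1^{A_fA}\otimes\varphi_2^{B_fB}\bigr)$ equals the chosen purification of $\rho^{A_fB_fC}$ on $A_fB_fCE$.

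Given this isometry, I would simply \emph{define} the channel by applying $V$ and then tracing out $E$:
\[
\mathcal{N}^{AB\to C}(\cdot)\coloneqq \Tr_E\bigl[V^{AB\to CE}(\cdot)\,V^{\dagger}\bigr].
\]
This is manifestly a completely positive trace-preserving map from $AB$ to $C$. Applying it to $\varphi_1^{A_fA}\otimes\varphi_2^{B_fB}$ and using the Uhlmann identity above, we get that $\I^{A_fB_f}\otimes\mathcal{N}^{AB\to C}(\varphi_1^{A_fA}\otimes\varphi_2^{B_fB})$ is the partial trace over $E$ of a purification of $\rho^{A_fB_fC}$, which is exactly $\rho^{A_fB_fC}$. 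That closes the argument.

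The only genuinely delicate point — and the step I would flag as the main obstacle, though it is more of a bookkeeping subtlety than a real difficulty — is making sure the two purifications live on comparable spaces so that Uhlmann applies cleanly. Concretely, $\varphi_1\otimes\varphi_2$ purifies $\rho^{A_f}\otimes\rho^{B_f}$ into the register $AB$, while our chosen purification of $\rho^{A_fB_fC}$ uses $CE$; one must pad with ancillas (or invoke the version of Uhlmann's theorem allowing an isometry into a possibly larger space, exactly as stated in \cref{uhlmann_exact}) to match dimensions, and then the isometry $V^{AB\to CE}$ exists as claimed. Everything else is routine: the $\mathsf{IM}$ property is used precisely once, to guarantee that the $A_fB_f$-marginals of the two pure states agree, which is the hypothesis Uhlmann's theorem needs.
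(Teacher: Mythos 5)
Your proposal is correct and is essentially identical to the paper's own proof: both purify $\rho^{A_fB_fC}$ to a pure state on $A_fB_fCE$, observe that it and $\varphi_1^{A_fA}\otimes\varphi_2^{B_fB}$ are both purifications of $\rho^{A_f}\otimes\rho^{B_f}$ (this is where the $\mathsf{IM}$ property enters), invoke Uhlmann's theorem to obtain an isometry $V^{AB\to CE}$, and define $\mathcal{N}^{AB\to C}=\Tr_E\circ V$. No further comment is needed.
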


\begin{proof}
Consider a purification ${\rho}^{A_fB_fCR}$ of $\rho^{A_fB_fC}$. Note that this is also a valid purification of the state ${\rho^{A_fB_f}=} \rho^{A_f}\otimes \rho^{B_f}$. Then, by the Uhlmann's Theorem (Fact~\ref{uhlmann_exact})there exists an isometry $V^{AB\to CR}$ such that 
\[
\mathbb{I}^{A_fB_f} \otimes V^{AB\to CR} \left(\varphi^{A_fA}_1\otimes \varphi^{B_fB}_2\right) = {\rho}^{A_fB_fCR}.
\]

Define
\[
\mathcal{N}^{AB\to C}\coloneqq \Tr_R \circ V^{AB\to CR}.
\]
Then it is easy to see that
\[
\I^{A_fB_f}\otimes \mathcal{N}^{AB\to C}\left(\varphi^{A_fA}_1\otimes \varphi^{B_fB}_2\right)= \rho^{A_fB_fC}.
\]
This concludes the proof.
\end{proof}

\begin{remark}
Given an $(A_fB_fC)-\mathsf{IM}$ state $\rho^{A_fB_fC}$, note that the channel which satisfies the conditions of  \cref{lem:SmChannel} is not unique, but instead depends on the purifications $\ket{\varphi}^{A_fA}_1$ and $\ket{\varphi}^{B_fB}_2$. Nevertheless, we will always fix these purifications, and refer to \emph{the} channel constructed in  \cref{lem:SmChannel} as the $\imextended$ channel of $\left(\rho^{A_fB_fC}, \varphi^{A_fA}_1, \varphi^{B_fB}_2 \right)$. 
When the registers are clear from the context, we will denote it as the $\imextended$ channel of $(\rho,\varphi_1,\varphi_2)$.
\end{remark}
\noindent Requirement \ref{requirement:protocol} is much harder to show. To prove that this requirement indeed holds, we use the following idea:
\begin{enumerate}
    \item We first consider the set of \emph{all} entanglement assisted protocols which use the channel $\mathcal{N}^{AB\to C}$ to send classical messages, with an error at most $\eps$. We call this set $\mathcal{S}(\mathcal{N}, \eps)$.
    \item We divide this set into disjoint subsets $\mathcal{S}^{\sigma}(\mathcal{N},\eps)$, where each subset consists of all those protocols whose encoders create some \emph{fixed} state $\sigma^{AB}$ on the system which is input to the channel when all other systems are traced out.
    \item  \label{point:3} We then show that, for a fixed $\sigma^{AB}$, \emph{any} protocol in the set $\mathcal{S}^{\sigma^{AB}}(\mathcal{N},\eps)$ can send at most 
    \[
    I_{H}^{\eps}(A_fB_f: C)_{\I^{A_fB_f}\otimes \mathcal{N}(\ket{\sigma}^{ABA_fB_f})}
    \]
    number of bits through $\mathcal{N}$, where $\ket{\sigma}^{ABA_fB_f}$ is an arbitrary purification of $\sigma^{AB}$. We do this by using a slightly modified form of the converse shown by Anshu et al. \cite{quantum_assistated_classical_anurag}.
    \item For the case of $\mathsf{IM}$-states, setting 
    \[
    \sigma^{AB}\gets \varphi_1^{A_f}\otimes \varphi_2^{B_f}
    \]
    and
    \[
    \ket{\sigma}^{ABA_fB_f}\gets \ket{\varphi_1}^{AA_f}\ket{\varphi_2}^{BB_f}
    \]
    completes the argument.
\end{enumerate}
We explore the above idea of partitioning the set of all protocols with a small error in  \cref{sec:IMstatesandRedcution}. The precise definition of the terms $\mathcal{S}(\mathcal{N},\eps)$ and $\mathcal{S}^{\sigma}(\mathcal{N},\eps)$ can be found in \cref{sec:setup}. The proof of the upper bound for a fixed partition referred to in Point \ref{point:3} can be found in \cref{sec:partition}.

To complete the proof of the chain rule we still need to show a successive coding strategy using the states $\ket{\varphi_1}^{AA_f}\ket{\varphi_2}^{BB_f}$ as a shared resource. To do this we use a standard successive cancellation style argument using Anshu et al.'s coding strategy for entanglement-assisted classical message transmission \cite{quantum_assistated_classical_anurag}. Details can be found in \cref{sec:ChainRules} and Appendix \ref{sec:AJW}.

\section{Partitioning the Space of Good Protocols}\label{sec:IMstatesandRedcution}

In this section, we introduce the partitioning idea, referred to in \cref{sec:overviewImstates} that is key to the proof of our chain rule. Towards that end, we clarify the definition of an entanglement-assisted classical communication protocol (over a noisy quantum channel) in \cref{sec:setup}, and go on to define the set of all such protocols which make a small amount of error. Then in \cref{sec:partition} we introduce a way to partition this set and provide upper bounds on the rates of communication of the protocols that belong to a fixed partition.

\subsection{The Setup for Entanglement Assisted Classical Communication}\label{sec:setup}

\begin{figure}[hbtp]
  \caption{Setup for Entanglement assisted classical communication}
    \label{EA_Classical_Setup}
\fbox{\parbox{\textwidth}{
Consider an entanglement-assisted classical message transmission protocol over a channel  $\mathcal{N}^{A\to B}$, which makes an error at most $\eps$. 
Any such protocol consists of the following objects:
\begin{enumerate}
    \item A state 
    \[
    \psi^{MM_A}\coloneqq\sum\limits_{m\in M} \frac{1}{\vert M \vert }\ketbra{m}^M\otimes \ketbra{m}^{M_A}
    \]
    held by the sender Alice.
    \item Shared entanglement modeled by a pure state
    \[
    \ketbra{\varphi}^{E_AE_B}.
    \]
    where the $E_A$ system is held by Alice and the $E_B$ system is held by the receiver Bob.
    \item An encoder $\mathcal{E}^{M_AE_A\to A}$ which takes as input the states in the  $M_A$  and $E_A$ systems and produces a state on the register $A$, which is the input to the channel $\mathcal{N}^{A \rightarrow B}$.
    \item A decoder $\mathcal{D}^{BE_B\to \widehat{M}}$, which acts on the register $B$ (the output  of the channel), as well as Bob's share of the entanglement, and produces a classical register $\widehat{M}$ which contains a guess for the message sent by Alice.
\end{enumerate}
The protocol is said to make an {(average)}-error at most $\eps$ if
\[\label{eq:cond}
\norm{\mathcal{D}\circ \mathcal{N}\circ \mathcal{E}~\left(\psi^{MM_A}\otimes \varphi^{E_AE_B}\right)-\psi^{MM_A}}_1 \leq \eps \tag{1}
\]
}}
\end{figure}
\begin{definition}
\begin{enumerate}[label=\Alph*.]
\item A protocol $\mathcal{P}$ will be labelled by a tuple $\left( M, {\mathcal{N}}, \mathcal{E},  \mathcal{D}, \ket{\varphi}^{E_AE_B}\right)$.
The average error of the protocol $\er$ is given by the following expression:
\[\begin{aligned}
\er(\mathcal{P}) &= \er\left( M, {\mathcal{N}}, \mathcal{E},  \mathcal{D}, \ket{\varphi}^{E_AE_B}\right) \\ &= \norm{\mathcal{D}\circ \mathcal{N}\circ \mathcal{E}~\left(\psi^{MM_A}\otimes \varphi^{E_AE_B}\right)-\psi^{MM_A}}_1.\end{aligned} \] 
    \item Let $\mathcal{S}(\mathcal{N},\eps)$ to be the set of all protocols $\mathcal{P}$ which makes an error at most $\eps$ while using channel the $\mathcal{N}$.{
    \[
    \mathcal{S}(\mathcal{N},\eps)  = \Big\lbrace \mathcal{P}: \exists M,\mathcal{E},\mathcal{D} \text{ such that } \mathcal{P} \text{ is an } \left( M, {\mathcal{N}},  \mathcal{E},  \mathcal{D}, \ket{\varphi}^{E_AE_B}\right) \text{ and  } \er\left( M, {\mathcal{N}}, \mathcal{E},  \mathcal{D}, \ket{\varphi}^{E_AE_B}\right) \leq \epsilon\Big\rbrace.\]}        
    \item We define $\mathcal{S}^{\rho^A}\left( \mathcal{N}, \epsilon\right) \subseteq \mathcal{S}(\mathcal{N},\eps)$ to be the set of all those protocols $\mathcal{P}\in \mathcal{S}(\mathcal{N},\eps)$ for which the state at the input to the channel is $\rho^A$.
      { \[ \Tr_{ME_B} \left[\mathcal{E}~\left(\psi^{MM_A}\otimes \varphi^{E_AE_B}\right) \right] = \rho^A.\]}
\end{enumerate}

\end{definition}
\subsection{The Partitions and Corresponding Upper Bounds}\label{sec:partition}
Note that $\mathcal{S}^{\rho^A}(\mathcal{N}, \epsilon)$ partitions the set $\mathcal{S}(\mathcal{N}, \epsilon)$. That is, given $\mathcal{P} \in \mathcal{S}(\mathcal{N}, \epsilon)$, there exists a unique $\rho^A$ such that $\mathcal{P} \in \mathcal{S}^{\rho^A}(\mathcal{N}, \epsilon)$.
In~\cite{quantum_assistated_classical_anurag} Anshu, Jain and Warsi showed that for any protocol $\mathcal{P} \in \mathcal{S}(\mathcal{N},\epsilon)$ the number of messages $M$ can be upper bounded by $D_H^{\epsilon}$ of certain states associated with the protocol. 
For our proof, we need a slightly finer version of an analogous statement. 
In the following lemma, we note that such a statement remains valid over individual partitions as well.
 Our proof follows a similar strategy as theirs. 
 We include it here for the sake of completeness.
\begin{lemma}\label{lem:converse}
Let $\mathcal{P}$ be an arbitrary $\left(M,{\mathcal{N}}, \mathcal{E}, \mathcal{D}, \ket{\varphi}^{E_AE_B}\right)$ protocol in $\mathcal{S}^{\rho^A}(\mathcal{N},\eps)$.
Then,
\[
\log \vert M \vert \leq \min_{\sigma^B} D_H^{\eps}\left(\mathcal{N}(\tau^{AB'})~||~\sigma^B \otimes \tau^{B'}\right),
\]
where $\ket{\tau}^{AB^\prime}$ is an {arbitrary} purification of the state $\rho^A$.
\end{lemma}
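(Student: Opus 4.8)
The plan is to run the Anshu--Jain--Warsi information-spectrum converse of \cite{quantum_assistated_classical_anurag}, but to carry it out inside a single partition $\mathcal{S}^{\rho^A}(\mathcal{N},\eps)$ and against an arbitrary state $\sigma^B$, so that the minimisation over $\sigma^B$ can be taken at the very end. Fix a protocol $\mathcal{P}=\left(M,\mathcal{N},\mathcal{E},\mathcal{D},\ket{\varphi}^{E_AE_B}\right)\in\mathcal{S}^{\rho^A}(\mathcal{N},\eps)$ and an arbitrary $\sigma^B$. The first step is to purify the protocol at the channel input: dilating the encoder $\mathcal{E}$ to an isometry and purifying the maximally-correlated state $\psi^{MM_A}$, one obtains a pure state $\ket{\Omega}^{AR}$, where $R$ is a single register that bundles together the message register $M$, Bob's entanglement share $E_B$, and all dilation and purification garbage. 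Two properties of $\ket{\Omega}^{AR}$ drive the argument: first, $\Omega^A=\rho^A$ by the defining property of the partition $\mathcal{S}^{\rho^A}(\mathcal{N},\eps)$, and $\Omega^{AME_B}$ is literally the true channel-input state of $\mathcal{P}$; second, since $\mathcal{E}$ acts only on $M_AE_A$, it never touches $M$ or $E_B$, so $\Omega^{ME_B}=\frac{\I^M}{\abs{M}}\otimes\varphi^{E_B}$.

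Since $\ket{\tau}^{AB'}$ and $\ket{\Omega}^{AR}$ are both purifications of $\rho^A$, Uhlmann's theorem (Fact~\ref{uhlmann_exact}) gives an isometry $W^{B'\to R}$ with $\left(\I^A\otimes W^{B'\to R}\right)\ket{\tau}^{AB'}=\ket{\Omega}^{AR}$. I would then use $W$ to transport the decoder into a hypothesis test: letting $P_{\mathrm{succ}}=\sum_m\ketbra{m}^M\otimes\ketbra{m}^{\widehat{M}}$ be the ``correct decoding'' projector and $\mathcal{D}^\dagger$ the adjoint (unital completely positive) map of the decoding channel, so that $\mathcal{D}^\dagger(P_{\mathrm{succ}})$ is a POVM element on $BME_B$, set
\[
\Lambda^{BB'}:=\left(\I^B\otimes W^{B'\to R}\right)^\dagger\left(\mathcal{D}^\dagger(P_{\mathrm{succ}})\otimes\I^{\mathrm{garbage}}\right)\left(\I^B\otimes W^{B'\to R}\right),
\]
which satisfies $0\le\Lambda^{BB'}\le\I^{BB'}$ because conjugation by an isometry preserves this. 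The completeness side is bookkeeping: transporting $\mathcal{N}(\tau^{AB'})$ through $W$ turns $\Tr[\Lambda^{BB'}\mathcal{N}(\tau^{AB'})]$ into ``apply $\mathcal{N}$ then $\mathcal{D}$ to $\Omega^{AME_B}$ and check $\widehat{M}=M$'', i.e.\ the success probability of $\mathcal{P}$, which is at least $1-\eps/2\ge1-\eps$ by the protocol's error guarantee (condition~(1)). The soundness side uses the two properties above: $W$ sends $\sigma^B\otimes\tau^{B'}$ to $\sigma^B\otimes\Omega^R$, and since $\Omega^{ME_B}=\frac{\I^M}{\abs{M}}\otimes\varphi^{E_B}$ and $\mathcal{D}$ does not act on $M$, one computes $\Tr[\Lambda^{BB'}(\sigma^B\otimes\tau^{B'})]=\frac{1}{\abs{M}}\Tr\!\left[P_{\mathrm{succ}}\bigl(\I^M\otimes\mathcal{D}(\sigma^B\otimes\varphi^{E_B})\bigr)\right]=\frac{1}{\abs{M}}$, the last equality because $\Tr[P_{\mathrm{succ}}(\I^M\otimes\mu^{\widehat{M}})]=\Tr[\mu^{\widehat{M}}]=1$ for every state $\mu^{\widehat{M}}$.

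Thus $\Lambda^{BB'}$ is a feasible test for $D_H^\eps\!\left(\mathcal{N}(\tau^{AB'})\,\Vert\,\sigma^B\otimes\tau^{B'}\right)$, so by the definition of the smooth hypothesis-testing relative entropy $2^{-D_H^\eps(\mathcal{N}(\tau^{AB'})\Vert\sigma^B\otimes\tau^{B'})}\le\Tr[\Lambda^{BB'}(\sigma^B\otimes\tau^{B'})]=2^{-\log\abs{M}}$, which rearranges to $\log\abs{M}\le D_H^\eps\!\left(\mathcal{N}(\tau^{AB'})\,\Vert\,\sigma^B\otimes\tau^{B'}\right)$; taking the infimum over $\sigma^B$ finishes the proof. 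The step I expect to need the most care --- and which is conceptual rather than computational --- is the Uhlmann identification: one has to be comfortable that the ``external'' purifying register $B'$ in the statement may be identified with the purifying systems that the protocol forces on us (the classical message $M$, Bob's half $E_B$ of the entanglement, and the Stinespring garbage), and that after this identification the transported operator $\Lambda^{BB'}$ really does implement ``Bob decodes correctly'' while remaining a valid $\le\I$ test. It is precisely the independence of $M$ and $E_B$ at the encoder input, together with the fact that the partition $\mathcal{S}^{\rho^A}(\mathcal{N},\eps)$ pins the channel input down to $\rho^A$, that makes both sides of the test evaluate cleanly.
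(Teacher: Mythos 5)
Your proof is correct and follows essentially the same route as the paper's: both hinge on Uhlmann-identifying the purifying register $B'$ of $\ket{\tau}^{AB'}$ with the systems $M E_B F$ left over after dilating/purifying the encoder output, on the independence of the $M$ and $E_B$ marginals at the channel input, and on the decoder's success probability supplying the $1-\eps$ completeness condition. The only difference is presentational: where the paper invokes Fact~\ref{fact:closeness} followed by a chain of data-processing inequalities, you unroll those steps into the explicit feasible test $\Lambda^{BB'}$ (which is precisely the operator that chain implicitly produces), and your soundness computation $\Tr\left[\Lambda^{BB'}\left(\sigma^B\otimes\tau^{B'}\right)\right]=\frac{1}{\abs{M}}$ checks out.
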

\begin{proof}
See Appendix \ref{appendix:proof of lem:converse}
\end{proof}

The following corollary follows immediately by setting $\sigma^B=\mathcal{N}(\rho^A)$:

\begin{corollary}\label{corol:usableconverse}
Given the setting of \cref{lem:converse}, we see that for a channel $\mathcal{N}^{A\to B}$ and for all protocols  $\mathcal{P} \in\mathcal{S}^{\rho^A} (\mathcal{N}, \eps)$, it holds that
\[
\begin{aligned}
\log \vert M \vert &\leq D_H^{\eps}\left(\mathcal{N}(\tau^{AB'})~||~\tau^{B'}\otimes \mathcal{N}(\rho^A)^B \right) \\ & {= I_H^\epsilon\left( B:B^\prime\right)_{\mathcal{N}(\tau^{AB^\prime})}}.
\end{aligned}
\]
\end{corollary}
\section{Proof of Proposition \ref{prop:IMstatechainrule}}\label{sec:ChainRules}
In this section, we present the proof of \cref{prop:IMstatechainrule}, which we restate below as a theorem:
\begin{theorem}
Given an  $\mathsf{IM}$-state $\rho^{A_fB_fC}$, it holds that
\[
I_H^{O({\eps})}(A_fB_f:C) ~\geq I_H^{O(\eps^{2})}(A_f: C) + I_H^{O(\eps^{2})}(B_f : A_fC) + O(\log \eps).
\]
\end{theorem}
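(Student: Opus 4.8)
The plan is to carry out the operational strategy sketched in \cref{sec:overviewImstates} in full. First I would fix purifications $\ket{\varphi_1}^{A_fA}$ and $\ket{\varphi_2}^{B_fB}$ of $\rho^{A_f}$ and $\rho^{B_f}$ and invoke \cref{lem:SmChannel} to obtain the $\imextended$ channel $\mathcal{N}^{AB\to C}$ of $(\rho,\varphi_1,\varphi_2)$, so that $\I^{A_fB_f}\otimes\mathcal{N}^{AB\to C}\!\left(\varphi_1^{A_fA}\otimes\varphi_2^{B_fB}\right)=\rho^{A_fB_fC}$. Everything then reduces to bounding, from both sides, the number of classical bits transmissible over $\mathcal{N}$ by entanglement-assisted protocols whose channel-input register $AB$ carries the marginal $\varphi_1^A\otimes\varphi_2^B$.

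\textbf{Achievability (lower bound).} I would let Alice and Bob share $L_1$ copies of $\ket{\varphi_1}^{A_fA}$ and $L_2$ copies of $\ket{\varphi_2}^{B_fB}$, with Bob holding all of the $A_f$- and $B_f$-halves. To transmit a message $(m_1,m_2)\in[L_1]\times[L_2]$, Alice feeds the $m_1$-th $A$-copy and the $m_2$-th $B$-copy into $\mathcal{N}$ and does nothing else; in particular the state she inputs to the channel, averaged over a uniformly random message, is exactly $\varphi_1^A\otimes\varphi_2^B$. Bob decodes in two successive rounds, using the position-based / sequential hypothesis-testing decoder of Anshu--Jain--Warsi (Appendix~\ref{sec:AJW}):
\begin{enumerate}
\item From the channel output $C$ and the $L_1$ registers $A_f$, Bob identifies $m_1$. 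The key point is that, because $\rho^{A_fB_f}=\rho^{A_f}\otimes\rho^{B_f}$, the ``correct-position'' pair $\big((A_f)_{m_1},C\big)$ is in the state $\Tr_{B_f}\!\left[\rho^{A_fB_fC}\right]=\rho^{A_fC}$, whereas every ``wrong-position'' pair is in $\varphi_1^{A_f}\otimes\rho^{C}=\rho^{A_f}\otimes\rho^{C}$; hence this round succeeds provided $\log L_1\lesssim I_H^{O(\eps^2)}(A_f:C)_\rho-O(\log\tfrac1\eps)$. By the Gentle Measurement Lemma (\cref{fact:gentle_measurement}) this round disturbs the global state by only $O(\eps)$ in trace distance.
\item Conditioned on having learned $m_1$, Bob then identifies $m_2$ from $C$, the register $(A_f)_{m_1}$, and the $L_2$ registers $B_f$: now the correct triple $\big((A_f)_{m_1},(B_f)_{m_2},C\big)$ is in the state $\I^{A_fB_f}\otimes\mathcal{N}(\varphi_1^{A_fA}\otimes\varphi_2^{B_fB})=\rho^{A_fB_fC}$, whereas every wrong triple is in $\rho^{B_f}\otimes\rho^{A_fC}$; hence this round succeeds provided $\log L_2\lesssim I_H^{O(\eps^2)}(B_f:A_fC)_\rho-O(\log\tfrac1\eps)$.
\end{enumerate}
A union bound over the two rounds together with the gentle-measurement error keeps the overall error $O(\eps)$, so for suitable $L_1,L_2$ we obtain a protocol in $\mathcal{S}^{\varphi_1^A\otimes\varphi_2^B}(\mathcal{N},O(\eps))$ with $\log(L_1L_2)\ge I_H^{O(\eps^2)}(A_f:C)_\rho+I_H^{O(\eps^2)}(B_f:A_fC)_\rho+O(\log\eps)$.

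\textbf{Converse and conclusion.} Since the protocol above lies in $\mathcal{S}^{\varphi_1^A\otimes\varphi_2^B}(\mathcal{N},O(\eps))$, I would apply \cref{corol:usableconverse} with the purification $\ket{\tau}^{AB'}\gets\ket{\varphi_1}^{AA_f}\ket{\varphi_2}^{BB_f}$ of $\varphi_1^A\otimes\varphi_2^B$; the defining property of the $\imextended$ channel sends this purification precisely to $\rho^{A_fB_fC}$, so $\log(L_1L_2)\le I_H^{O(\eps)}\!\left(A_fB_f:C\right)_{\rho}$. Chaining this with the achievability bound gives $I_H^{O(\eps)}(A_fB_f:C)\ge I_H^{O(\eps^2)}(A_f:C)+I_H^{O(\eps^2)}(B_f:A_fC)+O(\log\eps)$, which is the claim.

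\textbf{Main obstacle.} The channel construction and the converse are essentially bookkeeping once \cref{lem:SmChannel} and \cref{corol:usableconverse} are in hand; the real work is the achievability. One must (i) verify that the wrong-position marginals are exactly the product states $\rho^{A_f}\otimes\rho^C$ and $\rho^{B_f}\otimes\rho^{A_fC}$ appearing in $I_H(A_f:C)_\rho$ and $I_H(B_f:A_fC)_\rho$ --- this is precisely where the $\mathsf{IM}$ hypothesis $\rho^{A_fB_f}=\rho^{A_f}\otimes\rho^{B_f}$ enters --- and (ii) propagate the single smoothing parameter $\eps$ on the left, through the two-round sequential decoder and the gentle-measurement step, into the $O(\eps^2)$ smoothing and the additive $O(\log\eps)$ penalty on the right. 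I expect step (ii), the careful error and parameter accounting across the two decoding rounds, to be the most delicate part.
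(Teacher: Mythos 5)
Your proposal is correct and follows essentially the same route as the paper: construct the $\imextended$ channel via \cref{lem:SmChannel}, run the Anshu--Jain--Warsi position-based protocol with successive-cancellation decoding (first round treating the $B$-input as noise, second round using the decoded $A_f$ register as side information, with the Gentle Measurement Lemma controlling the disturbance between rounds), and pit the resulting rate against the partition-wise converse of \cref{corol:usableconverse} evaluated on the purification $\ket{\varphi_1}^{AA_f}\ket{\varphi_2}^{BB_f}$. Your parameter accounting (smoothing $O(\eps^2)$ in the testers yielding an overall error $O(\eps)$ after the gentle-measurement step) matches the paper's, which obtains error $28\sqrt{\eps}$ and then substitutes $\eps\leftarrow\eps^2$.
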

\begin{proof}
First, define $\mathcal{N}^{AB\to C}$ to be the $\mathsf{IM}$-extended channel of the triple
\[
\left(\rho^{A_fB_fC}, \varphi_1^{A_fA}, \varphi_2^{B_fB}\right)
\]
where $\varphi_1^{A_fA}$ and $\varphi_2^{B_fB}$ are purifications of the states $\rho^{A_f}$ and $\rho^{B_f}$ respectively. The existence of this channel is guaranteed by \cref{lem:SmChannel}. Recall from \cref{sec:setup} that 
$
\mathcal{S}(\mathcal{N}, \delta)$
denotes the set of all those protocol $\mathcal{P}$ which make an (average) error at most $\delta$ while sending classical messages through the channel $\mathcal{N}^{AB\to C}$, with the help of shared entanglement. Note that now our channel takes as input the states in the bipartite system $AB$ and sends the output to the system $C$. Thus, the description of any protocol $\mathcal{P}$ for this channel will be given by the tuple:
\[
\left(M, \mathcal{N}, \mathcal{E}^{M_{AB}E_{AB}\to AB}, \mathcal{D}^{E_CC\to \widehat{M}_{AB}}, \ket{\varphi}^{E_{AB}E_C}\right)
\]
It is important to note that the above description does not treat the two systems $AB$ as belonging to two different senders. This allows us to bound the rate at which \emph{any} protocol can send classical messages through $\mathcal{N}$, and not just those protocols which treat $A$ and $B$ as belonging to two spatially separated senders. Also, recall that 
$
\mathcal{S}^{\varphi_1^{A}\otimes \varphi_2^{B}}(\mathcal{N},\delta)
$
denotes that subset of $\mathcal{S}(\mathcal{N},\delta)$ which contains the protocols $\mathcal{P}$ for which the state created by the encoder $\mathcal{E}^{M_{AB}E_{AB}\to AB}$ on the system $AB$ is $
\varphi_1^A\otimes \varphi_2^B $.
Then, by \cref{lem:converse} and \cref{corol:usableconverse}, we know that, for \emph{all} protocols in the set $\mathcal{S}^{\varphi_1^{A}\otimes \varphi_2^{B}}(\mathcal{N},\delta)$ it holds that
\[
\begin{aligned}
\log M \leq &~ D_H^{\delta}\left(\mathcal{N}^{AB\to C}\left(\varphi_1^{AA_f}\otimes\varphi_2^{BB_f}\right)~\Big|\Big|\right. \\ &~\left.\varphi_1^{A_f}\otimes \varphi_2^{B_f}\otimes \mathcal{N}\left(\varphi_1^{A}\otimes \varphi_2^{B}\right)\right) \\
= &~ D_H^{\delta}\left(\rho^{A_fB_fC}~||~\rho^{A_f}\otimes \rho^{B_f}\otimes \mathcal{N}\left(\varphi_1^{A}\otimes \varphi_2^{B}\right)\right) \\
= &~D_H^{\delta}\left(\rho^{A_fB_fC}~||~\rho^{A_fB_f}\otimes \rho^C\right) \\
= &~I_H^{\delta}(A_fB_f:C)
\end{aligned}
\]
We will now exhibit a protocol for sending classical information with entanglement assistance through the channel $\mathcal{N}$ which achieves the rate
\[
I_H^{\eps^{\prime}}(A_f: C) + I_H^{\eps^{\prime \prime}}(B_f : A_fC) + O(\log \eps)
\]
and also satisfies the invariant that the state the encoder of this protocol creates at the input to the channel, averaged over all messages, is
\[
\varphi_1^{A}\otimes \varphi_2^{B}
\]
This protocol makes an error $O(\sqrt{\eps})$. Thus, by setting 
\[
\delta\gets O(\sqrt{\eps})
\]
and noticing that this protocol belongs to the set
\[
\mathcal{S}^{\varphi_1^{A}\otimes \varphi_2^{B}}(\mathcal{N}, O(\sqrt{\eps})),
\]
we will conclude that,
\begin{align*}
I_H^{O(\sqrt{\eps})}(A_fB_f:C) &~\geq I_H^{O(\eps)}(A_f: C) + I_H^{O(\eps)}(B_f:A_fC) + O(\log \eps).
\end{align*}

Replacing $\epsilon \leftarrow\sqrt{\epsilon}$ gives the expression in the required form.
\\ \vspace{2mm} \\
\noindent \textbf{The Protocol} 

\vspace{1mm} 
To describe the protocol, it will be easier to consider two senders Alice and Bob who have access to the systems $A$ and $B$ of the channel $\mathcal{N}$ respectively. We also refer to the receiver as Charlie. Set
\[
\begin{aligned}
R_1 &\gets I_H^{\eps^{\prime}}(A_f:C) + O(\log \eps) \\
R_2 &\gets I_H^{\eps^{\prime\prime}}(B_f:A_fC)+O(\log \eps)
\end{aligned}
\]
Resources:
\begin{enumerate}
    \item Alice possesses a set of messages $[M]$ of size, where $M=2^{R_1}$.
  
    \item She also shares $2^{R_1}$ copies of the state $\ket{\varphi_1}^{AA_f}$ with Charlie:
    \[
    \ket{\varphi_1}^{E_{A_1}E_{C_1}}    \ket{\varphi_1}^{E_{A_2}E_{C_2}}\ldots     \ket{\varphi_1}^{E_{A_{2^{R_1}}}E_{C_{2^{R_1}}}}
    \]
    where 
    \[
    A\equiv E_{A_i}\textup{ and }A_f\equiv E_{C_i}
    \]
    for all $i$.
    \item Bob possesses a set of messages $[N]$ of size
    \[
    N\gets 2^{R_2}.
    \]
    \item He also shares $2^{R_2}$ copies of the state $\ket{\varphi_2}^{BB_f}$ with Charlie:
    \[
    \ket{\varphi_2}^{F_{B_1}F_{C_1}}    \ket{\varphi_2}^{F_{B_2}F_{C_2}}\ldots     \ket{\varphi_2}^{F_{B_{2^{R_2}}}F_{C_{2^{R_2}}}}
    \]
    where 
    \[
    B\equiv F_{B_j}\textup{ and }B_f\equiv F_{C_j}
    \]
    for all $j$.
\end{enumerate}
To describe the protocol in terms of the notation that we defined in \cref{EA_Classical_Setup}, consider the following assignments:\\
\[
\psi^{MM_{AB}}\gets \sum\limits_{\substack{i\in [M] \\ j\in [N]}}\frac{1}{MN} \ketbra{m,n}^{MN}\otimes \ketbra{m,n}^{M_AN_B}.
\]
and
\[
\ket{\varphi}^{E_{AB}E_C}\gets \left(\bigotimes\limits_{i\in [M]} \ket{\varphi_1}^{E_{A_i}E_{C_i}}\right)~\otimes~  \left(\bigotimes\limits_{j\in [N]} \ket{\varphi_2}^{F_{B_j}F_{C_j}}\right).
\]
\vspace{1mm}\\
One should also note that the encoder of the protocol $\mathcal{E}$ acts on the systems:
\[
M_AN_BE_{A_1}\ldots E_{A_{M}}F_{B_1}\ldots F_{B_{N}} \to A
\]
and the decoder $\mathcal{D}$ acts on
\[
CE_{C_1}\ldots E_{C_M}F_{C_1}\ldots F_{C_N}\to \widehat{M}\widehat{N} .
\]

We will now give a brief and informal overview of the design of the encoder $\mathcal{E}$ and the decoder $\mathcal{D}$. A detailed description along with the error analysis is provided in Appendix \ref{sec:multiparty} and \ref{sec:decodingProcedure}.
\begin{enumerate}
    \item To send the message $m\in [M]$, Alice inputs the contents of the register $E_{A_m}$ into the system $A$.
    \item To send the message $n\in [N]$, Bob inputs the contents of the register $F_{B_n}$ into the system $B$.
    \item To decode, Charlie first disregards the input from Bob as noise and decodes only for Alice.
    \item Having successfully decoded Alice's message, Charlie then uses this as side information to decode Bob's message at a higher rate.
\end{enumerate}
It is not hard to see that, for any the channel $\mathcal{N}$ from Alice to Charlie, while averaging over Bob's input can be considered to be:
\[
\mathcal{N}_0^{A\to C}\left(\cdot\right)\coloneqq\mathcal{N}^{AB\to C} \left(\cdot \otimes \varphi_2^{B}\right).
\]
With an analysis similar to $\mathcal{N}_0$, we can show that the rate of communication for $\mathcal{N}_1$ is
\begin{equation} \label{eq:R_1}
    I_H^{\eps^{\prime}}(A_f : C)+O(\log \eps).
\end{equation}
Since Anshu-Jain-Warsi protocol decodes correct $m$ with high probability, we can assume that Charlie knows $m$ while decoding Bob's message.
In other words, Charlie will decode Bob's message conditioned on Alice's message being $m$.
As usual, the actual state in the protocol may differ from the conditioned state, but the gentle measurement lemma guarantees that these states are not far, and the $\mathcal{L}_1$ distance between them can be consumed in the overall error of the protocol.
Since we assume that Alice's message was $m$, we define a new channel for analysis

\[
\mathcal{N}_1^{B\to C E_{C_m} }(\cdot )\coloneqq \mathcal{N}^{AB\to C}(\varphi_1^{AE_{C_m}}\otimes \cdot).
\]
Since the system $E_{C_{m}}\equiv A_f$ for all $m$, it holds that, conditioned on correct decoding for Alice, the rate at which Bob can communicate with Charlie is given by
\begin{equation} \label{eq:R_2}
    I_H^{\eps^{\prime\prime}}(B_f: CA_f)+O(\log \eps).
\end{equation}
Thus, the total rate of the protocol is given by adding expressions \eqref{eq:R_1} and \eqref{eq:R_2}, which is equal to $R_1+R_2$ by our choice.
This completes our proof sketch that there exists a strategy for achieving rates $R_1 + R_2$.
\end{proof}
\section{Chain Rules for General Quantum States}\label{sec:overviewfullthm}
In this section, we will introduce the ideas required to prove \cref{thm:mainthmintro}. We formally restate the theorem below:
\begin{theorem}\label{thm:formalmain}
Given $\eps>0$ and a tripartite state $\rho^{A_fB_fC}$, it holds that 
\[
I_H^{\eps}(A_fB_f:C) \geq I_H^{O(\eps^4)}(A_f:C)+I_H^{O(\eps^4)}(B_f:A_fC)-I_{\max}(A_f:B_f)-\log (1-O(\eps^{1/4}))-O(1).
\]
\end{theorem}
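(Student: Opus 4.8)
The plan is to reduce the general case to the $\mathsf{IM}$-state case by "splitting off" the correlations between $A_f$ and $B_f$ into an auxiliary register. Concretely, given an arbitrary $\rho^{A_fB_fC}$, I would first apply a convex-split / flattening type argument: since $I_{\max}(A_f:B_f)_\rho = \lambda$ means $\rho^{A_fB_f} \le 2^\lambda\, \rho^{A_f}\otimes\rho^{B_f}$, one can write (up to $\eps$-smoothing in purified distance) the marginal $\rho^{A_fB_f}$ as arising from a state on $A_fB_fD$ whose $A_fB_f$-marginal, \emph{conditioned on an appropriate event / after an appropriate measurement on $D$ of outcome-dimension roughly $2^{O(\eps\text{-smoothed})I_{\max}(A_f:B_f)}$}, becomes a product state. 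The idea is that the $\imextended$-channel construction of \cref{lem:SmChannel} can then be run on this "$\mathsf{IM}$-ified" state, with the register $D$ handed to one of the senders (say Alice) as extra side information that she must also transmit or pre-share.

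Then I would \textbf{run the partitioning machinery of \cref{sec:IMstatesandRedcution} essentially verbatim}: build a channel $\mathcal{N}^{AB\to C}$ (now with $A$ absorbing the purifying system of $\rho^{A_f}$ \emph{together with} the extra $D$ register) such that $\mathcal{N}$ applied to a product input $\varphi_1\otimes\varphi_2$ reproduces the smoothed version of $\rho^{A_fB_fC}$, invoke \cref{corol:usableconverse} to get $\log|M| \le I_H^{\eps'}(A_fB_f:C)_\rho$ (plus a smoothing-error loss), and then exhibit a successive-cancellation coding strategy: Alice sends her message at rate $I_H(A_f:C)$ using the first leg (the channel with Bob's input averaged out), Charlie decodes it, and then Bob sends at rate $I_H(B_f:A_fC)$ conditioned on Alice's message — exactly as in the proof of \cref{prop:IMstatechainrule}. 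The new feature is that creating the product input $\varphi_1\otimes\varphi_2$ at the channel input now \emph{costs} Alice roughly $I_{\max}(A_f:B_f)$ qubits/bits of overhead (she must effectively undo the correlation she introduced via $D$), which is why this term appears with a negative sign on the right-hand side; the book-keeping of this overhead against the achievable rate is what produces the $-I_{\max}(A_f:B_f)$ correction.

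\textbf{The main obstacle} I anticipate is exactly the one flagged in the first remark of the paper: the information-spectrum / smoothing projector $\Pi_s^{A_fC}$ that naturally appears does not commute with $\rho^{A_fC}$, so one cannot cleanly "conditionally flatten" $A_f$ against $B_f$ while keeping the $A_fC$ side of things intact. My plan to get around this is to \emph{not} try to flatten on the state side at all, but to do the splitting entirely at the level of the channel and the encoder: the extra register $D$ is chosen so that $\rho^{A_f}\otimes\rho^{B_f}$ (a genuine product state, hence trivially an $\mathsf{IM}$ marginal) is what gets fed into the channel, and all the non-commuting mess is pushed into the \emph{definition} of $\mathcal{N}$ and into the error term. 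This is also where the worse smoothing parameters ($O(\eps^4)$ rather than $O(\eps^2)$) and the somewhat ugly $-\log(1-O(\eps^{1/4}))$ term come from — each gentle-measurement step (decoding Alice's message, then the conditioning step, then the extraction of $D$) costs a $\sqrt{\cdot}$ in the error, and we pay for three or four such steps rather than the two in \cref{prop:IMstatechainrule}. The remaining technical work is then routine: carefully track which purifications and which $\eps$-balls are used at each stage, verify the encoder produces the required product state at the channel input (so that the protocol genuinely lands in $\mathcal{S}^{\varphi_1\otimes\varphi_2}(\mathcal{N},\delta)$), and assemble the inequality by pitting \cref{corol:usableconverse} against the achievable rate.
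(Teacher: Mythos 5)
Your high-level instinct is the right one --- use the operator inequality $\rho^{A_fB_f}\leq 2^{I_{\max}(A_f:B_f)}\rho^{A_f}\otimes\rho^{B_f}$ to reduce to the $\mathsf{IM}$ machinery and pay $I_{\max}(A_f:B_f)$ as overhead --- but the concrete mechanism you propose has a genuine gap. You want a channel $\mathcal{N}^{AB\to C}$ (with $A$ enlarged by $D$) such that $\mathcal{N}$ applied to a \emph{product} input $\varphi_1\otimes\varphi_2$ reproduces $\rho^{A_fB_fC}$ up to smoothing, and you conclude that the protocol lands in $\mathcal{S}^{\varphi_1\otimes\varphi_2}(\mathcal{N},\delta)$. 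This cannot work: the channel acts only on the senders' registers and leaves $A_fB_f$ untouched, so if the resource is a product of a state on $A_fAD$ and a state on $B_fB$, the output marginal on $A_fB_f$ is exactly $\rho^{A_f}\otimes\rho^{B_f}$, which is at constant trace distance from $\rho^{A_fB_f}$ whenever the latter is correlated; no amount of $\eps$-smoothing or cleverness in the definition of $\mathcal{N}$ bridges that. The paper's resolution keeps the channel input \emph{correlated}: the encoder uses quantum rejection sampling on $\frac{1}{\delta}2^{I_{\max}}$ blocks of the product entanglement $\ket{\varphi_1}\otimes\ket{\varphi_2}$ to distill, in some random block $b^*$, the correlated purification $\ket{\phi}^{A_fB_fAB}$ of $\rho^{A_fB_f}$ (via \cref{lem:channelgeneral}), and it is $\phi^{AB}$ --- not $\varphi_1^A\otimes\varphi_2^B$ --- that enters the channel, so the protocol lands in the partition $\mathcal{S}^{\phi^{AB}\otimes\pi^{X_A}}$ and the converse correctly produces $I_H(A_fB_f:C)$ evaluated on $\rho$.

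Second, your outline does not say where the $-I_{\max}(A_f:B_f)$ term actually enters, beyond ``book-keeping of the overhead.'' In the paper it has a precise source: the successful block index $b^*$ is uniform over $\frac{1}{\delta}2^{I_{\max}}$ values (\cref{claim:distribution}) and must be communicated to the receiver, so the protocol runs over the augmented channel $\mathcal{N}\otimes\mathcal{P}^{X_A\to X_B}$ with $\log\abs{X_A}=I_{\max}(A_f:B_f)+\log\frac{1}{\delta}$. One then needs (i) an extension of \cref{corol:usableconverse} to encoders that may abort (\cref{claim:extendedreduction}), since rejection sampling fails with positive probability, and (ii) \cref{claim:IHgeneral}, which shows that the hypothesis-testing converse for the augmented channel splits as $I_H(A_fB_f:C)+\log\abs{X_A}-\log(1-O(\eps^{1/4}))$; moving $\log\abs{X_A}$ to the other side is what yields $-I_{\max}(A_f:B_f)-O(1)$. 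Neither of these two lemmas, nor the need for them, appears in your proposal, and without (ii) in particular you have no way to convert a converse for the augmented resource back into a statement about $I_H^{\eps}(A_fB_f:C)$ alone.
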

First, we will need the concept of \emph{quantum rejection sampling} as introduced in \cite{JRS'05}.

\subsection{Quantum Rejection Sampling}\label{sec:quantumrejectionsampling}
The rejection sampling problem can be framed as follows:
\begin{figure}[H]
\centering
\fbox{\parbox{\textwidth}{
\begin{enumerate}
    \item Consider two distributions $P_X$ and $Q_X$ over some alphabet $\mathcal{X}$, with the assumption that $\textup{supp}(P)\subseteq \textup{supp}(Q)$.
    \item Alice has access to iid samples from the distribution $Q_X$.
    \item The task is for Alice is to output a letter $X_{\textup{output}}$, using only the samples from $Q_X$ and her own private coins, such that $X_{\textup{output}}\sim P_X$.
    \item We also require that Alice uses as few iid samples from $Q_X$ as possible.
\end{enumerate}}}
\caption{Classical Rejection Sampling}
\end{figure}

 It can be shown that Alice can achieve the above task with $2^{D_{\max}(P_X~||~Q_X)}$ many samples on expectation. In this paper we will require the quantum analog of this problem, which can be stated as follows:

\begin{figure}[H]
    \centering
    \fbox{\parbox{\textwidth}{
\begin{enumerate}
       \item Consider two quantum states $\rho^A$ and $\sigma^A$ such that $\textup{supp}(\rho)\subseteq \textup{supp}(\sigma)$.
    \item Alice is provided multiple independent copies of the state $\sigma^A$ along with ancilla registers as workspace.
    \item The task is for her to produce the state $\rho^A$, while using as few copies of $\sigma^A$ as possible.
\end{enumerate}}}
\caption{Quantum Rejection Sampling}
\end{figure}

It can be shown \cite{JRS'05} that the above task can be achieved with $2^{D_{\max}(\rho^A~||~\sigma^A)}$ many copies of the state $\sigma^A$, on expectation. 
\begin{remark}
In fact, if Alice can tolerate some error in the state that she outputs, in the sense that she creates a state $\rho^{\prime A}\overset{\eps}{\approx}\rho^A$, then the task can be achieved with $\frac{1}{\eps}\cdot 2^{D_{\max}^{\eps}(\rho^A~||~\sigma^A)}$ copies of $\sigma^A$ on expectation. However, due to the nature of our protocol, we will require the exact version of this protocol, which requires more copies of $\rho^A$ to work.
\end{remark}
The way the protocol works is as follows:
\begin{enumerate}
    \item Alice possesses multiple iid copies of $\sigma^A$.
    \item By definition, it holds that
    \[
    \rho^A \leq 2^{D_{\max}^{\eps}(\rho~||~\sigma)}\sigma^A
    \]
    which implies that there exists a quantum state $\tau^A$ such that
    \[
    \sigma^A = \frac{1}{2^{D_{\max}}} \rho^A +\left(1-\frac{1}{2^{D_{\max}}}\right) \tau^A
    \]
    where in the above we used $D_{\max}$ as a shorthand for $D_{\max}(\rho^A~||~\sigma^A)$.
    \item Alice uses two registers $R$ and $Q$ to produce a certain purification of $\sigma^A$. Here, $Q$ will be a single qubit register:
    \[
    \ket{\sigma}^{ARQ}\coloneqq \sqrt{\frac{1}{2^{D_{\max}}}}\ket{\rho}^{AR}\ket{0}^Q+\sqrt{\left(1-\frac{1}{2^{D_{\max}}}\right) }\ket{\tau}^{AR}\ket{1}^Q
    \]
    where $\ket{\rho}^{AR}$ and $\ket{\tau}^{AR}$ are purifications of $\rho^A$ and $\tau^A$.
    \item Alice performs this purification for a large number of copies of $\sigma^A$.
    \item Alice then measures the $Q$ register in the computational basis. She gets $0$ with probability $1/2^{D_{\max}}$.
    \item Discarding the system $R$ completes the protocol.
\end{enumerate}

It is easy to see that the protocol detailed above requires $2^{D_{\max}}$ many copies of $\sigma^A$ to succeed on expectation. We will require this idea in the following sections.

\subsection{The Channel $\mathcal{N}^{AB\to C}$ for General States}

The Quantum Rejection Sampling protocol gives us a hint as to how we might go about defining a channel $\mathcal{N}^{AB\to C}$ and some state $\ket{\phi}^{A_fB_fAB}$ such that 
\[
\mathcal{N}^{AB\to C} \left(\phi^{A_fB_fAB} \right) = \rho^{A_fB_fC}
\]
for some fixed $\rho^{A_fB_fC}$. The idea is as follows:
\begin{enumerate}
    \item Consider the marginals $\rho^{A_f}$ and $\rho^{B_f}$ of $\rho^{A_fB_fC}$ and their purifications $\ket{\varphi_1}^{AA_f}$ and $\ket{\varphi_2}^{BB_f}$ as before.
    \item Let Alice have access to the $A$ and $B$ systems of multiple iid copies of $\ket{\varphi_1}^{AA_f}\ket{\varphi_2}^{BB_f}$.
    \item Recall that by definition,
    \[
    \rho^{A_fB_f}\leq 2^{I_{\max}(A_f:B_f)}\rho^{A_f}\otimes \rho^{B_f}
    \]
    where 
    \[
    I_{\max}(A_f:B_f)= D_{\max}(\rho^{A_fB_f}~||~\rho^{A_f}\otimes \rho^{B_f}).
    \]
    We will use the shorthand \imax~ to refer to $I_{\max}(A_f:B_f)$ from here onward.
    \item Consider then, the purification $\ket{\varphi}^{AA_fBB_fQ}$ of $\rho^{A_f}\otimes \rho^{B_f}$:
    \[
    \ket{\varphi}^{AA_fBB_fQ}\coloneqq \sqrt{\frac{1}{2^{\imax}}}\ket{\phi}^{AA_fBB_f}\ket{0}^Q+\sqrt{\left(1-\frac{1}{2^{\imax}}\right)}\ket{\tau}^{AA_fBB_f}\ket{1}^Q
    \]
    where $\ket{\phi}^{AA_fBB_f}$ is some purification of $\rho^{A_fB_f}$.
    \item Since Alice possesses the $A$ and $B$ systems of the state $\ket{\varphi_1}^{AA_f}\ket{\varphi_2}^{BB_f}$, she can use the Uhlmann isometry $W^{AB \to ABQ}$ to create the state $\ket{\varphi}^{AA_fBB_fQ}$. She does for many copies of the states that she possesses.
    \item Now Alice measures the $Q$ register for each copy of the state $\ket{\varphi}^{AA_fBB_fQ}$. On expectation, she receives a $0$ outcome after $2^{\imax}$ many measurements.
    \item Now, as before, consider an arbitrary purification $\ket{\rho}^{A_fB_fCE}$ of $\rho^{A_fB_fC}$. Then, there exists an Uhlmann isometry $V^{AB\to CE}$ such that
    \[
    V^{AB\to CE}\ket{\phi}^{AA_fBB_f}= \ket{\rho}^{A_fB_fCE}
    \]
\item Composing the trace out operation on the system $E$ with $V$ gives us the channel $\mathcal{N}^{AB\to C}$.
\end{enumerate}

The above discussion implies the following lemma:
\begin{lemma}\label{lem:channelgeneral}
Given any quantum state $\rho^{A_fB_fC}$, and an arbitrary purification $\ket{\phi}^{A_fB_fAB}$ of the state $\rho^{A_fB_f}$, there exists a channel $\mathcal{N}^{AB\to C}$ such that
\[
\I^{A_fB_f}\otimes \mathcal{N}^{AB\to C}\left(\phi^{A_fB_fAB}\right)=\rho^{A_fB_fC}.
\]
\end{lemma}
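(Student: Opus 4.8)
The plan is to prove this as an essentially immediate consequence of Uhlmann's theorem (Fact~\ref{uhlmann_exact}), with $A_fB_f$ playing the role of the fixed reference system; the rejection-sampling discussion preceding the lemma then serves as a concrete operational realisation of the Uhlmann isometry, but is not needed for the bare existence claim. First I would fix an arbitrary purification $\ket{\rho}^{A_fB_fCE}$ of $\rho^{A_fB_fC}$, choosing the purifying register $E$ large enough that $\dim \mathcal{H}_{CE} \geq \dim \mathcal{H}_{AB}$ (this only enlarges $E$ and is always possible). Since $\Tr_{CE}\left[\rho^{A_fB_fCE}\right] = \Tr_C\left[\rho^{A_fB_fC}\right] = \rho^{A_fB_f}$, both $\ket{\phi}^{A_fB_fAB}$ and $\ket{\rho}^{A_fB_fCE}$ are purifications of the \emph{same} state $\rho^{A_fB_f}$ on the reference system $A_fB_f$.

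Next, by Uhlmann's theorem applied with reference system $A_fB_f$, there is an isometry from a subspace of $\mathcal{H}_{AB}$ onto a subspace of $\mathcal{H}_{CE}$ carrying $\ket{\phi}^{A_fB_fAB}$ to $\ket{\rho}^{A_fB_fCE}$; because we arranged $\dim \mathcal{H}_{AB} \leq \dim \mathcal{H}_{CE}$, this partial isometry extends to a genuine isometry $V^{AB\to CE}$ defined on all of $\mathcal{H}_{AB}$, and the extension (which only acts off the support of $\phi^{AB}$) does not change its action on $\ket{\phi}^{A_fB_fAB}$. I would then define $\mathcal{N}^{AB\to C} \coloneqq \Tr_E \circ V^{AB\to CE}$, which is a legitimate CPTP map, and apply $\Tr_E$ to both sides of $\I^{A_fB_f}\otimes V^{AB\to CE}\left(\phi^{A_fB_fAB}\right) = \rho^{A_fB_fCE}$ to obtain $\I^{A_fB_f}\otimes \mathcal{N}^{AB\to C}\left(\phi^{A_fB_fAB}\right) = \Tr_E\left[\rho^{A_fB_fCE}\right] = \rho^{A_fB_fC}$, which is exactly the claim.

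There is no serious obstacle here: the statement is a clean existence claim and Uhlmann does all the work. The only point requiring a little care is the dimension bookkeeping needed to promote the partial Uhlmann isometry to a full isometry, so that $\Tr_E\circ V$ is a bona fide channel. It is also worth flagging, for use later in the paper, that the rejection-sampling discussion above pins down a \emph{specific}, operationally meaningful choice of $V$ — obtained by composing the Uhlmann isometry $W^{AB\to ABQ}$ that prepares $\ket{\varphi}^{AA_fBB_fQ}$ from the resource state $\ket{\varphi_1}^{AA_f}\ket{\varphi_2}^{BB_f}$ with the Uhlmann isometry sending $\ket{\phi}^{AA_fBB_f}$ to $\ket{\rho}^{A_fB_fCE}$ — and that this explicit form, while irrelevant to the existence statement, is the version that matters when we analyse the entanglement-assisted communication protocol over $\mathcal{N}$.
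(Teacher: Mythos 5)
Your proof is correct and follows essentially the same route as the paper: both $\ket{\phi}^{A_fB_fAB}$ and an arbitrary purification $\ket{\rho}^{A_fB_fCE}$ purify the same marginal $\rho^{A_fB_f}$, Uhlmann's theorem (Fact~\ref{uhlmann_exact}) supplies the isometry $V^{AB\to CE}$, and $\mathcal{N}\coloneqq \Tr_E\circ V$ does the job. Your added care about enlarging $E$ so the partial isometry extends to a genuine isometry is a point the paper leaves implicit, but it does not change the argument.
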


\subsection{Towards a Proof of Theorem \ref{thm:formalmain}}

In this section, we informally describe our strategy to prove Theorem \ref{thm:formalmain}. First, we fix a purification $\ket{\phi}^{AA_fBB_f}$ of $\rho^{A_fB_f}$ and consider the channel $\mathcal{N}^{AB\to C}$ given by \cref{lem:channelgeneral}. We construct an entanglement-assisted communication protocol for this channel using the shared entangled states $\ket{\varphi_1}^{AA_f}$ and $\ket{\varphi_2}^{BB_f}$, which are purifications of $\rho^{A_f}$ and $\rho^{B_f}$ respectively. To do this we make use of two things:
\begin{enumerate}
    \item The Quantum Rejection Sampling Algorithm.
    \item A completely dephasing channel $\mathcal{P}^{X_A\to X_B}$ from Alice to Bob, as an extra resource, which can send $I_{\max}(A_f:B_f)+\log \frac{1}{\delta}$ many bits noiselessly. Here $X_A$ and $X_B$ denote Alice and Bob's classical registers respectively, and both are of size $\frac{1}{\delta}2^{\imax}$.
\end{enumerate}
The rationale behind integrating rejection sampling is simple. 
Instead of using $[M]$ copies $\ket{\varphi_1}$, we share $C_0 M$ copies where $C_0$ is suitably chosen.
The whole protocol is now viewed as having $M$ blocks of size $C_0$.
The Rejection sampling then takes us to a candidate index (say $b^*$) with certain desired properties (with high probability).
The number of such coordinates having index $b^*$ is thus $M$ (one in each block).
Restricted over these coordinates, the protocol now has a behavior similar to that of $\mathsf{IM-state}$ protocol.
A similar modification is done for the second step of successive cancellation as well, where $C_0 N$ copies of $\ket{\varphi_2}$ are shared.
The detailed protocol is as follows:
\begin{enumerate}
\captionof{table}{Protocol Modified quantum assisted classical communication (with blocks)}\label{protocol:general}
    \item We arrange Alice's set of messages as $[M]\times [N]$, where
    \[
    \begin{aligned}
    \log M &= I_H^{\eps}(A_f : C) + O(\log \eps) \\
    \log N &= I_H^{\eps}(B_f :A_f  C) + O(\log \eps)    .
    \end{aligned}
    \]
    \item Alice shares $M\times \frac{1}{\delta}\cdot 2^{\imax}$ many copies of the state $\ket{\varphi_1}^{A_fA}$ with Bob. She divides these into $M$ blocks, each of size $\frac{2^{\imax}}{\delta}$.
     Blocks are indexed by $m \in [M]$ and the elements inside a block are further indexed by $b \in \mathcal{B} \coloneqq \left[\frac{2^{\imax}}{\delta}\right]$. 
     The corresponding states
are therefore represented as    \[
    \bigotimes \limits_{i\in [M]}\bigotimes \limits_{b\in \mathcal{B}} \ket{\varphi_1}^{A_{f_{b,i}}A_{b,i}}
    \]
    \item Similarly Alice shares $N\times \frac{1}{\delta}\cdot  2^{\imax}$ many copies of the state $\ket{\varphi_2}^{BB_f}$ with Bob.
    She divides these into $N$ blocks of size $\frac{1}{\delta}\cdot 2^{\imax}$ as well.
    The blocks are indexed analogously.
    The states  then are
    \[
    \bigotimes \limits_{j\in [N]}\bigotimes \limits_{b\in \mathcal{B}} \ket{\varphi_2}^{B_{f_{b,j}}B_{b,j}}
    \]
    \item To send a message $(m,n)$  Alice picks the $m$-th block of $\ket{\varphi_1}$'s 
    \[
    \bigotimes \limits_{b\in \mathcal{B}} \ket{\varphi_1}^{A_{f_{b,m}}A_{b,m}}
    \]
    and the $n$-th block of $\ket{\varphi_2}$'s
    \[
    \bigotimes \limits_{b\in \mathcal{B}} \ket{\varphi_2}^{B_{f_{b,n}}B_{b,n}}.
    \]
    \item For each $b\in \mathcal{B}$, Alice applies the isometry $W^{AB\to ABQ}$ such that
    \[
    W\ket{\varphi_1}^{A_{f_{b,m}}A_{b,m}}\ket{\varphi_2}^{B_{f_{b,n}}B_{b,n}} = \ket{\varphi}^{A_{b,m}A_{f_{b,m}}B_{b,n}B_{f_{b,n}}Q_b}_b
    \]
    where $\ket{\varphi}^{A_{b,m}A_{f_{b,m}}B_{b,n}B_{f_{b,n}}Q_b}_b $ indicates the $b$-th copy of $\ket{\varphi}^{AA_fBB_fQ}$ .
    \item Alice then measures the registers $Q_1Q_2\ldots Q_{\frac{1}{\delta}2^{\imax}}$ in a \emph{random order}, and stops the first time the measurement succeeds. By \cref{claim:highprobability} she gets at least one $0$ outcome with probability at least $1-e^{-1/\delta}$. If none of the measurements succeed, Alice aborts.
    \item Suppose the index on which the measurement succeeded is $b^*$. Then, by \cref{claim:distribution}, the distribution of $b^*$ is uniform. Alice then sends the index $b^*$ through the noiseless completely dephasing channel to Bob. \label{point:distribution}
    \item Alice also puts the contents of the system $A_{b^*,m}B_{b^*,n}$ into the system $AB$, i.e., the systems which are input to the channel. She initialises the $A_{b^*,m}B_{b^*,n}$ registers with some junk state.
    \item Bob can then simply pick out the $b^*$-th element in every message block and repeat the successive cancellation decoding procedure as given in \cref{sec:multiparty} (see \cref{claim:generaldecoding} for details). To be precise, Bob performs his measurements on the collective states
    \[
    \begin{aligned}
    &\bigotimes\limits_{\substack{i\in [M], j\in [N]\\ i\neq m, j\neq n} }\varphi_1^{A_{f_{b^*,i}}}\otimes \varphi_2^{B_{f_{b^*,j}}}\bigotimes \mathcal{N}\left(\phi^{ABA_{f_{b^*,m}}B_{f_{b^*,n}}}_{b^*}\right) \\
    =~& \bigotimes\limits_{\substack{i\in [M], j\in [N]\\ i\neq m, j\neq n} }\rho^{A_{f_{b^*,i}}}\otimes \rho^{B_{f_{b^*,j}}}\bigotimes \rho^{CA_{f_{b^*,m}}B_{f_{b^*,n}}}
    \end{aligned}
    \]
\end{enumerate}
where 
\[
\rho^{CA_{f_{b^*,m}}B_{f_{b^*,n}}}\equiv \rho^{A_fB_fC}.
\]

The above protocol gives us an achievable strategy to send $I_{H}^{\eps}(A_f:C)+I_H^{\eps}(B_f:A_fC)$ many bits via the channel $\mathcal{N}^{AB\to C}\otimes \mathcal{P}^{X_A\to X_B}$, while making an overall decoding error of at most $28\sqrt{\eps}$ (see \cref{claim:generaldecoding} for details). To complete the proof we must find a suitable upper bound as given by \cref{corol:usableconverse}. Before using \cref{corol:usableconverse} however, we should point out some subtle issues:
\begin{enumerate}
    \item The proof of \cref{corol:usableconverse} does not consider encoders that can abort the protocol. However, since the quantum rejection sampling procedure can fail with some non-zero probability, we must ensure that \cref{corol:usableconverse} can be suitably adapted to this case. In \cref{claim:extendedreduction} we extend the proof of \cref{corol:usableconverse} to the case when the encoder can toss its own private coins and may abort the protocol with some probability. We show that the results of \cref{corol:usableconverse} essentially remain unchanged even in this case. \label{point:converse}
    \item Recall that \cref{corol:usableconverse} provides an upper bound on the number of bits any protocol can send through a channel as a function of the state that the encoder of the protocol creates on the input register of the channel, averaged over all messages. For the protocol that we presented in this section, we must find this averaged input state on the system $ABX_A$. By the arguments in \cref{point:converse} above, we are only interested in the case when Alice does not abort. Conditioned on Alice not aborting, the state created on the input system $AB$ of the channel is $\rho^{AB}$. Note that this state is independent of the index $b^*$ on which the measurement succeeded. Also note that, by \cref{point:distribution}, the distribution on the system $X_A$, which is input to the completely dephasing channel $\mathcal{P}^{X_A\to X_B}$ is uniform over the size of $X_A$. Therefore, the state, averaged over all other systems, on the input registers $ABX_A$ of the channel $\mathcal{N}\otimes \mathcal{P}$ is 
    \[
    \phi^{AB}\otimes \frac{\I^{X_A}}{\frac{1}{\delta}2^{\imax}}.
    \]
    where $\phi^{AB}$ is the marginal of the state $\ket{\phi}^{A_fB_fAB}$.\label{point:input}
 \end{enumerate}
For ease of notation, let us refer to Protocol \ref{protocol:general} as $\mathcal{P}_{\textsc{achievable}}$. Let us denote by $E$ the event that the quantum rejection sampling phase succeeds, and define 
\[
\mathcal{P}_{\textsc{achievable}}\vert_{E}
\]
be the execution of Protocol \ref{protocol:general} conditioned on the event that the quantum rejection sampling phase succeeded. Then, by the discussion above, one can see that
\[
\mathcal{P}_{\textsc{achievable}}\vert_{E} \in \mathcal{S}^{\rho^{AB}\otimes \frac{\I^{X_A}}{\frac{1}{\delta}2^{\imax}}}\left(\mathcal{N}^{AB\to C}\otimes \mathcal{P}^{X_A\to X_B}, 28\sqrt{\eps}\right).
\]
Then, by the arguments of \cref{claim:extendedreduction}, we see that the maximum number of bits that can be transmitted by Protocol \ref{protocol:general} is 
\[
D_{\textsc{final}}\coloneqq D_H^{28\sqrt{\eps}}\left(\mathcal{N}^{AB\to C}\otimes \mathcal{P}^{X_A\to X_B}\left(\phi^{A_fB_fAB}\otimes \Phi^{RX_A}\right)~||~\mathcal{N}(\phi^{AB})\otimes \mathcal{P}\left(\pi^{X_A}\right)\otimes \phi^{A_fB_f} \otimes \pi^{R}\right)
\]
where $\Phi^{RX_A}$ is a maximally entangled state on the system $X_A$ and $R\cong X_A$ is the system purifying the maximally mixed state on the system $X_A$. We use the notation $\pi^{X_A}$ to denote the maximally mixed state on $X_A$. Recall that we can make the above statement by \cref{claim:extendedreduction} and \cref{corol:usableconverse}, and the fact that the converse given by those results is true for any arbitrary purification of $\phi^{AB}\otimes \pi^{X_A}$. Note that
\[
\mathcal{P}^{X_A\to X_B}\left(\Phi^{RX_A}\right)=\frac{1}{\abs{X_A}}\sum\limits_{x}\ketbra{x}^{R}\otimes \ketbra{x}^{X_B}.
\]
and
\[
\mathcal{P}^{X_A\to X_B}\left(\pi^{X_A}\right)= \pi^{X_B}.
\]
Then, by \cref{claim:IHgeneral}, we can see that
\begin{align*}
    D_{\textsc{final}}\leq D_H^{\sqrt{28}\eps^{1/4}}\left(\mathcal{N}^{AB\to C}\left(\phi^{ABA_fB_f}\right)~||~\mathcal{N}^{AB\to C}\left(\phi^{AB}\right)\otimes \phi^{A_fB_f}\right)+\log \abs{X_A}-\log (1-O(\eps^{1/4})).
\end{align*}
Now recall that
\begin{align*}
    &\mathcal{N}^{AB\to C}\left(\phi^{ABA_fB_f}\right)= \rho^{A_fB_fC} \\
    &\mathcal{N}^{AB\to C}\left(\phi^{AB}\right)= \rho^{C} \\
    &\phi^{A_fB_f}=\rho^{A_fB_f} \\
    &\log \abs{X_A}= I_{\max}(A_f:B_f)_{\rho^{A_fB_f}}+\log\frac{1}{\delta}.
\end{align*}
Collating all these arguments together, we see that
\begin{align*}
    I_H^{\eps}(A_F:C)+I_H^{\eps}(B_F:A_FC) &\leq D_{\textsc{final}} \\
    &\leq D_H^{\sqrt{28}\eps^{1/4}}(\rho^{A_fB_fC}~||~\rho^{A_fB_f}\otimes \rho^C)+I_{\max}(A_f:B_f)+\log (1-O(\eps^{1/4}))+\log \frac{1}{\delta} \\
    &= I_{H}^{\sqrt{28}\eps^{1/4}}(A_fB_f:C)+I_{\max}(A_f:B_f)+\log (1-O(\eps^{1/4}))+\log \frac{1}{\delta}.
\end{align*}
Finally, we rearrange terms in the above inequality, while setting $\eps\gets \eps^4$ and using the fact that $\log\frac{1}{\delta}=O(1)$. This concludes the proof of \cref{thm:formalmain}.

\section*{Acknowledgement}
We would like to thank Pranab Sen for several helpful discussions and suggestions. SC would like to acknowledge support from the National Research Foundation, including under NRF RF Award No. NRF-NRFF2013-13 and NRF2021-QEP2-02-P05 and the Prime Minister’s Office, Singapore and the Ministry of Education, Singapore, under the Research Centres of Excellence program.
U.~K. is supported by the Singapore Ministry of Education and the National Research Foundation through the core grants of the Centre for Quantum Technologies

\newpage
    \bibliography{chain_rules_ref}
    \bibliographystyle{alpha}
    
 \newpage
    
    \appendix
\section{Proof of Lemma \ref{lem:converse}}\label{appendix:proof of lem:converse}
\begin{proof}
Recall that the state, after Alice encodes, is given by,
\[ \rho^{ME_BA} = \mathcal{E}~\left(\psi^{MM_A}\otimes \varphi^{E_AE_B}\right). \]
where $\psi$ and $\varphi$ are as per protocol~\ref{EA_Classical_Setup}.
\noindent Consider an arbitrary purification $\ket{\hat{\tau}}^{ME_BAF}$ of $\rho^{ME_BA}$ and define
\begin{align}
    &\rho^{ME_BBF}\coloneqq  \mathcal{N}^{A\to B}\left(\ketbra{\hat{\tau}}^{ME_BAF}\right) \\
    &\phi^{MM'}\coloneqq \mathbb{I}_M \otimes \mathcal{D}\left(\rho^{MBE_B}\right). \label{eq:decoding_expression}
\end{align}
Since $\mathcal{P} \in \mathcal{S}^{\rho^A}(\mathcal{N},\eps)$, it holds that
\[
\begin{aligned}
&\Tr_{ME_B}\left( \rho^{ME_BA}\right)= \rho^A\\
&\phi^{M}=\rho^M=\psi^M=\frac{\I}{M} \\
&\begin{aligned}\er(\mathcal{P}) &=\norm{\phi^{MM'} - \frac{1}{M}\sum_m\ketbra{m}^M\otimes \ketbra{m}^{M'}}_1 \\ &\leq \eps.\end{aligned}
\end{aligned}
\]
Consider the projector
\[
\Pi^{MM'}\coloneqq \sum_m\ketbra{m}^M\otimes \ketbra{m}^{M'}.
\]
It is then easy to see that
\[
\Tr\left[\Pi^{MM'}\phi^{MM'}\right]\geq 1-\eps.
\]
Thus, since $\phi^{MM'}$ satisfies all the conditions of \cref{fact:closeness}, we see that
\[
\log \vert M \vert \leq D_H^{\eps}(\phi^{MM'}~||~\phi^M\otimes \mathcal{D}\left(\sigma^B\otimes \rho^{E_B})\right).
\]
Each of the following inequality is a straightforward application of the data processing inequality.

\begin{align*}
\log \vert M \vert &\leq D_H^{\eps}\left(\phi^{MM'}~||~\phi^M\otimes \mathcal{D}\left(\sigma^B\otimes \rho^{E_B}\right)\right) \\
&\leq D_H^{\eps}\left(\rho^{MBE_B}~||~\rho^M\otimes \sigma^B\otimes \rho^{E_B}\right) \\ \intertext{$\ldots$ From Eq.~\eqref{eq:decoding_expression} and $\rho^M= \phi^M$} 
&= D_H^{\eps}\left(\rho^{MBE_B}~||~\rho^{ME_B}\otimes \sigma^B\right) \\
& \leq D_H^{\eps}\left(\rho^{MBE_BF}~||~\rho^{ME_BF}\otimes \sigma^B\right) \\
&= D_H^{\eps}\left(\mathcal{N}(\hat{\tau}^{MAE_BF})~||~\hat{\tau}^{ME_BF}\otimes \sigma^B\right).
\end{align*}
The first equality follows from $\rho^{ME_B}= \rho^M \otimes \rho^B$; whereas the last equality follows from $\hat{\tau}= \mathcal{N}(\rho)$ and noting that $\mathcal{N}$ does not act on any of the registers $ME_BF$.
Thus, $\rho^{ME_BF}= \hat{\tau}^{ME_BF}$.

\noindent Now, one can consider any purification $\tau^{AB^\prime}$ of $\rho^A$.
By Uhlamann's theorem, there exists an isometry $V^{ME_BF \rightarrow B^\prime}$ such that $V^{ME_BF \rightarrow B^\prime} (\hat{\tau}) = \tau$.   
It then follows that
 \[
 \begin{aligned}
 &D_H^{\eps}\left(\mathcal{N}({\tau}^{AB^\prime})~||~{\tau}^{B^\prime}\otimes \sigma^B\right) = ~& D_H^{\eps}\left(\mathcal{N}(\hat{\tau}^{MAE_BF})~||~\hat{\tau}^{ME_BF}\otimes \sigma^B\right) \geq \log \vert M \vert.
 \end{aligned}
 \]
This concludes the proof.
\end{proof}

\section{Achievable Strategies}\label{sec:AJW}
\subsection{The Anshu-Jain-Warsi Protocol}

In this section, we will recall the one-shot entanglement-assisted classical message transmission protocol due to Anshu, Jain and Warsi~\cite{quantum_assistated_classical_anurag}, which we abbreviate as the AJW protocol. The protocol proceeds as follows:

\begin{enumerate}
    \item We are given a point to point channel $\mathcal{N}^{A\to B}$ and a starting state $\psi^{MM_A}$ held by the sender Alice;
    \[ \psi^{MM_A}= \dfrac{1}{2^R}\sum_{m\in [2^R]} \ketbra{m}^M\otimes \ketbra{m}^{M_A}. \]
    \item Sender Alice and receiver Bob share $2^R$ copies of some pure state $\ket{\varphi}^{E_AE_B}$ as follows,
    \[
    \ket{\varphi}^{E_{A_1}E_{B_1}}\ket{\varphi}^{E_{A_2}E_{B_2}}\ldots\ket{\varphi}^{E_{A_{2^R}}E_{B_{2^R}}}
    \]
    where the systems  $E_{A_i}$ belong to Alice and $E_{B_i}$ belong to Bob.
    \item Let Alice prepare some junk state $\ket{\textsc{junk}}^{A}$, where the system $A$ is isomorphic to $E_A$.
    \item The classical message that Alice wants to send is stored in the register $M_A$. 
    Suppose Alice wants to send a message $m$.
    Then Alice swaps systems $E_{A_m}$ and $A$, followed by the action of $\mathcal{N}^{A \rightarrow B}$.
    To be precise, Alice acts the controlled unitary
    \[
    \sum\limits_{m\in [2^R]} \ketbra{m}^{M_A}\otimes \textsc{SWAP}^{E_{A_m}A}
    \]
    on the systems $M_AE_{A_1}E_{A_2}\ldots E_{A_{2^R}}A$.
    And then applies the channel $\mathcal{N}^{A \rightarrow B}$.
    \item Let $\Pi$ be an optimal tester for $\left({I, \epsilon,  \varphi^{AE_B}, \mathcal{N}}\right)$ (with $\varphi^{AE_B} \equiv \varphi^{E_A E_B}$).
    Consider a set (indexed by $m$) of projectors acting jointly on the registers $B E_{B_1} E_{B_2} \ldots E_{B_{2^R}}$;
    \[
    \begin{aligned}
    \Lambda_m = &\I^{E_{B_1}} \otimes \cdots \otimes \I^{E_{B_{m-1}}} \otimes \Pi^{BE_{B_m}} \otimes \I^{E_{B_{m+1}}} \otimes \cdots \otimes \I^{E_{B_{2^R}}}.
    \end{aligned}\]
    Furthermore, using $\lbrace \Lambda_m\rbrace_{m \in [2^R]}$ define a POVM as follows:
    \[\Omega_m= \left(\sum_{i} \Lambda_i \right)^{-\frac{1}{2}} \Lambda_m \left(\sum_{i} \Lambda_i \right)^{-\frac{1}{2}}.\]
    This is the standard \emph{PGM} construction of $\Omega_m$ out of $\Lambda_m$.
    To decode, Bob simply measures with the POVM $\lbrace\Omega_m\rbrace_m$.
    The output of the POVM is represented by $\widehat{M}$ and the state at the end of the protocol is denoted by $\Theta_{\mathsf{END}}$.
\end{enumerate}
The above protocol has an error at most $\epsilon$~\cite{quantum_assistated_classical_anurag}, stated by the fact below.
\begin{fact}~\cite{quantum_assistated_classical_anurag} For any 
\[R \leq  I_{H}^\epsilon (E_B: B)_{\mathcal{N}\left(\ketbra{\varphi}^{AE_B}\right)} - 2 \log\left( \frac{1}{\epsilon}\right),\]
 where  \[\ket{\varphi}^{AE_B} \equiv \ket{\varphi}^{E_A E_B},\]
we have, \[\mathsf{Pr}\left( \widehat{M} \neq m \vert M=m \right)_{\Theta_{\mathsf{END}}} \leq 16 \epsilon.\] 
More formally, for all $i \in [m]$,
let \[\Theta_m = \bigotimes\limits_{i \neq m}\varphi^{E_{B_i}} \otimes \mathcal{N}^{A \rightarrow B} \left( \varphi^{A E_{B_m}}\right).\] 
Then,
$\Tr \left( \Omega_m \Theta_m \right) \geq 1 -16 \epsilon$. \label{fact:Anshu_achievability}
\end{fact}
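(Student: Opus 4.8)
The plan is to analyze the pretty-good (square-root) measurement $\{\Omega_m\}$ directly, via the standard position-based decoding argument. First I would observe that the quantity to bound, $\Pr(\widehat{M}\neq m \mid M=m)$, is exactly $\Tr[(\I-\Omega_m)\Theta_m]$: when Alice swaps $E_{A_m}$ into the channel input $A$ and applies $\mathcal{N}$, the reduced state on Bob's decoding registers $BE_{B_1}\cdots E_{B_{2^R}}$ is precisely $\Theta_m=\bigotimes_{i\neq m}\varphi^{E_{B_i}}\otimes\mathcal{N}^{A\to B}(\varphi^{AE_{B_m}})$, since the $m$-th shared copy alone carries the signal through the channel while every other copy $i\neq m$ is untouched and remains in its marginal $\varphi^{E_{B_i}}$.

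The key technical tool is the Hayashi--Nagaoka operator inequality: for $0\le S\le\I$ and $T\ge 0$,
\[
\I-(S+T)^{-1/2}S(S+T)^{-1/2}\le 2(\I-S)+4T.
\]
I would apply it with $S=\Lambda_m$ and $T=\sum_{i\neq m}\Lambda_i$, which gives $\I-\Omega_m\le 2(\I-\Lambda_m)+4\sum_{i\neq m}\Lambda_i$ and hence
\[
\Tr[(\I-\Omega_m)\Theta_m]\le 2\,\Tr[(\I-\Lambda_m)\Theta_m]+4\sum_{i\neq m}\Tr[\Lambda_i\Theta_m].
\]
This decouples the analysis into one \emph{correct-index} term and a sum of \emph{wrong-index} terms, each evaluable using only the two defining properties of the optimal tester $\Pi$ from \cref{def:optimal_projector}.

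Next I would bound each term. For the correct index, $\Lambda_m$ acts as $\Pi$ on $BE_{B_m}$ and trivially elsewhere, so $\Tr[\Lambda_m\Theta_m]=\Tr[\Pi\,\mathcal{N}(\varphi^{AE_{B_m}})]\ge 1-\epsilon$, giving $\Tr[(\I-\Lambda_m)\Theta_m]\le\epsilon$. For a wrong index $i\neq m$, the marginals of $\Theta_m$ on $B$ and $E_{B_i}$ factor as $\mathcal{N}(\varphi^A)\otimes\varphi^{E_{B_i}}$ (the $B$ register comes from the $m$-th copy, $E_{B_i}$ from the untouched $i$-th copy), so $\Tr[\Lambda_i\Theta_m]=\Tr[\Pi(\mathcal{N}(\varphi^A)\otimes\varphi^{E_B})]=2^{-I_H^{\epsilon}(E_B:B)}$, again straight from \cref{def:optimal_projector}. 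Summing the at most $2^R$ such terms and invoking the rate hypothesis $R\le I_H^{\epsilon}(E_B:B)-2\log(1/\epsilon)$, equivalently $2^R\le\epsilon^2\,2^{I_H^{\epsilon}(E_B:B)}$, yields $4\sum_{i\neq m}\Tr[\Lambda_i\Theta_m]\le 4\epsilon^2$. Combining, $\Tr[(\I-\Omega_m)\Theta_m]\le 2\epsilon+4\epsilon^2\le 6\epsilon$, comfortably within the claimed $16\epsilon$ for $\epsilon\le 1$.

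I expect the only genuinely delicate step to be the Hayashi--Nagaoka reduction: both the correct statement of the operator inequality and the verification that the product structure of $\Theta_m$ across the $i\neq m$ registers is exactly what makes the cross terms factor into the tester overlap $\Tr[\Pi(\mathcal{N}(\varphi^A)\otimes\varphi^{E_B})]$. Once that factorization is in place, the remainder is a direct substitution of the two tester properties and the rate constraint, and the loose constant $16$ absorbs the small slack between $6\epsilon$ and the stated bound.
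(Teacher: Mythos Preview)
Your argument is correct and is precisely the standard position-based decoding analysis via the Hayashi--Nagaoka operator inequality that underlies the cited result of Anshu, Jain and Warsi. The paper itself does not supply a proof of this statement---it is invoked as a fact from \cite{quantum_assistated_classical_anurag}---so your write-up in fact fills in what the paper leaves as a black box, and with a tighter constant ($6\epsilon$ versus the quoted $16\epsilon$) to boot.
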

Before going forward, we will need one additional observation about this protocol, which we state in the claim below.
The proof of the claim is fairly straightforward and follows from construction.
We include it for the sake of completeness. 
\begin{claim}
The state produced by the protocol above on register $A$ (the input register for the channel), averaged over all messages, is $\varphi^{A}$.
\end{claim}
\begin{proof}
Firstly, recall that Alice will act her encoder on the $M_A$ register of the state
\[
\psi^{MM_A}= \dfrac{1}{2^R} \sum_{m\in [2^R]} \ketbra{m}^M\otimes \ketbra{m}^{M_A}
\]
and the systems $E_{A_1}\ldots E_{A_{2^R}}$ of the shared states
\[
 \ket{\varphi}^{E_{A_1}E_{B_1}}\ket{\varphi}^{E_{A_2}E_{B_2}}\ldots\ket{\varphi}^{E_{A_{2^R}}E_{B_{2^R}}}.
\]
It is easy to see that after the encoding, the global state on all systems is as follows:
\[
\begin{aligned}
\frac{1}{2^R}\sum\limits_{m\in [2^R]}&\ketbra{m}^{M}\otimes \ketbra{m}^{M_A}\otimes \left(\ketbra{\varphi}^{AE_{B_m}}\right)\otimes  &\bigotimes\limits_{i\neq m}\ketbra{\varphi}^{E_{A_i}E_{B_i}}\otimes \ketbra{\textsc{junk}}^{E_{A_m}}.
\end{aligned}
\]
Tracing out all the registers except $A$, we see that the marginal on register $A$ is
$
\varphi^A$.
This proves the claim.
\end{proof}

\subsection{A Multi-Party Generalisation}\label{sec:multiparty}
Consider the following scenario:
Given a channel $\mathcal{N}^{AB\to C}$, and pure states $\ket{\varphi_1}^{E_AE_C}$ and $\ket{\varphi_2}^{F_B F_C}$.
Let the sender Alice and receiver Charlie share $2^{R_1}$ copies of $\ket{\varphi_1}$ as
\[
\ket{\varphi_1}^{E_{A_i}E_{C_i}}, \text{ where } i \in \left[ 2^{R_1}\right].
\]
where Alice possesses the registers $E_{A_i}$ and Charlie possesses the systems $E_{C_i}$. Similarly, the second sender Bob and receiver Charlie share $2^{R_2}$ copies of the state $\ket{\varphi_2}^{F_B F_C}$ as
\[
\ket{\varphi_2}^{F_{B_i}F_{C_i}} \text{ where } i \in \left[ 2^{R_2}\right].
\]
where Bob possesses the systems $F_{B_i}$ and Charlie the systems $F_{C_i}$.
To send the message pair $(m,n)\in [2^{R_1}]\times [2^{R_2}]$, Alice and Bob do the following protocol:

\begin{enumerate}
    \item Alice prepares a junk state in the system $A$, as $\ket{\textsc{junk}}^{A}$ and similarly Bob prepares $\ket{\textsc{junk}}^{B}$.
    \item Alice swaps the contents of $A$ with $E_{A_m}$ and Bob swaps the contents of $B$ with $F_{B_n}$.
    These operations can be expressed by the following controlled unitary maps:
\begin{align*}
   \mathsf{Enc}_{A} =   \sum\limits_{m\in [2^{R_1}]} \ketbra{m}^{M_A}\otimes \textsc{SWAP}^{E_{A_m}A}
 \\
    \mathsf{Enc}_B = \sum\limits_{n\in [2^{R_2}]} \ketbra{n}^{M_B}\otimes \textsc{SWAP}^{F_{B_n}B}
\end{align*}
   
    \item The senders then send the systems $AB$ through the channel.
\end{enumerate}

\subsection{The Decoding Procedure} \label{sec:decodingProcedure}
\begin{figure}[hbtp]
    \centering
    \fbox{\parbox{\textwidth}{
    Charlie performs the decoding in two phases:
\begin{enumerate}
    \item \label{item:alice_decoding} First where Charlie decodes Alice's message assuming \emph{nothing} about Bob's message.
In this step, Charlie outputs a candidate $\hat{m}$, for Alice's message. To do this, he uses a POVM $\brak{\Omega_{1,m}}_m$.
\item In the second step, Charlie outputs a candidate message $\hat{n}$ for Bob's message, assuming that Alice had sent $\hat{m}$. 
For the decoder, we need to define two POVMs, one each for outputting $\hat{m}$ and $\hat{n}$. He does this using a POVM $\brak{\Omega_{2,n}}_n$.
\item The POVMs $\brak{\Omega_{1,m}}$ and $\brak{\Omega_{2,n}}$ are defined explicitly later.
\end{enumerate}}}    \caption{Multiparty Decoding}    \label{fig:multiparty_decoding}
\end{figure}

\begin{claim} \label{claim:final_decoding}
    For any \[
    \begin{aligned}
    R_1  &\leq  I_{H}^\epsilon (E_C: C)_{\mathcal{N}\left(\varphi_1^{AE_C}\otimes \varphi_2^{BF_C}\right)} - 2 \log\left( \frac{1}{\epsilon}\right), \\
  R_2  &\leq  I_{H}^\epsilon (F_C: C E_C)_{\mathcal{N}\left(\varphi_1^{AE_C}\otimes \varphi_2^{BF_C}\right)} - 2 \log\left( \frac{1}{\epsilon}\right)
    \end{aligned}
    \]    
    \hspace{2cm}\text{ where } \[ \begin{aligned}
    &\ket{\varphi_1}^{AE_C} \equiv \ket{\varphi}^{E_A E_C}, \\
    &\ket{\varphi_2}^{BE_C} \equiv \ket{\varphi}^{F_B F_C}
    \end{aligned}
    \]
we have, \[\mathsf{Pr}\left[\left(\widehat{M},\widehat{N}\right) \neq \left(m,n\right) \vert \left(M, N\right)=\left(m, n\right) \right] \leq 28  \sqrt{\epsilon}.\] 
\end{claim}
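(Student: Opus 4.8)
\emph{Proof sketch.} The plan is to split the two-sender decoding into two successive single-sender decodings, each controlled by the Anshu--Jain--Warsi analysis (Fact~\ref{fact:Anshu_achievability}), and to glue them together with the Gentle Measurement Lemma (Fact~\ref{fact:gentle_measurement}). First I would pin down the two decoding POVMs: let $\{\Omega_{1,m}\}$ be the PGM decoder of the single-sender AJW protocol for the effective channel $\mathcal{N}_0^{A\to C}(\cdot):=\mathcal{N}^{AB\to C}(\cdot\otimes\varphi_2^{B})$ with shared state $\ket{\varphi_1}$, and, given Charlie's candidate $\widehat M=m$, let $\{\Omega_{2,n}\}$ be the PGM decoder for the effective channel $\mathcal{N}_1^{B\to CE_{C_m}}(\cdot):=\mathcal{N}^{AB\to C}(\varphi_1^{AE_{C_m}}\otimes\cdot)$ with shared state $\ket{\varphi_2}$.

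For the first phase the key observation is that, whatever message $n$ Bob chooses, the reduced state he feeds into $B$ is always $\varphi_2^{B}$, and in the first decoding phase Charlie never touches the registers $F_{C_1},\dots,F_{C_{2^{R_2}}}$; hence, after tracing those out, the state on which Charlie runs $\{\Omega_{1,m}\}$ is exactly $\Theta_m=\bigotimes_{i\neq m}\varphi_1^{E_{C_i}}\otimes\mathcal{N}_0(\varphi_1^{AE_{C_m}})$, which is precisely the state appearing in Fact~\ref{fact:Anshu_achievability}. Since the $E_C C$-marginal of $\mathcal{N}_0(\varphi_1^{AE_C})$ equals that of $\mathcal{N}(\varphi_1^{AE_C}\otimes\varphi_2^{BF_C})$, the hypothesis $R_1\le I_H^{\epsilon}(E_C:C)_{\mathcal{N}(\varphi_1^{AE_C}\otimes\varphi_2^{BF_C})}-2\log(1/\epsilon)$ is exactly what Fact~\ref{fact:Anshu_achievability} requires, so I would conclude $\Pr[\widehat M\neq m]\le 16\epsilon$.

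Next I would move to the second phase. Since the first measurement succeeds with probability at least $1-16\epsilon$ on the true post-channel state $\sigma$, Fact~\ref{fact:gentle_measurement} gives $\norm{\sigma_1-\sigma\otimes\ketbra{m}}_1\le 3\sqrt{16\epsilon}=12\sqrt{\epsilon}$, where $\sigma_1$ is the state after phase~1 with the register recording $\widehat M$ appended. This lets me analyze phase~2 on the undisturbed state $\sigma\otimes\ketbra{m}$, paying $12\sqrt{\epsilon}$ at the end. On $\sigma\otimes\ketbra{m}$ the register $E_{C_m}$ still holds Alice's half of $\ket{\varphi_1}$, so Charlie sees $\bigotimes_{j\neq n}\varphi_2^{F_{C_j}}\otimes\mathcal{N}_1(\varphi_2^{BF_{C_n}})$ on $C$, $E_{C_m}$, $F_{C_1},\dots,F_{C_{2^{R_2}}}$ --- again the state of Fact~\ref{fact:Anshu_achievability}, now for $\mathcal{N}_1$ with ``channel output'' register $CE_{C_m}$. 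As the $F_C CE_C$-marginal of $\mathcal{N}_1(\varphi_2^{BF_C})$ matches that of $\mathcal{N}(\varphi_1^{AE_C}\otimes\varphi_2^{BF_C})$ (identifying $E_{C_m}\equiv E_C$), the hypothesis on $R_2$ is the one Fact~\ref{fact:Anshu_achievability} wants, giving $\Pr[\widehat N\neq n]\le 16\epsilon$ on $\sigma\otimes\ketbra{m}$.

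To finish, I would collate via a union bound: $\Pr[(\widehat M,\widehat N)\neq(m,n)]\le\Pr[\widehat M\neq m]+\Pr[\widehat M=m,\ \widehat N\neq n]$; the first term is $\le16\epsilon$, and for the second I would replace $\sigma_1$ by $\sigma\otimes\ketbra{m}$ (costing at most $12\sqrt{\epsilon}$ in trace distance, since the phase-2 measurement is a channel and cannot increase it) and then bound the result by $\Pr[\widehat N\neq n]_{\sigma\otimes\ketbra{m}}\le16\epsilon$, for a total of $32\epsilon+12\sqrt{\epsilon}$. Assuming without loss of generality that $\epsilon\le1/4$ (otherwise $28\sqrt{\epsilon}>1$ and the claim is vacuous), one has $32\epsilon\le16\sqrt{\epsilon}$, so the total is $\le28\sqrt{\epsilon}$. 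The only delicate point, and the place I expect the real work to be, is the register bookkeeping behind the two ``effective channel'' reductions --- that averaging over Bob's message leaves precisely $\varphi_2^{B}$ at the channel input, and that conditioning on $\widehat M=m$ leaves precisely $\varphi_1^{AE_{C_m}}$ --- together with checking that the induced marginal identities make the rate hypotheses stated in the claim coincide, on the nose, with the hypotheses of Fact~\ref{fact:Anshu_achievability} for $\mathcal{N}_0$ and $\mathcal{N}_1$; everything else is routine.
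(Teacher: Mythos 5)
Your proposal is correct and takes essentially the same route as the paper: PGM decoders built from optimal testers for the effective channels $\mathcal{N}_0^{A\to C}(\cdot)=\mathcal{N}(\cdot\otimes\varphi_2^B)$ and $\mathcal{N}_1^{B\to CE_{C_m}}(\cdot)=\mathcal{N}(\varphi_1^{AE_{C_m}}\otimes\cdot)$, the Gentle Measurement Lemma to pass to the ideal post-phase-1 state at cost $12\sqrt{\epsilon}$, and a triangle/union bound. The only (immaterial) difference is bookkeeping: the paper's ideal state already records $\widehat{M}=m$, so Alice's $16\epsilon$ decoding error is absorbed into the $12\sqrt{\epsilon}$ trace-distance term and the total is $16\epsilon+12\sqrt{\epsilon}$ rather than your $32\epsilon+12\sqrt{\epsilon}$, but both are at most $28\sqrt{\epsilon}$.
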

We defer the proof of this claim to a later point.
The proof directly follows from Claim~\ref{claim:Bob_decoding_correctly} which itself uses Lemma~\ref{claim:Alice_decoding_Correctly} as an intermediate step.
Throughout the analysis, we will now assume that $R_1$ and $R_2$ satisfy the conditions stated by the hypothesis of Claim~\ref{claim:final_decoding}.

The next section is devoted to proving the above claim. We first focus on Charlie's decoding strategy for Alice and then on his decoding strategy for Bob.

\subsubsection{Decoding Alice}

\subsubsection*{Defining the POVM $\brak{\Omega_{1,m}}$:}

Consider \[
    I_H^{\eps}(E_C:C)_{\I^{E_C F_C}\otimes\mathcal{N^{AB \rightarrow C}}\left(\varphi_1^{AE_C}\otimes \varphi_2^{BF_C}\right)} \hspace{2cm}\]

    Let $\Pi_1^{E_CC}$ denote an optimal measurement for the above quantity.
    That is,
 
\begin{align} \label{eq:pi_1_ecc_a}
    2^{- I^{\epsilon}_{H}(E_C:C)_{\mathcal{N}\left(\varphi_1^{AE_C}\otimes \varphi_2^{BF_C}\right)}}     &=   \Tr \left[\Pi_1^{E_CC }~ \left(\mathcal{N}\left( \varphi_1^A \otimes \varphi_2^B \right) \otimes \varphi_1^{E_C} \right)\right] \\    
  \Tr\left[\Pi_1^{EcC}\left(\mathcal{N}(\varphi_1^{AE_C} \otimes \varphi_2^{B})\right)\right] & \geq 1-\eps. \label{eq:pi_1_ecc_b} \end{align}
Let \[
    \begin{aligned}
        \Lambda_{1,m}&= \I^{E_{C_1}} 
        \otimes \cdots \otimes \I^{E_{C_{m-1}}} \otimes  \cdots \otimes\Pi^{CE_{C_m}} \otimes \I^{E_{C_{m+1}}} \otimes  \cdots \otimes \I^{E_{C_{2^{R_1}}}} ,  \\
    \Omega_{1,m} &= \left(\sum_{i} \Lambda_{1,i} \right)^{-\frac{1}{2}} \Lambda_{1,m} \left(\sum_{i} \Lambda_{1,i} \right)^{-\frac{1}{2}}.
    \end{aligned}
\]  
\begin{claim}\label{Claim:N_to_N_0_translation}
    Define a channel $\mathcal{N}_0^{A \rightarrow C} (\sigma^A) := \mathcal{N}^{AB \rightarrow C} (\sigma^A \otimes \varphi_2^B)$.    
    Then, $\Pi_1^{E_CC}$ defined above, is an optimal tester for $(I, \epsilon, \varphi_1^{AE_C}, \mathcal{N}_0)$.
\end{claim}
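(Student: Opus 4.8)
The plan is to unfold \cref{def:optimal_projector} for the pair $(I,\epsilon,\varphi_1^{AE_C},\mathcal{N}_0)$ and verify that the operator $\Pi_1^{E_CC}$ — defined through \eqref{eq:pi_1_ecc_a} and \eqref{eq:pi_1_ecc_b} — satisfies both requirements of an optimal tester, namely feasibility and attainment of the optimum. Recall that by \cref{def:optimal_projector} an optimal tester for $(I,\epsilon,\varphi_1^{AE_C},\mathcal{N}_0)$ is any $0\leq\Pi\leq\I^{E_CC}$ with $\Tr[\Pi\,\mathcal{N}_0(\varphi_1^{AE_C})]\geq 1-\epsilon$ that minimises $\Tr[\Pi\,(\mathcal{N}_0(\varphi_1^A)\otimes\varphi_1^{E_C})]$, the optimal value being $2^{-I_H^{\epsilon}(C:E_C)_{\mathcal{N}_0(\varphi_1^{AE_C})}}$.

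First I would record two elementary identities. Since $\mathcal{N}^{AB\to C}$ acts only on the registers $AB$, tracing out the purifying register $F_C$ commutes with it, so $\mathcal{N}_0(\varphi_1^{AE_C})=\mathcal{N}^{AB\to C}(\varphi_1^{AE_C}\otimes\varphi_2^{B})=\Tr_{F_C}[\mathcal{N}^{AB\to C}(\varphi_1^{AE_C}\otimes\varphi_2^{BF_C})]$, and likewise $\mathcal{N}_0(\varphi_1^A)=\mathcal{N}^{AB\to C}(\varphi_1^A\otimes\varphi_2^{B})$. Second, $I_H^{\epsilon}$ is symmetric in its two arguments and depends only on the reduced state on the two named registers; consequently $I_H^{\epsilon}(E_C:C)_{\mathcal{N}(\varphi_1^{AE_C}\otimes\varphi_2^{BF_C})}=I_H^{\epsilon}(C:E_C)_{\mathcal{N}_0(\varphi_1^{AE_C})}$.

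With these in hand the claim is immediate. Equation~\eqref{eq:pi_1_ecc_b} says $\Tr[\Pi_1^{E_CC}\,\mathcal{N}(\varphi_1^{AE_C}\otimes\varphi_2^{B})]\geq 1-\epsilon$, which by the first identity is precisely the feasibility condition $\Tr[\Pi_1^{E_CC}\,\mathcal{N}_0(\varphi_1^{AE_C})]\geq 1-\epsilon$. For optimality, the right-hand side of \eqref{eq:pi_1_ecc_a} equals $\Tr[\Pi_1^{E_CC}\,(\mathcal{N}_0(\varphi_1^A)\otimes\varphi_1^{E_C})]$, while its left-hand side, by the second identity, equals $2^{-I_H^{\epsilon}(C:E_C)_{\mathcal{N}_0(\varphi_1^{AE_C})}}$, i.e.\ the optimal value of the variational problem in \cref{def:optimal_projector}. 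Hence $\Pi_1^{E_CC}$ is feasible and attains the optimum, so it is an optimal tester for $(I,\epsilon,\varphi_1^{AE_C},\mathcal{N}_0)$. There is no real obstacle here; the only point needing care is the register bookkeeping in the second identity — that deleting the untouched register $F_C$ leaves the reduced state on $E_CC$ (and hence the value of $I_H^{\epsilon}(E_C:C)$) unchanged, and that the symmetry of $I_H^{\epsilon}$ permits reading \cref{def:optimal_projector} with $C$ as the channel-output register and $E_C$ as the reference register.
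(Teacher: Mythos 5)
Your proposal is correct and follows the same route as the paper: the paper's proof simply notes that $\mathcal{N}$ does not act on $F_C$ and declares the claim immediate from \cref{def:optimal_projector} together with \eqref{eq:pi_1_ecc_a} and \eqref{eq:pi_1_ecc_b}, which is exactly the two-part check (feasibility plus attainment of the optimal value) that you carry out explicitly. Your added care about tracing out $F_C$ and the symmetry of $I_H^{\epsilon}$ just fills in the bookkeeping the paper leaves implicit.
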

\begin{proof}
Note that $\mathcal{N}$ does not act on $F_C$.
    The proof then follows directly from definition~\ref{def:optimal_projector}, and equations \eqref{eq:pi_1_ecc_a},\eqref{eq:pi_1_ecc_b}.
\end{proof}
\begin{claim} \label{claim:Alice_decoding_Correctly}
    \[\begin{aligned}
    \Tr \left[\left(\I^{E_{A_1}\ldots E_{A_M}}\otimes \Omega_{1,m}\right)\mathcal{N}^{AB\to C}\left(\varphi_1^{E_{C_m}A}\otimes \varphi_2^{B}\right)\right.  \left.\bigotimes\limits_{i\neq m} {\varphi_1}^{E_{A_i}E_{C_i}}\otimes \textsc{junk}^{E_{A_m}}\right]  \geq 1-16 \epsilon.\end{aligned}\]
\end{claim}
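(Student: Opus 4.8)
\textbf{Proof plan for Claim~\ref{claim:Alice_decoding_Correctly}.}
The plan is to reduce this statement directly to Fact~\ref{fact:Anshu_achievability} (the AJW achievability bound) applied to the ``effective'' point-to-point channel $\mathcal{N}_0^{A\to C}$ obtained by feeding Bob's fixed half $\varphi_2^B$ into $\mathcal{N}^{AB\to C}$. First I would observe that the state appearing inside the trace is exactly the state $\Theta_m$ from Fact~\ref{fact:Anshu_achievability} for the channel $\mathcal{N}_0$ and the shared entangled state $\ket{\varphi_1}^{E_AE_C}$: indeed, tensoring $\varphi_2^B$ onto the $B$-input and then applying $\mathcal{N}$ is precisely $\mathcal{N}_0$ acting on the $A$-input, so $\mathcal{N}^{AB\to C}(\varphi_1^{E_{C_m}A}\otimes\varphi_2^B)=\mathcal{N}_0^{A\to C}(\varphi_1^{E_{C_m}A})$, and the remaining factors $\bigotimes_{i\neq m}\varphi_1^{E_{A_i}E_{C_i}}\otimes\textsc{junk}^{E_{A_m}}$ are the untouched entanglement halves and the junk register, matching the definition of $\Theta_m$ in the AJW setup verbatim (with $E_C$ playing the role of $E_B$ and $C$ the role of $B$).

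Next I would check that the decoding POVM $\{\Omega_{1,m}\}$ is constructed from $\Lambda_{1,m}$ by exactly the PGM recipe used in the AJW protocol, and that the single-copy projector $\Pi_1^{E_CC}$ is an optimal tester for $(I,\epsilon,\varphi_1^{AE_C},\mathcal{N}_0)$ --- but this is precisely the content of Claim~\ref{Claim:N_to_N_0_translation}, which I may invoke. So the hypotheses of Fact~\ref{fact:Anshu_achievability} are met once we know the rate condition holds, i.e.\ that
\[
R_1\leq I_H^{\epsilon}(E_C:C)_{\mathcal{N}_0(\varphi_1^{AE_C})}-2\log\tfrac1\epsilon
= I_H^{\epsilon}(E_C:C)_{\mathcal{N}(\varphi_1^{AE_C}\otimes\varphi_2^{BF_C})}-2\log\tfrac1\epsilon,
\]
where the last equality is again Claim~\ref{Claim:N_to_N_0_translation} (the value of the hypothesis-testing mutual information is unchanged since $\mathcal{N}$ ignores $F_C$ and $\varphi_2^{BF_C}$ is a purification of $\varphi_2^B$). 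This is exactly the standing assumption on $R_1$ stated after Claim~\ref{claim:final_decoding}, so it is available.

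Finally, Fact~\ref{fact:Anshu_achievability} yields $\Tr(\Omega_m\Theta_m)\geq 1-16\epsilon$, which after the identifications above is exactly the claimed inequality
\[
\Tr\!\left[\left(\I^{E_{A_1}\ldots E_{A_M}}\otimes\Omega_{1,m}\right)\mathcal{N}^{AB\to C}\!\left(\varphi_1^{E_{C_m}A}\otimes\varphi_2^{B}\right)\bigotimes_{i\neq m}\varphi_1^{E_{A_i}E_{C_i}}\otimes\textsc{junk}^{E_{A_m}}\right]\geq 1-16\epsilon.
\]
I do not expect a genuine obstacle here: the only thing to be careful about is bookkeeping of which registers are isomorphic to which ($A\equiv E_{A_i}$, $A_f\equiv E_{C_i}$, and $C$/$E_C$ vs.\ $B$/$E_B$), and confirming that the identity factors on $E_{A_i}$ for $i\neq m$ together with the junk register on $E_{A_m}$ reproduce the ``spectator'' part of $\Theta_m$ untouched by the PGM. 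The substantive work has already been isolated into Claim~\ref{Claim:N_to_N_0_translation} and Fact~\ref{fact:Anshu_achievability}; this claim is essentially their composition.
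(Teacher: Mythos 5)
Your proposal is correct and follows essentially the same route as the paper: the paper likewise invokes Claim~\ref{Claim:N_to_N_0_translation} together with Fact~\ref{fact:Anshu_achievability} to get $\Tr(\Omega_{1,m}\Theta_{1,m})\geq 1-16\epsilon$ for $\Theta_{1,m}=\bigotimes_{i\neq m}\varphi_1^{E_{C_i}}\otimes\mathcal{N}^{AB\to C}(\varphi_1^{AE_{C_m}}\otimes\varphi_2^B)$, and then identifies the trace in the claim with $\Tr(\Omega_{1,m}\Theta_{1,m})$ after tracing out the spectator registers $E_{A_i}$ and the junk register.
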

\begin{proof} It follows from Claim~\ref{Claim:N_to_N_0_translation} and Fact~\ref{fact:Anshu_achievability}, that,
$\Tr\left( \Omega_{1,m} \Theta_{1,m} \right) \geq 1- 16 \eps$, where \[
\begin{aligned}
\Theta_{1,m} &= \bigotimes\limits_{i \neq m}\varphi_1^{E_{C_i}} \otimes \mathcal{N}_0^{A \rightarrow C} \left( \varphi_1^{A E_{C_m}}\right)\\ &= \bigotimes\limits_{i \neq m}\varphi_1^{E_{C_i}} \otimes \mathcal{N}^{A B \rightarrow C} \left( \varphi_1^{A E_{C_m}} \otimes \varphi_2^B\right).\end{aligned}\]
The second inequality follows from the definition of $\mathcal{N}_0$.
    Now,
       \begin{align*}
           & \Tr\left[\left(\I^{E_{A_1}\ldots E_{A_M}}\otimes \Omega_{1,m}\right)\mathcal{N}^{AB\to C}\left(\varphi_1^{E_{C_m}A}\otimes \varphi_2^{B}\right)\right.\\ &\left.\bigotimes\limits_{i\neq m}\ketbra{\varphi_1}^{E_{A_i}E_{C_i}}\otimes \textsc{junk}^{E_{A_m}}\right] \\
           = & \Tr \left[ \Omega_{1,m} \bigotimes\limits_{i \neq m}\varphi_1^{E_{C_i}} \otimes \mathcal{N}^{A B \rightarrow C} \left( \varphi_1^{A E_{C_m}} \otimes \varphi_2^B\right) \right]\\
           = & \Tr\left(\Omega_{1,m} \Theta_{1,m}\right) 
           \\  \geq  &\ 1-16\eps. \qedhere
       \end{align*}  
\end{proof}
\begin{lemma} \label{lemma:after_alice_decoding}
    Let $\widehat{\Theta}_1$ be the \emph{(}global\emph{)} state after step ~\ref{item:alice_decoding}\emph{(}in Protocol~\ref{fig:multiparty_decoding}\emph{)} and \[\begin{aligned}
\Theta_{\textsc{ideal}}\coloneqq &\ketbra{m,n}^{MN}\otimes \mathcal{N}^{AB\to C}\left(\varphi_1^{E_{C_m}A}\otimes \varphi_2^{F_{C_n}B}\right)\otimes \ketbra{m}^{\widehat{M}} \bigotimes\limits_{i\neq m}{\varphi_1}^{E_{A_i}E_{C_i}}\bigotimes\limits_{j\neq n} {\varphi_2}^{ F_{B_j}F_{C_j}}
.\end{aligned}\]
    Then,
 
    \begin{enumerate}
        \item $\mathsf{Pr}\left(\widehat{M} =m \vert M=m\right)_{\widehat{\Theta}_1} \geq 1-16 \epsilon.$
        \item $
        \Vert \widehat{\Theta}_1 -  \Theta_{\textsc{ideal}} \Vert_1 \leq
      12 \sqrt{\epsilon}.$
    \end{enumerate}
\end{lemma}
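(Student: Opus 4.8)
The plan is to obtain both statements directly from \cref{claim:Alice_decoding_Correctly} and the Gentle Measurement Lemma (\cref{fact:gentle_measurement}); essentially no new estimate is needed. The first step is to write down the global state $\Theta_{\mathsf{pre}}$ present just before Charlie performs step~\ref{item:alice_decoding}, conditioned on Alice and Bob having sent the pair $(m,n)$. After $\mathsf{Enc}_A$ and $\mathsf{Enc}_B$ act and the senders push $AB$ through $\mathcal{N}$, the register $E_{A_m}$ (Alice's half of the $m$-th copy of $\ket{\varphi_1}$) has been swapped into the channel input and a junk state moved into $E_{A_m}$, similarly for $F_{B_n}$, while all the remaining shared pairs are untouched. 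Hence, suppressing the two (pure, product) junk registers,
\[
\Theta_{\mathsf{pre}} = \ketbra{m,n}^{MN}\otimes \mathcal{N}^{AB\to C}\!\left(\varphi_1^{E_{C_m}A}\otimes \varphi_2^{F_{C_n}B}\right)\otimes \bigotimes_{i\neq m}\varphi_1^{E_{A_i}E_{C_i}}\bigotimes_{j\neq n}\varphi_2^{F_{B_j}F_{C_j}} ,
\]
so that $\Theta_{\textsc{ideal}}=\Theta_{\mathsf{pre}}\otimes \ketbra{m}^{\widehat M}$.

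For item~1, note that Charlie's POVM $\{\Omega_{1,m'}\}_{m'}$ (augmented with a failure outcome if it does not already sum to $\I$) acts only on the registers $C,E_{C_1},\dots,E_{C_{2^{R_1}}}$. Tracing $\Theta_{\mathsf{pre}}$ down to exactly these registers discards $F_{C_n}$ and turns $\mathcal{N}(\varphi_1^{E_{C_m}A}\otimes\varphi_2^{F_{C_n}B})$ into $\mathcal{N}(\varphi_1^{E_{C_m}A}\otimes\varphi_2^{B})$, leaving precisely the state $\Gamma \coloneqq \mathcal{N}^{AB\to C}(\varphi_1^{E_{C_m}A}\otimes\varphi_2^{B})\otimes\bigotimes_{i\neq m}\varphi_1^{E_{C_i}}$ that appears, padded by the inert $E_{A_i}$ and junk registers, in the hypothesis of \cref{claim:Alice_decoding_Correctly}. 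Since the measurement outcome is recorded in $\widehat M$, that claim gives $\mathsf{Pr}(\widehat M=m\mid M=m)_{\widehat\Theta_1}=\Tr[\Omega_{1,m}\,\Gamma]\geq 1-16\epsilon$.

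For item~2, apply \cref{fact:gentle_measurement} to the state $\Theta_{\mathsf{pre}}$ and the POVM $\{\Omega_{1,m'}\otimes\I\}_{m'}$ on the full system, with distinguished outcome $m$ and error parameter $16\epsilon$ (admissible by item~1). Its conclusion is exactly
\[
\left\Vert \widehat\Theta_1 - \Theta_{\mathsf{pre}}\otimes\ketbra{m}^{\widehat M}\right\Vert_1 \leq 3\sqrt{16\epsilon}=12\sqrt{\epsilon},
\]
and $\Theta_{\mathsf{pre}}\otimes\ketbra{m}^{\widehat M}=\Theta_{\textsc{ideal}}$ by the first paragraph (adjoining or deleting the pure junk registers does not affect the trace norm).

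The only point that needs care is the register bookkeeping in the first paragraph: one has to track which systems Alice, Bob and Charlie each hold after the two controlled-\textsc{swap} encoders, so that Charlie's post-encoding reduced state lines up with the hypothesis of \cref{claim:Alice_decoding_Correctly} verbatim. Once that identification is made, both parts are immediate from that claim and the Gentle Measurement Lemma.
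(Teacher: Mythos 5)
Your proposal is correct and follows essentially the same route as the paper: write down the post-encoding global state, invoke Claim~\ref{claim:Alice_decoding_Correctly} (after tracing out Bob's idle registers) to get the $1-16\epsilon$ success probability, and then apply the Gentle Measurement Lemma with error parameter $16\epsilon$ to obtain the $3\sqrt{16\epsilon}=12\sqrt{\epsilon}$ bound. Your explicit register bookkeeping and the identification $\Theta_{\textsc{ideal}}=\Theta_{\mathsf{pre}}\otimes\ketbra{m}^{\widehat M}$ is a slightly more careful rendering of the same argument.
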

\begin{proof}
Suppose Alice wants to send a message $m$ and Bob wants to send $n$. 
The global joint state just after the encoding can be described as follows:
\[
\begin{aligned}
&\ketbra{m,n}^{MN}\otimes \left(\varphi_1^{E_{C_m}A}\otimes \varphi_2^{F_{C_n}B}\right) \bigotimes\limits_{i\neq m}\ketbra{\varphi_1}^{E_{A_i}E_{C_i}} \bigotimes\limits_{j\neq n} \ketbra{\varphi_2}^{ F_{B_j}F_{C_j}} .
\end{aligned}
\]
Recall that while decoding Alice's message, Charlie disregards any of Bob's register (other than $B$ which is taken as input to the channel).
It follows from Claim~\ref{claim:Alice_decoding_Correctly} that:

\[
\begin{aligned}
\Tr&\Bigg[\ketbra{m,n}^{MN}\bigotimes  \left.\left(\Omega_{1,m}\circ \mathcal{N}^{AB\to C}\right) \left(\varphi_1^{E_{C_m}A}\otimes \varphi_2^{F_{C_n}B}\right) \bigotimes\limits_{i\neq m}\right. \left. \ketbra{\varphi_1}^{E_{A_i}E_{C_i}}\bigotimes\limits_{j\neq n} \ketbra{\varphi_2}^{ F_{B_j}F_{C_j}}\right] \\
\geq~& 1-16\eps
\end{aligned}
\]
And hence, \[\mathsf{\Pr}\left( \widehat{M} = m ~ |~ M = m \right)_{\widehat{\Theta}_1} \geq 1- 16 \epsilon.\]

Then, the Gentle Measurement Lemma (\cref{fact:gentle_measurement}) implies that the post measurement state $\widehat{\Theta}_1$ is close to the ideal state 
\[\begin{aligned}
\Theta_{\textsc{ideal}}\coloneqq &\ketbra{m,n}^{MN}\otimes \mathcal{N}^{AB\to C}\left(\varphi_1^{E_{C_m}A}\otimes \varphi_2^{F_{C_n}B}\right) \otimes \ketbra{m}^{\widehat{M}} \bigotimes\limits_{i\neq m}{\varphi_1}^{E_{A_i}E_{C_i}}\bigotimes\limits_{j\neq n} {\varphi_2}^{ F_{B_j}F_{C_j}}
\end{aligned}
\]
    in the $1$-norm by $3\sqrt{16 \eps} = 12 \sqrt{\epsilon}$. This concludes the proof.
\end{proof}

\subsection{Decoding Bob}
\subsubsection*{Defining the POVM $\lbrace\Omega_{2,n}\rbrace_n$: }

Consider \[
    I_H^{\eps}(F_C:E_CC)_{\I^{E_C F_C}\otimes\mathcal{N}^{AB \rightarrow C}\left(\varphi_1^{AE_C}\otimes \varphi_2^{BF_C}\right)} \hspace{2cm}\]

    Let $\Pi_2^{F_CE_CC}$ denote an optimal measurement for the above quantity.
    That is,
    
\begin{align} \label{eq:pi_2_ecc_a}
    &2^{- I^{\epsilon}_{H}(F_C:E_CC)_{\mathcal{N}\left(\varphi_1^{AE_C}\otimes \varphi_2^{BF_C}\right)}} =  \Tr \left[\Pi_2^{F_CE_CC }~ \left(\mathcal{N}\left( \varphi_1^{AE_C} \otimes \varphi_2^B \right) \otimes \varphi_2^{F_C} \right)\right] \\   
     &\Tr\left[\Pi_2^{F_CE_CC}\left(\mathcal{N}(\varphi_1^{AE_C} \otimes \varphi_2^{BF_C})\right)\right]\geq 1-\eps. \label{eq:pi_2_ecc_b}
\end{align}
Let 
    \begin{align*}
        \Lambda_{2,n}&= \I^{F_{C_1}} \otimes \cdots \otimes \I^{F_{C_{n-1}}} \otimes \Pi_2^{CE_CF_{C_n}} \otimes \I^{F_{C_{n+1}}} \otimes  \cdots \otimes \I^{F_{C_{2^{R_2}}}} ,  \\
        \intertext{and}
    \Omega_{2,n} &= \left(\sum_{j} \Lambda_{2,j} \right)^{-\frac{1}{2}} \Lambda_{2,n} \left(\sum_{i} \Lambda_{2,j} \right)^{-\frac{1}{2}}.
    \end{align*}
    \begin{claim}\label{Claim:N_to_N_1_translation}
    Define a channel $\mathcal{N}_1^{B \rightarrow C E_{C_m}} (\sigma^B) := \mathcal{N}^{AB \rightarrow C} (\sigma^B \otimes \varphi_1^{AE_{C_m}})$.    
    Then, $\Pi_2^{F_CE_CC}$ defined above, is an optimal tester for $(I, \epsilon, \varphi_2^{B F_{C}}, \mathcal{N}_1)$.
\end{claim}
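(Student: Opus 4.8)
The final statement to prove is Claim~\ref{Claim:N_to_N_1_translation}: that $\Pi_2^{F_CE_CC}$ is an optimal tester for $(I,\epsilon,\varphi_2^{BF_C},\mathcal{N}_1)$, where $\mathcal{N}_1^{B\to CE_{C_m}}(\sigma^B):=\mathcal{N}^{AB\to C}(\sigma^B\otimes\varphi_1^{AE_{C_m}})$.

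This is essentially the analog of Claim~\ref{Claim:N_to_N_0_translation}, so I will follow the same pattern. The plan is to unwind Definition~\ref{def:optimal_projector} of an optimal tester for the triple $(I,\epsilon,\varphi_2^{BF_C},\mathcal{N}_1)$ and match it term-by-term with the defining equations \eqref{eq:pi_2_ecc_a} and \eqref{eq:pi_2_ecc_b} of $\Pi_2^{F_CE_CC}$. Concretely, Definition~\ref{def:optimal_projector} applied here asks for the operator achieving
\[
I_H^{\epsilon}(CE_{C_m}:F_C)_{\mathcal{N}_1(\varphi_2^{BF_C})} = -\log\min_{\substack{0\le\Pi\le\I\\ \Tr[\Pi\,\mathcal{N}_1(\varphi_2^{BF_C})]\ge 1-\epsilon}}\Tr\!\left[\Pi\,\bigl(\mathcal{N}_1(\varphi_2^{B})\otimes\varphi_2^{F_C}\bigr)\right].
\]
First I would substitute the definition of $\mathcal{N}_1$ into both the constraint state and the product state: $\mathcal{N}_1(\varphi_2^{BF_C}) = \mathcal{N}^{AB\to C}(\varphi_1^{AE_{C_m}}\otimes\varphi_2^{BF_C})$ and $\mathcal{N}_1(\varphi_2^{B}) = \mathcal{N}^{AB\to C}(\varphi_1^{AE_{C_m}}\otimes\varphi_2^{B})$. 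Since $E_{C_m}\equiv E_C$ as registers, these are exactly the states appearing in \eqref{eq:pi_2_ecc_a} and \eqref{eq:pi_2_ecc_b} (the reduced state of $\mathcal{N}(\varphi_1^{AE_C}\otimes\varphi_2^{BF_C})$ lives on $CE_CF_C$, and $\mathcal{N}$ acts trivially on $E_C$ and $F_C$). Thus the minimization defining the optimal tester for $(I,\epsilon,\varphi_2^{BF_C},\mathcal{N}_1)$ is literally the same optimization problem whose optimizer is $\Pi_2^{F_CE_CC}$ by construction, and equations \eqref{eq:pi_2_ecc_a}, \eqref{eq:pi_2_ecc_b} certify both that $\Pi_2$ is feasible and that it attains the optimum value. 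One point to note is that $\mathcal{N}$ does not act on the registers $E_C$ or $F_C$, which is what lets us freely move the tensor factor $\varphi_1^{AE_{C_m}}$ in and out — exactly the observation used in the proof of Claim~\ref{Claim:N_to_N_0_translation}.

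There is no real obstacle here; the claim is a bookkeeping identification of two optimization problems, and the only thing to be careful about is the register relabeling $E_{C_m}\leftrightarrow E_C$ and making sure the conditioning system in $I_H^{\epsilon}(F_C:E_CC)$ matches the output system $CE_{C_m}$ of $\mathcal{N}_1$. So the proof is simply: ``Note that $\mathcal{N}$ does not act on $E_C$ or $F_C$. The proof then follows directly from Definition~\ref{def:optimal_projector} and equations \eqref{eq:pi_2_ecc_a}, \eqref{eq:pi_2_ecc_b},'' mirroring the proof of Claim~\ref{Claim:N_to_N_0_translation}.
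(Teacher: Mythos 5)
Your proof is correct and follows exactly the paper's own argument: unwind Definition~\ref{def:optimal_projector} for the triple $(I,\epsilon,\varphi_2^{BF_C},\mathcal{N}_1)$, substitute the definition of $\mathcal{N}_1$, and match the resulting optimization with equations \eqref{eq:pi_2_ecc_a} and \eqref{eq:pi_2_ecc_b}, using the observation that $\mathcal{N}$ acts trivially on $E_C$ and $F_C$. The paper's proof is a one-line version of precisely this bookkeeping identification.
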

\begin{proof}
    The proof follows directly from definition~\ref{def:optimal_projector}, equation~\eqref{eq:pi_2_ecc_a} and \eqref{eq:pi_2_ecc_b} and noting that $\mathcal{N}$ does not act on $F_C$.
\end{proof}
    \begin{claim} \label{claim:Bob_decoding_correctly}
    Let $\widehat{\Theta}_2$ be the state at the end of protocol~\ref{fig:multiparty_decoding}. Then it holds that
    \[
    \norm{\widehat{\Theta}_2^{\widehat{M}\widehat{N}}-\ketbra{m}^{\widehat{M}}\otimes \ketbra{n}^{\widehat{N}}}_1\leq \ 28 \sqrt{\eps} .
    \]
    \end{claim}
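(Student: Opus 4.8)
The plan is to run Bob's decoder on the clean ideal state, invoke the Anshu--Jain--Warsi guarantee for the modified channel $\mathcal N_1$, and then transport the conclusion back to the state actually produced by the protocol via the data-processing inequality. Let $\mathcal D_B$ denote the CPTP map implementing the second decoding step of \cref{fig:multiparty_decoding}: controlled on the contents of $\widehat M$, it forms the POVM $\brak{\Omega_{2,n}}$ built from $\Pi_2$ acting on $C$, $E_{C_{\widehat m}}$ and $F_{C_n}$, measures, and writes the outcome into $\widehat N$. By definition $\widehat\Theta_2 = \mathcal D_B(\widehat\Theta_1)$, where $\widehat\Theta_1$ is the state after Alice's decoding from \cref{lemma:after_alice_decoding}.

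First I would apply $\mathcal D_B$ to the ideal state $\Theta_{\textsc{ideal}}$ of \cref{lemma:after_alice_decoding}. Since $\Theta_{\textsc{ideal}}$ is supported on $\widehat M = m$, only the $m$-th branch of $\mathcal D_B$ acts, and the reduced state of $\Theta_{\textsc{ideal}}$ on the registers $CE_{C_m}F_{C_1}\cdots F_{C_{2^{R_2}}}$ measured by that branch is precisely
\[
\bigotimes_{j\neq n}\varphi_2^{F_{C_j}}\otimes \mathcal N^{AB\to C}\!\left(\varphi_1^{AE_{C_m}}\otimes\varphi_2^{BF_{C_n}}\right),
\]
which is the state $\Theta_{2,n}$ of \cref{fact:Anshu_achievability} instantiated with the channel $\mathcal N_1^{B\to CE_{C_m}}$ of \cref{Claim:N_to_N_1_translation} and shared state $\varphi_2^{BF_C}$; the remaining copies $\bigotimes_{i\neq m}\varphi_1^{E_{A_i}E_{C_i}}$ and the input register $A$ merely factor out and are not touched. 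Hence, under the hypothesis $R_2\leq I_H^{\eps}(F_C:CE_C)_{\mathcal N(\varphi_1^{AE_C}\otimes\varphi_2^{BF_C})} - 2\log(1/\eps)$ of \cref{claim:final_decoding}, \cref{fact:Anshu_achievability} yields $\Tr(\Omega_{2,n}\Theta_{2,n})\geq 1-16\eps$, and since $\mathcal D_B$ does not disturb $\widehat M$ we get $\Pr(\widehat M = m,\ \widehat N = n)_{\mathcal D_B(\Theta_{\textsc{ideal}})}\geq 1-16\eps$. As the $\widehat M\widehat N$-marginal is classical, this gives
\[
\bigl\|\,\mathcal D_B(\Theta_{\textsc{ideal}})^{\widehat M\widehat N} - \ketbra{m}^{\widehat M}\otimes\ketbra{n}^{\widehat N}\,\bigr\|_1 \leq 32\eps .
\]

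Finally I would combine this with monotonicity of the trace distance under $\mathcal D_B$ and under the partial trace over all registers other than $\widehat M\widehat N$, together with \cref{lemma:after_alice_decoding}:
\[
\bigl\|\,\widehat\Theta_2^{\widehat M\widehat N} - \ketbra{m}^{\widehat M}\otimes\ketbra{n}^{\widehat N}\,\bigr\|_1 \;\leq\; \bigl\|\,\widehat\Theta_1 - \Theta_{\textsc{ideal}}\,\bigr\|_1 + 32\eps \;\leq\; 12\sqrt\eps + 32\eps ,
\]
and then use $32\eps\leq 16\sqrt\eps$ for $\eps\leq 1/4$ (the bound being vacuous for larger $\eps$, since the trace distance never exceeds $2$) to conclude $\norm{\widehat\Theta_2^{\widehat M\widehat N} - \ketbra{m}^{\widehat M}\otimes\ketbra{n}^{\widehat N}}_1\leq 28\sqrt\eps$. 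The step demanding the most care is the middle one: one must verify that restricting $\mathcal D_B$ to its $\widehat M=m$ branch is legitimate because $\Theta_{\textsc{ideal}}$ lives entirely on that branch, that the reduced state on Bob's measured registers is exactly $\Theta_{2,n}$, and that the nuisance registers carried along through the protocol genuinely decouple from the measurement; the rest is the triangle inequality and bookkeeping of error terms.
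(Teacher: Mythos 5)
Your proposal is correct and follows essentially the same route as the paper's proof: apply the Anshu--Jain--Warsi guarantee (Fact~\ref{fact:Anshu_achievability}) for the channel $\mathcal{N}_1$ to the ideal state $\Theta_{\textsc{ideal}}$, then transfer the conclusion to $\widehat{\Theta}_2$ via monotonicity of the trace norm under $\mathcal{D}_B$ and the $12\sqrt{\eps}$ bound of \cref{lemma:after_alice_decoding}. You are in fact slightly more careful than the paper in converting the success probability $1-16\eps$ into a trace-norm bound (picking up the factor of $2$ and hence the harmless restriction $\eps\leq 1/4$, under which the bound is anyway nonvacuous), whereas the paper bounds the error probability directly and absorbs $16\eps$ into $16\sqrt{\eps}$.
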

    
    \begin{proof}
    Let $\Theta_{2,\textsc{IDEAL}}$ be the post measurement state obtained by applying the POVM $\lbrace \Omega_{2,n} \rbrace_{n}$ to the state $\Theta_{\textsc{IDEAL}}$.
    By using \cref{fact:Anshu_achievability} on the channel $\mathcal{N}_1$ and the ideal state $\Theta_{\textsc{ideal}}$, we see that,
    \begin{equation} \label{eq:final_ideal_error}
         \Pr\left[\widehat{N}\neq n~|~N=n, M=m\right]_{\Theta_{2,\textsc{ideal}}} \leq 16\eps.
    \end{equation}
 Now,
\begin{align*}
    &\Pr\left[\widehat{N}\neq n~|~N=n, M=m\right]_{\widehat{\Theta}_2}\\&  \leq \Pr\left[\widehat{N}\neq n~|~N=n, M=m\right]_{\Theta_{2,\textsc{ideal}}} + \Vert \Theta_{2,\textsc{ideal}} - \widehat{\Theta}_2\Vert_1 \\
    & \leq 16 \epsilon + \Vert \Theta_{\textsc{ideal}} - \widehat{\Theta}_{1}\Vert_1 \\
    & \leq 16 \epsilon + 12 \sqrt{\epsilon} \\ & \leq 28 \sqrt{\epsilon}. 
\end{align*}
The second inequality follows from eq~\eqref{eq:final_ideal_error} and data processing.
The third inequality follows from Lemma~\ref{lemma:after_alice_decoding}.
   This concludes the proof.
    \end{proof}
    
\subsection{A More General Situation}
We note that the decoding procedure outlined in \cref{fig:multiparty_decoding} also works in a more general case, which we describe below:\\
Consider a pure state  $\ket{\varphi}^{E_CF_CAB}$ and the purifications $\ket{\varphi_1}^{E_CA}$ and $\ket{\varphi_2}^{F_CB}$ of the states $\varphi^{E_C}$ and $\varphi^{F_C}$ respectively. Consider the following situation:
\begin{enumerate}
\captionof{table}{General Decoding}\label{table:generaldecoding}
    \item Fix an index $(m,n)$.
    \item Let Alice share $2^{R_1}$ the states
    \[
    \bigotimes \limits_{i \neq m} \ket{\varphi_1}^{E_{C_i}E_{A_i}}
    \]
    with Charlie, where as before, the systems $E_{A_i}$ belong to Alice and $E_{C_i}$ belong to Charlie. Note also that $E_{A_i}\equiv A$.
    \item Similarly, let Bob share $2^{R_2}$ the  states
    \[
    \bigotimes \limits_{j \neq n} \ket{\varphi_1}^{F_{C_j}F_{B_j}}
    \]
    with Charlie, where as before, the systems $F_{B_j}$ belong to Alice and $F_{C_j}$ belong to Charlie. Note also that $F_{B_j}\equiv B$.
    \item For $i=m$ and $j=n$, let Alice, Bob, and Charlie share the tripartite state
    \[
    \mathcal{N}^{AB\to C}\left(\ketbra{\varphi}^{E_{C_m}F_{C_n}AB}\right)
    \]
    \item Then, to decode the indices $m$ and $n$, Charlie runs the protocol outlined in \cref{fig:multiparty_decoding}, with a suitable setting of decoders $\brak{\Omega_{1,m}}$ and $\brak{\Omega_{2,n}}$.
\end{enumerate}

Then, the following claim can be proved along similar lines to the proof of \cref{claim:final_decoding}:

\begin{claim}\label{claim:generaldecoding}
For any \[
    \begin{aligned}
    R_1  &\leq  I_{H}^\epsilon (E_C: C)_{\mathcal{N}\left(\varphi^{ABE_CF_C}\right)} - 2 \log\left( \frac{1}{\epsilon}\right), \\
  R_2  &\leq  I_{H}^\epsilon (F_C: C E_C)_{\mathcal{N}\left(\varphi^{ABE_CF_C}\right)} - 2 \log\left( \frac{1}{\epsilon}\right)
    \end{aligned}
    \] 
    there exist choices for the decoders $\brak{\Omega_{1,m}}$ and $\brak{\Omega_{2,n}}$ in Procedure \cref{table:generaldecoding} such that the following holds:
 \[\mathsf{Pr}\left[\left(\widehat{M},\widehat{N}\right) \neq \left(m,n\right) \vert \left(M, N\right)=\left(m, n\right) \right] \leq 28  \sqrt{\epsilon}.\] 
\end{claim}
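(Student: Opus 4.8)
The plan is to run the same two-phase successive-cancellation analysis already used for Claim~\ref{claim:final_decoding}, i.e.\ the chain Claim~\ref{Claim:N_to_N_0_translation} $\to$ Claim~\ref{claim:Alice_decoding_Correctly} $\to$ Lemma~\ref{lemma:after_alice_decoding} $\to$ Claim~\ref{Claim:N_to_N_1_translation} $\to$ Claim~\ref{claim:Bob_decoding_correctly}, essentially verbatim. The single genuinely new point is that in the present setting $E_C$ and $F_C$ are no longer held in a tensor product but are correlated through $AB$ inside the pure state $\ket{\varphi}^{E_CF_CAB}$; consequently the two auxiliary ``reduced'' channels that carried the AJW achievability argument can no longer be written down explicitly by tensoring in a fixed marginal state, and must instead be produced via Uhlmann's theorem.

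Concretely, I would first set up the decoders exactly as in Section~\ref{sec:decodingProcedure}: let $\Pi_1^{CE_C}$ be an optimal tester for $I_H^{\epsilon}(E_C:C)_{\mathcal{N}(\varphi^{ABE_CF_C})}$ and $\Pi_2^{CE_CF_C}$ an optimal tester for $I_H^{\epsilon}(F_C:E_CC)_{\mathcal{N}(\varphi^{ABE_CF_C})}$, embed them into the operators $\Lambda_{1,m},\Lambda_{2,n}$ on the appropriate copies, and take $\{\Omega_{1,m}\},\{\Omega_{2,n}\}$ to be the corresponding pretty-good measurements; Charlie decodes $\widehat M$ first and then $\widehat N$ as in Figure~\ref{fig:multiparty_decoding}. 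Next, I would construct the auxiliary channels. Since $\ket{\varphi_1}^{AE_C}$ and $\ket{\varphi}^{E_CF_CAB}$ are both purifications of $\varphi^{E_C}$, Uhlmann's theorem (Fact~\ref{uhlmann_exact}) gives an isometry taking the former to the latter; composing it with $\Tr_{F_C}$ and $\mathcal{N}^{AB\to C}$ produces a channel $\mathcal{N}_0^{A\to C}$ with
\[
\mathcal{N}_0(\varphi_1^{AE_C}) \;=\; \Tr_{F_C}\bigl[\mathcal{N}^{AB\to C}(\varphi^{E_CF_CAB})\bigr],
\]
and since this state agrees with $\mathcal{N}(\varphi^{ABE_CF_C})$ on $CE_C$, one checks exactly as in Claim~\ref{Claim:N_to_N_0_translation} that $\Pi_1$ is an optimal tester for $(I,\epsilon,\varphi_1^{AE_C},\mathcal{N}_0)$ and that $I_H^{\epsilon}(E_C:C)_{\mathcal{N}_0(\varphi_1^{AE_C})} = I_H^{\epsilon}(E_C:C)_{\mathcal{N}(\varphi^{ABE_CF_C})}$. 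Analogously, since $\ket{\varphi_2}^{BF_C}$ and $\ket{\varphi}^{E_CF_CAB}$ both purify $\varphi^{F_C}$, Uhlmann's theorem yields a channel $\mathcal{N}_1^{B\to CE_C}$ with $\mathcal{N}_1(\varphi_2^{BF_C}) = \mathcal{N}^{AB\to C}(\varphi^{E_CF_CAB})$, for which $\Pi_2$ is an optimal tester and $I_H^{\epsilon}(F_C:E_CC)_{\mathcal{N}_1(\varphi_2^{BF_C})} = I_H^{\epsilon}(F_C:E_CC)_{\mathcal{N}(\varphi^{ABE_CF_C})}$. With these in hand, Fact~\ref{fact:Anshu_achievability} applied to $(\mathcal{N}_0,\ket{\varphi_1}^{AE_C})$ shows that for $R_1 \leq I_H^{\epsilon}(E_C:C)_{\mathcal{N}(\varphi^{ABE_CF_C})} - 2\log(1/\epsilon)$ the first phase outputs $\widehat M = m$ with probability at least $1-16\epsilon$; the Gentle Measurement Lemma (Fact~\ref{fact:gentle_measurement}), exactly as in Lemma~\ref{lemma:after_alice_decoding}, then shows the post-phase-one state is within $12\sqrt{\epsilon}$ in trace distance of the ideal state $\Theta_{\textsc{ideal}}$ in which $\widehat M$ is set to $m$ and all remaining registers are pristine. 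Applying Fact~\ref{fact:Anshu_achievability} to $(\mathcal{N}_1,\ket{\varphi_2}^{BF_C})$ on $\Theta_{\textsc{ideal}}$ (with $E_{C_{\widehat M}}$ playing the role of side information already held by the receiver) shows that for $R_2 \leq I_H^{\epsilon}(F_C:E_CC)_{\mathcal{N}(\varphi^{ABE_CF_C})} - 2\log(1/\epsilon)$ the second phase outputs $\widehat N = n$ with probability at least $1-16\epsilon$ on $\Theta_{\textsc{ideal}}$, and a triangle inequality together with data processing (as in Claim~\ref{claim:Bob_decoding_correctly}) converts this into $\Pr[(\widehat M,\widehat N)\neq(m,n)] \leq 16\epsilon + 12\sqrt{\epsilon} \leq 28\sqrt{\epsilon}$.

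I expect the only real obstacle to be the construction and verification of the Uhlmann-based auxiliary channels $\mathcal{N}_0$ and $\mathcal{N}_1$ --- in particular confirming that the fixed optimal testers $\Pi_1$ and $\Pi_2$ remain optimal testers after the reduction, which hinges on the relevant marginals of $\mathcal{N}(\varphi^{ABE_CF_C})$ being preserved --- together with the careful bookkeeping of which registers Charlie holds at each phase (notably that $E_{C_{\widehat M}}$ is available as side information for phase two precisely because phase one succeeds with high probability). Once these points are settled, the remainder of the proof is a line-by-line repetition of the tensor-product case.
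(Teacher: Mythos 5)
Your proposal is correct and follows essentially the same route as the paper, which itself only asserts that Claim~\ref{claim:generaldecoding} ``can be proved along similar lines'' to Claim~\ref{claim:final_decoding}; you have correctly identified the one genuinely new ingredient (constructing the reduced channels $\mathcal{N}_0^{A\to C}$ and $\mathcal{N}_1^{B\to CE_C}$ via Uhlmann's theorem rather than by tensoring in a fixed marginal, and checking that the fixed optimal testers and the relevant $I_H^{\epsilon}$ quantities are preserved because they depend only on the $CE_C$ and $CE_CF_C$ marginals). The error bookkeeping $16\epsilon + 12\sqrt{\epsilon} \leq 28\sqrt{\epsilon}$ matches the paper's computation in Claim~\ref{claim:Bob_decoding_correctly}.
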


\section{Useful Lemmas}
\begin{claim}\label{claim:highprobability}
If Alice measures the $Q$ registers of of the state $\ket{\varphi}^{\otimes n}$, where $n= \frac{1}{\delta}\cdot 2^{\imax}$, she obtains a string with at least one $0$ with probability at least $1-e^{-1/\delta}$.
\end{claim}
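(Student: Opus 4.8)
The plan is to reduce the claim to an elementary Bernoulli tail bound. First I would observe that for a single copy of the state
\[
\ket{\varphi}^{AA_fBB_fQ}=\sqrt{\tfrac{1}{2^{\imax}}}\,\ket{\phi}^{AA_fBB_f}\ket{0}^Q+\sqrt{1-\tfrac{1}{2^{\imax}}}\,\ket{\tau}^{AA_fBB_f}\ket{1}^Q,
\]
the reduced state on the $Q$ register has diagonal entries $\tfrac{1}{2^{\imax}}$ and $1-\tfrac{1}{2^{\imax}}$ in the computational basis (the off-diagonal terms, which depend on $\langle\tau|\phi\rangle$, are irrelevant for a computational-basis measurement). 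Hence a computational-basis measurement of $Q$ yields outcome $0$ with probability exactly $\tfrac{1}{2^{\imax}}$.

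Next I would use that the $n$ copies of $\ket{\varphi}$ are in tensor product, so the joint state on $Q_1\cdots Q_n$ is the $n$-fold tensor power of this single-qubit state; consequently the $n$ measurement outcomes are i.i.d.\ $\mathrm{Bernoulli}(\tfrac{1}{2^{\imax}})$ random variables. The probability that \emph{no} outcome equals $0$ is therefore $\bigl(1-\tfrac{1}{2^{\imax}}\bigr)^n$.

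Finally, substituting $n=\tfrac{1}{\delta}\cdot 2^{\imax}$ and applying the inequality $1-x\le e^{-x}$ gives
\[
\Bigl(1-\tfrac{1}{2^{\imax}}\Bigr)^{n}\;\le\;\exp\!\Bigl(-\tfrac{n}{2^{\imax}}\Bigr)\;=\;e^{-1/\delta},
\]
so the probability of obtaining at least one $0$ is at least $1-e^{-1/\delta}$, as claimed. There is no real obstacle here; the only point requiring a line of justification is that the marginal on the $Q$ registers is a genuine product state so that the outcomes are independent, which follows immediately from the product structure of $\ket{\varphi}^{\otimes n}$.
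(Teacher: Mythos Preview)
Your argument is correct and matches the paper's proof exactly: compute the probability of the all-ones string as $\bigl(1-\tfrac{1}{2^{\imax}}\bigr)^n$ and bound it via $1-x\le e^{-x}$ to obtain $e^{-1/\delta}$. You have simply spelled out in more detail why the single-copy outcome is $0$ with probability $\tfrac{1}{2^{\imax}}$ and why the $n$ outcomes are independent, which the paper leaves implicit.
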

\begin{proof}
The probability that Alice gets all $1$'s is 
\[
\begin{aligned}
\left(1-\frac{1}{2^{\imax}}\right)^{n} &\leq e^{-n/2^{\imax}} = e^{-1/\delta}. 
\end{aligned} \qedhere
\]
\end{proof}
\begin{claim}\label{claim:extendedreduction}
Consider a protocol $\mathcal{P}=\left(M,\mathcal{N}, \mathcal{E},\mathcal{D},\ket{\varphi}^{E_AE_B}\right)$ such that the encoder $\mathcal{E}$ can toss its own private coins and abort the protocol with probability $p<1$. Suppose we are promised that, whenever the protocol does not abort, it creates the state $\rho^A$, averaged over all other systems, on the input to the channel. We are also promised that whenever the protocol does not abort, the decoder makes an error at most $\eps$ while decoding. Then, it holds that, if $M$ is the number of messages that $\mathcal{P}$ can send through the channel, then
\[
\log M \leq I_H^{\eps}\left(B:B^{\prime}\right)_{\mathcal{N}(\tau^{AB^{\prime}})}
\]
where $\ket{\tau}^{AB^{\prime}}$ is an arbitrary purification of $\rho^A$.
\end{claim}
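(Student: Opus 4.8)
## Proof Proposal for Claim~\ref{claim:extendedreduction}

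The plan is to reduce this claim to \cref{lem:converse} (equivalently \cref{corol:usableconverse}) by absorbing the private coins and the abort event into the structure of an ordinary entanglement-assisted protocol. First I would formalize the encoder with private coins: let the randomness be stored in a register $G$, so that $\mathcal{E}$ first prepares some state $\ket{g}$ on $G$ (or a classical distribution over $g$), and then applies a controlled isometry that, conditioned on $g$, either produces an ``abort'' flag or produces a genuine encoding into the channel input register $A$. Since the promise is that \emph{conditioned on not aborting} the averaged channel-input state is $\rho^A$ and the decoding error is at most $\eps$, I would condition the entire protocol on the non-abort event. Concretely, define a new protocol $\mathcal{P}'$ whose shared entanglement is $\ket{\varphi}^{E_AE_B}$ together with whatever part of the purified randomness is needed, and whose encoder is the original encoder post-selected on the non-abort branch; this is a legitimate (deterministic-up-to-shared-randomness) encoder whose output on the channel input is exactly $\rho^A$ when averaged over messages and shared randomness.

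The key point is that post-selecting on non-abort does not damage the two quantities we care about: the averaged input state is, by hypothesis, $\rho^A$ regardless of which non-abort branch occurred, and the decoding error on the non-abort event is at most $\eps$ by hypothesis. So $\mathcal{P}'$ lies in $\mathcal{S}^{\rho^A}(\mathcal{N},\eps)$ in the sense required by \cref{lem:converse} --- one just has to check that allowing shared private randomness between encoder and (a copy held by) the decoder, or simply shared randomness that the decoder ignores, fits inside the protocol template of \cref{EA_Classical_Setup}. Indeed it does: shared randomness is a special case of shared entanglement (a maximally correlated classical state is a valid $\ket{\varphi}^{E_AE_B}$ after dephasing, or can be purified into the entanglement register), and the converse proof in Appendix~\ref{appendix:proof of lem:converse} only uses that the encoded state on $A$ has a fixed marginal and that the overall map from $M$ to $\widehat M$ has error at most $\eps$. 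Once $\mathcal{P}' \in \mathcal{S}^{\rho^A}(\mathcal{N},\eps)$, \cref{corol:usableconverse} immediately gives
\[
\log M \leq I_H^{\eps}\left(B:B'\right)_{\mathcal{N}(\tau^{AB'})}
\]
for any purification $\ket{\tau}^{AB'}$ of $\rho^A$, which is exactly the claim.

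The main obstacle I anticipate is a bookkeeping subtlety rather than a conceptual one: making sure that the ``condition on non-abort'' operation genuinely yields a valid protocol in the template, in particular that the state on $M$ remains the uniform flag state $\frac{\I}{|M|}$ after conditioning. This holds provided the abort decision is independent of the message $m$ --- which is the natural reading of ``the encoder tosses its own private coins and aborts with probability $p$'' --- but it should be stated explicitly, since if the abort probability depended on $m$ then conditioning would bias the message distribution and \cref{fact:closeness} would no longer apply directly. Assuming message-independence of the abort event (or equivalently, that Alice tosses her coins and decides to abort before looking at $m$), the rest is a routine application of \cref{lem:converse}; I would simply remark that the proof of that lemma goes through verbatim with $\rho^{ME_BA}$ replaced by the non-abort-conditioned encoded state and with the entanglement register enlarged to include the purified private randomness.
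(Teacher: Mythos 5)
Your proposal is correct and follows essentially the same route as the paper: condition the protocol on the non-abort event, observe that the conditioned protocol lies in $\mathcal{S}^{\rho^A}(\mathcal{N},\eps)$ by the stated promises, and invoke \cref{corol:usableconverse}. Your additional remarks on modelling the private coins in a register and on the message-independence of the abort event are sensible bookkeeping points that the paper leaves implicit, but they do not change the argument.
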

\begin{proof}
 Define the event $E$ to be the set of those coin tosses of the encoder $\mathcal{E}$ when the protocol $\mathcal{P}$ does not abort. Let  $\mathcal{P}\vert_{E}$be the execution of the protocol $\mathcal{P}$ conditioned on the coin tosses in $E$, i.e., the encoder samples its private coins from a distribution that is supported only on the set $E$. By the promise given in the statement of the claim, $\mathcal{P}\vert_{E}$ creates the state $\rho^A$ on the input to the channel. Therefore it holds that
 \[
 \mathcal{P}\vert_{E}\in \mathcal{S}^{\rho^A}(\mathcal{N},\eps).
 \]
This implies that the total number of bits that the protocol $\mathcal{P}\vert_{E}$ can send, with the probability of error at most $\eps$ is at most $I_H^{\eps}(B:B^{\prime})_{\mathcal{N}(\tau^{AB^{\prime}})}$, by \cref{corol:usableconverse}. Note, however, that the protocol $\mathcal{P}$ does not send any bits when the coin tosses of the encoder land outside of $E$. Therefore, the total number of bits that the protocol $\mathcal{P}$ can send is, at most
\[\log M\leq I_H^{\eps}(B:B^{\prime})_{\mathcal{N}(\tau^{AB^{\prime}})}.\]
 This concludes the proof.
\end{proof}
\begin{claim}
If Alice measures the registers $Q_1 Q_2 \ldots Q_n$ in random order, where $n= \frac{1}{\delta}2^{\imax}$, then, conditioned on getting at least one $0$ outcome, it holds that
\[
\Pr\left[\textup{ success at } i~|~\textup{ success }\right] =\frac{1}{n}.
\]
\end{claim}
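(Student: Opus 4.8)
The plan is to reduce this to a symmetry argument, since the quantum content is essentially trivial. First I would note that the global state $\ket{\varphi}^{\otimes n}$ is a tensor product across the $n$ copies, and the registers $Q_1,\dots,Q_n$ are each measured in the computational basis, so the outcomes $X_1,\dots,X_n$ are mutually independent with $\Pr[X_b=0]=1/2^{\imax}$ (a ``success'') for every $b$, exactly as in \cref{claim:highprobability}. Crucially, these outcomes are independent of the uniformly random order $\pi$ in which Alice measures, since $\pi$ comes from her private coins. Writing $S:=\{b\in[n]:X_b=0\}$ for the (random) set of successful indices, the event ``success'' is precisely $\{S\neq\emptyset\}$, and the event ``success at $i$'' is precisely the event that $i\in S$ and $i$ precedes every other element of $S$ in the order $\pi$ (i.e.\ $b^{*}=i$).

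Next I would isolate the combinatorial core: if $T\subseteq[n]$ is a fixed nonempty set and $\pi$ is a uniformly random permutation of $[n]$, then for each $i\in T$ the probability that $i$ appears before all other elements of $T$ in $\pi$ equals $1/|T|$, because $\pi$ restricted to the positions occupied by $T$ induces a uniformly random ordering of $T$. Conditioning on the event $S=T$ fixes all the quantum outcomes but leaves $\pi$ uniform and independent, so this gives $\Pr[\text{success at }i \mid S=T]=1/|T|$ when $i\in T$ and $0$ otherwise.

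Then I would finish by exchangeability. Since $\Pr[S=T]=(2^{-\imax})^{|T|}(1-2^{-\imax})^{n-|T|}$ depends only on $|T|$, summing over $T$ gives
\[
\Pr[b^{*}=i]=\sum_{k=1}^{n}\binom{n-1}{k-1}\,(2^{-\imax})^{k}(1-2^{-\imax})^{n-k}\,\tfrac{1}{k},
\]
which is manifestly independent of $i$. Since the events $\{b^{*}=i\}$ for $i\in[n]$ are disjoint with union $\{S\neq\emptyset\}$, summing over $i$ yields $n\cdot\Pr[b^{*}=i]=\Pr[\text{success}]$, and dividing gives $\Pr[\text{success at }i\mid\text{success}]=1/n$.

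There is no deep obstacle here; the only point requiring care is the role of the \emph{random} order. With a fixed measurement order the statement would be false — lower indices would be strictly more likely — so the uniformity is bought precisely by the independence of $\pi$ from the i.i.d.\ outcomes, and the rest is bookkeeping.
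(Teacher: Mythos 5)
Your proof is correct, and it reaches the conclusion by a genuinely different decomposition from the paper's. The paper conditions on the permutation and on the time $j$ at which index $i$ happens to be measured: given that $i$ sits at position $j$, the probability that the first $0$ occurs exactly there is the geometric term $(1-2^{-\imax})^{j-1}2^{-\imax}$, and averaging over the uniform position $j$ gives $\Pr\left[\textup{ success at } i\right]=\frac{1}{n}\bigl(1-(1-2^{-\imax})^{n}\bigr)$, which is then divided by $\Pr\left[\textup{ success }\right]=1-(1-2^{-\imax})^{n}$. You instead condition on the set $S$ of successful outcomes, observe that a uniform permutation ranks the elements of any fixed nonempty $T$ exchangeably so that each is first with probability $1/|T|$, and then close the argument with the disjoint-union identity $\sum_i\Pr[b^{*}=i]=\Pr[S\neq\emptyset]$ rather than evaluating the resulting binomial sum. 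Both arguments rest on the same two facts --- the $Q$-outcomes are i.i.d.\ Bernoulli$(2^{-\imax})$ because the copies are in tensor product, and the measurement order is uniform and independent of those outcomes --- but yours makes the exchangeability mechanism explicit and avoids the geometric-series computation, while the paper's yields the explicit value of the unconditional probability $\Pr\left[\textup{ success at } i\right]$ as a byproduct. Your closing remark that the claim would fail for a fixed measurement order is also accurate and correctly identifies what the randomized order is buying.
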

\begin{proof}\label{claim:distribution}
Fix a permutation $\sigma$ of the set $[n]$. Suppose
\[
\sigma(i)= j
\]
i.e., the $i$-th index is measured at time $j\in [n]$. Then,
\[
\Pr\left[\textup{ success at time } j~|~\sigma\right] = \left(1-\frac{1}{2^{\imax}}\right)^{j-1}\cdot \frac{1}{2^{\imax}}
\]
Then,
\begin{align*}
    \Pr\left[\textup{ success at } i\right] &= \sum\limits_j\sum\limits_{\sigma~|~\sigma(i)=j} \Pr\left[\textup{ success at time } j~|~\sigma\right]\cdot \Pr\left[\sigma\textup{ s.t. } \sigma(i)=j\right] \\
    &= \sum\limits_{j} \left(1-\frac{1}{2^{\imax}}\right)^{j-1}\cdot \frac{1}{2^{\imax}} \cdot \frac{1}{n} \\
    &= \left(1-\left(1-\frac{1}{2^{\imax}}\right)^{n}\right)\cdot \frac{1}{n}
\end{align*}
Also, note that
\[
\Pr\left[\textup{ success }\right] = \left(1-\left(1-\frac{1}{2^{\imax}}\right)^{n}\right)
\]
Therefore,
\[
\Pr\left[\textup{ success at } i~|~\textup{ success }\right]= \frac{1}{n}.
\]
This concludes the proof.
\end{proof}
\begin{claim}\label{claim:IHgeneral}
Given the states $\rho^A, \sigma^A$ and a maximally correlated state 
\[
\varrho^{X_1X_2}\coloneqq\frac{1}{K}\sum\limits_{x}\ketbra{x}^{X_1}\otimes \ketbra{x}^{X_2},
\]
it holds that
\[
D_H^{\eps}(\rho^A\otimes \varrho^{X_1X_2}~||~\sigma^A\otimes \pi^{X_1}\otimes \pi^{X_2}) \leq D_H^{\sqrt{\eps}}(\rho^A~||~\sigma^A)+\log K-\log (1-\sqrt{\eps}).
\]
\end{claim}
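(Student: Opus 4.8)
The plan is to argue directly from the operational (test-based) definition of $D_H^\eps$, exploiting that the $X_1X_2$-part of the first argument is a \emph{maximally correlated} state whose marginals are exactly the maximally mixed states $\pi^{X_1},\pi^{X_2}$ appearing in the second argument. The whole proof reduces to one operator inequality plus the observation that ``compressing'' a tester on $AX_1X_2$ against $\sqrt{\varrho^{X_1X_2}}$ produces a tester on $A$.

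\emph{Step 1 (an operator inequality).} Writing $P^{X_1X_2}\coloneqq\sum_x\ketbra{x}^{X_1}\otimes\ketbra{x}^{X_2}$, which is a projector, we have $\varrho^{X_1X_2}=\tfrac1K P^{X_1X_2}$, so
\[
\pi^{X_1}\otimes\pi^{X_2}=\frac{\I^{X_1X_2}}{K^2}\;\geq\;\frac{1}{K^2}\,P^{X_1X_2}\;=\;\frac1K\,\varrho^{X_1X_2},
\]
hence $\sigma^A\otimes\pi^{X_1}\otimes\pi^{X_2}\;\geq\;\tfrac1K\,\sigma^A\otimes\varrho^{X_1X_2}$.

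\emph{Step 2 (compressing the optimal tester).} Let $\Pi$ be an optimal tester for the left-hand quantity, so $0\leq\Pi\leq\I$, $\Tr[\Pi(\rho^A\otimes\varrho^{X_1X_2})]\geq 1-\eps$, and $2^{-D_H^\eps(\rho^A\otimes\varrho^{X_1X_2}\,\|\,\sigma^A\otimes\pi^{X_1}\otimes\pi^{X_2})}=\Tr[\Pi(\sigma^A\otimes\pi^{X_1}\otimes\pi^{X_2})]$. Define the operator on $A$ by $\bar\Pi^A\coloneqq\Tr_{X_1X_2}\!\bigl[(\I^A\otimes\sqrt{\varrho^{X_1X_2}})\,\Pi\,(\I^A\otimes\sqrt{\varrho^{X_1X_2}})\bigr]$. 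Then $\bar\Pi^A\geq0$, and from $\Pi\leq\I$ and positivity of the partial trace, $\bar\Pi^A\leq\Tr_{X_1X_2}[\I^A\otimes\varrho^{X_1X_2}]=\Tr[\varrho^{X_1X_2}]\cdot\I^A=\I^A$; thus $\bar\Pi^A$ is a legitimate tester. A short cyclicity computation gives $\Tr[\bar\Pi^A\rho^A]=\Tr[\Pi(\rho^A\otimes\varrho^{X_1X_2})]\geq 1-\eps$ and $\Tr[\bar\Pi^A\sigma^A]=\Tr[\Pi(\sigma^A\otimes\varrho^{X_1X_2})]$. Since $\bar\Pi^A$ is feasible for the program defining $D_H^\eps(\rho^A\|\sigma^A)$, its value is at least the optimum: $\Tr[\bar\Pi^A\sigma^A]\geq 2^{-D_H^\eps(\rho^A\|\sigma^A)}$.

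\emph{Step 3 (assembling the bound).} Combining Steps~1 and~2,
\[
2^{-D_H^\eps(\rho^A\otimes\varrho^{X_1X_2}\,\|\,\sigma^A\otimes\pi^{X_1}\otimes\pi^{X_2})}=\Tr[\Pi(\sigma^A\otimes\pi^{X_1}\otimes\pi^{X_2})]\;\geq\;\frac1K\,\Tr[\bar\Pi^A\sigma^A]\;\geq\;\frac1K\,2^{-D_H^\eps(\rho^A\|\sigma^A)},
\]
i.e. $D_H^\eps(\rho^A\otimes\varrho^{X_1X_2}\,\|\,\sigma^A\otimes\pi^{X_1}\otimes\pi^{X_2})\leq D_H^\eps(\rho^A\|\sigma^A)+\log K$. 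Since $D_H^\eps$ is non-decreasing in the smoothing parameter and $\sqrt\eps\geq\eps$ for $\eps\in(0,1)$, and since $-\log(1-\sqrt\eps)\geq 0$, this implies the stated inequality.

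There is no real obstacle here; the only point needing care is the bound $\bar\Pi^A\leq\I^A$, which is precisely where the normalization $\Tr[\varrho^{X_1X_2}]=1$ enters, and the operator inequality of Step~1, which is where the specific maximally-correlated, uniform-marginal structure of $\varrho^{X_1X_2}$ is used — it is exactly this structure that yields a clean additive $\log K$ rather than something worse. I should also note that the argument above in fact proves the slightly sharper statement with smoothing $\eps$ (rather than $\sqrt\eps$) on the right-hand side and without the $-\log(1-\sqrt\eps)$ term; stating the weaker form above is harmless and makes the error parameters match the surrounding analysis in Section~\ref{sec:overviewfullthm}.
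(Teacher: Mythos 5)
Your proof is correct, and it takes a genuinely different route from the paper's. The paper pinches the optimal tester $\Pi_{\textsc{opt}}$ into the individual components $\Pi^A_{x_1,x_2}$, observes that the acceptance condition only controls the \emph{average} $\frac1K\sum_x\Tr[\Pi^A_{x,x}\rho^A]\geq 1-\eps$, and then applies a Markov argument to extract a subset $\good_X$ of size $(1-\sqrt\eps)K$ on which each $\Pi^A_{x,x}$ is individually a feasible tester for $D_H^{\sqrt\eps}(\rho^A\|\sigma^A)$; this is exactly where the $\eps\to\sqrt\eps$ degradation and the $-\log(1-\sqrt\eps)$ term enter. You instead fold all the diagonal components into the single operator $\bar\Pi^A=\Tr_{X_1X_2}[(\I\otimes\sqrt{\varrho})\,\Pi\,(\I\otimes\sqrt{\varrho})]=\frac1K\sum_x\Pi^A_{x,x}$, which is feasible at the \emph{original} smoothing parameter $\eps$ because the average acceptance probability is inherited without any union-bound or Markov loss; the only price paid is the factor $1/K$ from the operator inequality $\pi^{X_1}\otimes\pi^{X_2}\geq\frac1K\varrho^{X_1X_2}$. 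Your argument therefore yields the strictly sharper bound $D_H^{\eps}(\rho^A\otimes\varrho^{X_1X_2}\,\|\,\sigma^A\otimes\pi^{X_1}\otimes\pi^{X_2})\leq D_H^{\eps}(\rho^A\|\sigma^A)+\log K$, which implies the stated claim by monotonicity of $D_H^{\eps}$ in the smoothing parameter and $-\log(1-\sqrt\eps)\geq 0$, as you note. All the individual steps check out: $P^{X_1X_2}\leq\I$ gives Step~1; conjugation and partial trace preserve the operator order, and $\Tr[\varrho^{X_1X_2}]=1$ gives $\bar\Pi^A\leq\I^A$; and the cyclicity computations for $\Tr[\bar\Pi^A\rho^A]$ and $\Tr[\bar\Pi^A\sigma^A]$ are correct. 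If propagated into the computation of $D_{\textsc{final}}$ in Section~\ref{sec:overviewfullthm}, your version would even improve the error parameters in Theorem~\ref{thm:formalmain} (avoiding one square-root loss and the $\log(1-O(\eps^{1/4}))$ term).
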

\begin{proof}
Let $\Pi^{AX_1X_2}_{\textsc{opt}}$ be {an} optimal {tester for } $d:=D_H^{\eps}(\rho^A\otimes \varrho^{X_1X_2}~||~\sigma^A\otimes \pi^{X_1}\otimes \pi^{X_2})$.

\vspace{1mm}

That is,
\begin{align}
    \Tr\left[ \Pi^{AX_1X_2}_{\textsc{opt}} \left(\rho^A\otimes \varrho^{X_1X_2} \right) \right] &\geq 1 -\epsilon \\
    \Tr\left[ \Pi^{AX_1X_2}_{\textsc{opt}} \left(\sigma^A\otimes \pi^{X_1}\otimes \pi^{X_2} \right) \right] &\leq 2^{-d}
\end{align}

Define
\[
\Pi^{A}_{x_1,x_2}\coloneqq \left(\I^A\otimes \bra{x_1,x_2}^{X_1X_2}\right)\Pi_{\textsc{opt}}^{AX_1X_2}\left(\I^A\otimes \ket{x_1,x_2}^{X_1X_2}\right)  
\]
It is then easy to see that
\[
\begin{aligned}
1- \epsilon & \leq  \Tr\left[\Pi_{\textsc{opt}}^{AX_1X_2} ~ \rho^A\otimes \varrho^{X_1X_2}\right]\\
& = \Tr\left[\Pi_{\textsc{opt}}^{AX_1X_2} ~ \rho^A\otimes \sum\limits_{x}\frac{1}{K}\ketbra{x,x}^{X_1X_2}\right] \\
& = \sum_x \frac{1}{K} \Tr\left[\Pi_{\textsc{opt}}^{AX_1X_2} ~ \rho^A\otimes \ketbra{x,x}^{X_1X_2}\right] \\
&= \sum_x \frac{1}{K} \Tr\left[\left(\bra{x,x}^{X_1X_2} ~  \Pi_{\textsc{opt}}^{AX_1X_2} \ \ket{x,x}^{X_1X_2}\right)^A\rho^A\right] \\
&= \sum_x \frac{1}{K} \Tr\left[  \Pi_{x,x}^A\rho^A\right] \\
\end{aligned}
\]
We define a set $\good_X$ as follows:
\[ \good_X = \lbrace x \ : \ \Tr\left[\Pi^A_{x,x}~\rho^A\right] \geq 1-\sqrt{\eps}\rbrace.\]
A standard Markov argument then gives that
\[
\abs{\good_X}\geq (1-\sqrt{\eps})  K.
\]
\noindent Again, note that

\begin{align*}
\Tr\left[\Pi^{AX_1X_2}_{\textsc{opt}}\left(\sigma^A\otimes \pi^{X_1}\otimes \pi^{X_2}\right)\right] =  &\sum\limits_{x_1,x_2}  \frac{1}{K^2} \Tr\left[\Pi^{AX_1X_2}_{\textsc{opt}}\left(\sigma^A\otimes \ketbra{x_1}^{X_1} \otimes \ketbra{x_2}^{X_1} \right)\right] \\
 = & \sum\limits_{x_1,x_2} \frac{1}{K^2} \Tr_{A} \left[ \Tr_{X_1 X_2} \Pi^{AX_1X_2}_{\textsc{opt}}\left(\sigma^A\otimes \ketbra{x_1}^{X_1} \otimes \ketbra{x_2}^{X_2} \right) \right] \\
 = & \sum_{x_1, x_2} \frac{1}{K^2} \Tr_{A} \left[ \bra{x_1 x_2}^{X_1 X_2} ~  \Pi^{AX_1X_2}_{\textsc{opt}} ~  \ket{x_1 x_2}^{X_1 X_2} \sigma^{A} \right]
\\  = & \sum_{x_1, x_2} \frac{1}{K^2} \Tr_{A} \left[ \Pi^{A}_{x_1 x_2} ~   \sigma^{A} \right]
\\
\geq  &~ \sum_{x\in \good_X}\frac{1}{K^2}\Tr\left[\Pi^A_{x,x}~\sigma^A\right] \\
\geq &~ \sum_{x\in \good_X} \frac{1}{K^2} 2^{-D_H^{\sqrt{\eps}}(\rho^A~||~\sigma^A)} \\
\geq &~ \frac{1-\sqrt{\eps}}{K}\cdot 2^{-D_H^{\sqrt{\eps}}(\rho^A~||~\sigma^A)} \qedhere
\end{align*}  
\end{proof}
 \end{document}